\let\csname equation*\endcsname\relax
\let\csname endequation*\endcsname\relax
\tikzset{snake it/.style={decorate, decoration=snake}}
\newtcolorbox{myredbox}[1]{colback=red!5!white,colframe=red!75!black,fonttitle=\bfseries,title=#1}
\def\@mkboth#1#2{}
\newlength\appendixwidth
\preto\appendix{\addtocontents{toc}{\protect\patchl@section}}
\newcommand{\patchl@section}{%
  \settowidth{\appendixwidth}{\textbf{Appendix }}%
  \addtolength{\appendixwidth}{1.5em}%
  \patchcmd{\l@section}{1.5em}{\appendixwidth}{}{\ddt}%
}
\newrobustcmd{\fixappendix}{%
  \patchcmd{\l@section}{1.5em}{7em}{}{}%
  \patchcmd{\l@subsection}{2.3em}{7em}{}{}%
}
\tikzset{
    >=stealth',
    punkt/.style={
           rectangle,
           rounded corners,
           draw=black, very thick,
           text width=6.5em,
           minimum height=2em,
           text centered},
    pil/.style={
           -{Latex[length=3mm,width=10mm]},
           thick,
           shorten <=2pt,
           shorten >=2pt,},
  on each segment/.style={
    decorate,
    decoration={
      show path construction,
      moveto code={},
      lineto code={
        \path [#1]
        (\tikzinputsegmentfirst) -- (\tikzinputsegmentlast);
      },
      curveto code={
        \path [#1] (\tikzinputsegmentfirst)
        .. controls
        (\tikzinputsegmentsupporta) and (\tikzinputsegmentsupportb)
        ..
        (\tikzinputsegmentlast);
      },
      closepath code={
        \path [#1]
        (\tikzinputsegmentfirst) -- (\tikzinputsegmentlast);
      },
    },
  },
  mid arrow/.style={postaction={decorate,decoration={
        markings,
        mark=at position .5 with {\arrow[#1]{stealth'}}
      }}}
}
\let\oldmarginpar\marginpar
\renewcommand\marginpar[1]{\-\oldmarginpar[\raggedleft\footnotesize #1]%
{\raggedright\footnotesize #1}}
 \newcommand{\ket}[1]{|#1\rangle}
\newtheorem{theorem}{Theorem}
\newtheorem{lemma}{Lemma}
\newtheorem{proposition}{Proposition}
\newtheorem{protocol}{Protocol}
\newenvironment{myprotocol}[1]
  {\innercustomthm}
  {\endinnercustomthm}
\theoremstyle{definition}
\newtheorem{definition}{Definition}
\definecolor{mypurple2}{RGB}{170,0,255}
\begin{document} 

\title{Distributing bipartite quantum systems under timing constraints}

\author[a]{Kfir Dolev}
\address{Stanford University}
\ead{dolev@stanford.edu}

\author[b]{Alex May}
\address{The University of British Columbia}
\ead{may@phas.ubc.ca}

\author[c]{Kianna Wan}
\address{Stanford University}
\ead{kianna@stanford.edu}

\begin{abstract}
In many quantum information processing protocols, entangled states shared among parties are an important resource. In this article, we study how bipartite states may be distributed in the context of a quantum network limited by timing constraints. We explore various tasks that plausibly arise in this context, and characterize the achievability of several of these in settings where only one-way communication is allowed. We provide partial results in the case where two-way communication is allowed. This builds on earlier work on summoning single and bipartite systems. 
\end{abstract}

\maketitle
\flushbottom

\tableofcontents
\pagestyle{plain}

\section{Introduction}

Shared bipartite quantum systems, and in particular shared entangled states, are among the most basic resources in quantum information theory. Several developments have demonstrated the usefulness of entangled states in completing tasks that are constrained by the causal structure of spacetime. Two cases of interest are summoning (and its variants)~\cite{kent2012quantum,kent2013no,hayden2016summoning} and position-based cryptography~\cite{kent2011quantum,buhrman2014position}.
Fig.~\ref{fig:singlesystemsummoning} describes an instance of a summoning task, which highlights an interesting use of a maximally entangled state. Note that this entangled state must be distributed between two \emph{spacetime} regions---if the entanglement between the relevant \emph{spatial} regions is established too late, the summoning task cannot be completed. Similarly, entangled states must be distributed between two spacetime regions in order to perform certain position-based cryptographic tasks~\cite{buhrman2014position,beigi2011simplified,tomamichel2013monogamy,dolev2019constraining}.

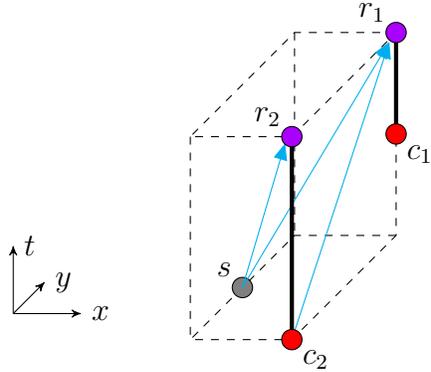
\begin{figure}
\centering
\begin{tikzpicture}[scale=0.9]

	\coordinate (PT) at (0,0,0);
	
	\coordinate (CL) at (1.5,1.5,-2);
	\coordinate (CR) at (1.5,0,2);
	
	\coordinate (QL) at (1.5,3,-2);
	\coordinate (QR) at (1.5,3,2);
	
	\draw[dashed] (1.5,3,2) -- (1.5,3,-2) -- (1.5,0,-2) -- (1.5,0,2);
	\draw[dashed] (1.5,0,-2) -- (0,0,-2) -- (0,0,2) -- (1.5,0,2);
	\draw[dashed] (1.5,3,-2) -- (0,3,-2) -- (0,3,2) -- (1.5,3,2);
	\draw[dashed] (0,3,-2) -- (0,0,-2);
	\draw[dashed] (0,3,2) -- (0,0,2);
	
	\draw[fill=gray] (PT) circle (0.15);
	\node [above left] at (PT) {$s$};
	
	\node [below right] at (CL) {$c_1$};
	\node [below right] at (CR) {$c_2$};
	
	\node [above left] at (QL) {$r_1$};
	\node [above left] at (QR) {$r_2$};
    
    \begin{scope}[shift={($(0,0)$)}]       
    \draw[->] (-3,0,1) -- (-3,0,-0.2);
    \node [right] at (-3,0,-0.2) {$y$};
    \draw[->] (-3,0,1) -- (-2,0,1);
    \node [right] at (-2,0,1) {$x$};
    \draw[->] (-3,0,1) -- (-3,1,1);
    \node [right] at (-3,1,1) {$t$};
    \end{scope}
    
	\draw[cyan][-triangle 45] (PT) -- (1.4,2.86,-2);
	\draw[cyan][-triangle 45] (PT) -- (1.4,2.9,2);
	\draw[cyan][-triangle 45] (CR) -- (1.4,2.86,-2);
    
	\draw[ultra thick] (CL) -- (QL);
	\draw[ultra thick] (CR) -- (QR);
    
	\draw[fill=mypurple2]  (QL) circle (0.15);
	\draw[fill=mypurple2]  (QR) circle (0.15);
    
	\draw[fill=red] (CL) circle (0.15);
	\draw[fill=red] (CR) circle (0.15);
\end{tikzpicture}
\caption{An example of a single-system summoning task [cf.~Definition~\ref{def:singlesystemsummoning}] for which pre-distributed entanglement is necessary. Blue arrows indicate timelike or lightlike curves between spacetime points. At $s$, an unknown quantum state $\ket{\psi}$ is given to Alice by Bob. At each $c_j$, Alice receives a single bit $b_j$, with the promise that $b_{j^*} = 1$ for exactly one $j^* \in \{1,2\}$. Alice's task is to then return $\ket{\psi}$ at $r_{j^{*}}$. This task can be completed with probability $1$, for any choice of $j^* \in \{1,2\}$ by Bob, if entanglement is shared between $s$ and $c_2$ beforehand (see~\cite{hayden2016summoning} for details), but is impossible in the absence of entanglement.\label{fig:singlesystemsummoning}}
\end{figure}

More broadly, there has recently been interest in the development of a distributed quantum network, or ``quantum internet''~\cite{van2014quantum}. In this context, one may be interested in establishing entanglement between distant nodes. Besides the usual restrictions imposed by relativistic causality, the network may have additional timing constraints due to, for instance, the timescales of quantum memories or the properties of communication links. We can then ask between which nodes, and at what times, it is possible to establish entanglement under such constraints.

Motivated by these examples, we investigate how bipartite quantum systems can move through spacetime. We work within the \emph{quantum tasks} framework introduced by Kent~\cite{kent2012quantum}. In this framework, quantum and/or classical systems are received as inputs, processed, and returned as outputs, with the inputs and outputs occurring at various points in spacetime. This framework incorporates summoning and position-based cryptography, as well as other practical tasks~\cite{kent2012unconditionally,kent2011unconditionally, kent1999coin, pitalua2016spacetime,PhysRevA.100.012302, kent2011location}. In addition to these applications, the quantum tasks framework serves as an operational way of recording principles of quantum theory and its interaction with relativity~\cite{kent2012quantum}. It has also been adapted to the holographic setting and used to study the dynamics of quantum information in the context of AdS/CFT \cite{may2019quantum,may2019holographic}.

To characterize the distribution of bipartite quantum systems within the quantum tasks framework, we define several variants on the following task. We consider a set of call-return pairs $\{(c_j,r_j)\}_{j}$, where the $c_j$ and $r_j$ are spacetime points. Each \emph{call point} $c_j$ is in the casual past of its corresponding \emph{return point} $r_j$. An agency\footnote{Throughout this work, we use Alice and Bob to refer to \emph{agencies}. An agency is a collection of collaborating \emph{agents}. The individual agents may move through spacetime and interact with other agents from their own agency or from other agencies.} Alice holds the $A$ and $B$ subsystems of some pure state $\ket{\Psi}_{ABR}$. 
At each call point $c_j$, a second agency Bob will give Alice a classical message $b_j$, with the promise that $b_{j^*} = b_{k^*} = 1$ for exactly two indices $j^*,k^*$. Alice's objective is to return $A$ at the return point $r_j^{*}$ and $B$ at $r_{k^*}$, or vice versa. We say that the task specified by $\{(c_j, r_j)\}_j$ is \emph{achievable} if there exists a strategy that allows Alice to succeed with probability $1$ for \emph{any} choice of $j^{*}, k^{*}$ by Bob. The two basic variants on this task we consider differ in whether $\ket{\Psi}_{ABR}$ is known or unknown to Alice---we refer to the case where $\ket{\Psi}_{ABR}$ is unknown as \emph{two-system summoning}, and the case where it is known as \emph{entanglement summoning}.

At a practical level, the input at a call point $c_j$ may represent a request from a node in a quantum network for some quantum system, and the return point $r_j$ the spacetime location at which that system is needed. The finite time separation between $c_j$ and $r_j$ may be imposed, for instance, by the lifetime of a quantum memory---using $A$ as part of a larger quantum computation may only be possible if some other quantum system is still coherent by the time $A$ is received. 

The most trivial two-system or entanglement summoning task consists of two call-return pairs, as shown in Fig.~\ref{fig:simple}. This task is achievable as long as $s$ is in the causal past of both return points, even if the two call-return pairs are causally disconnected (i.e., $c_1$ cannot send a signal to $r_2$, and vice versa). On the other hand, in the case of three call-return pairs, as in Fig.~\ref{fig:threesimple}, the corresponding task may or may not be achievable. If each call-return pair is causally disconnected from every other pair, then in the case of two-system summoning, Alice is prevented by no-cloning from completing the task (see Lemma~\ref{lemma:noTwoOutGraphs}), while in entanglement summoning, the monogamy of entanglement makes the task unachievable (see Lemma~\ref{lemma:monogamy}). However, if the three call-return pairs are causally connected, the task may be achievable. Thus, these summoning tasks operationally explore how no-cloning or the monogamy of entanglement interacts with causality. 

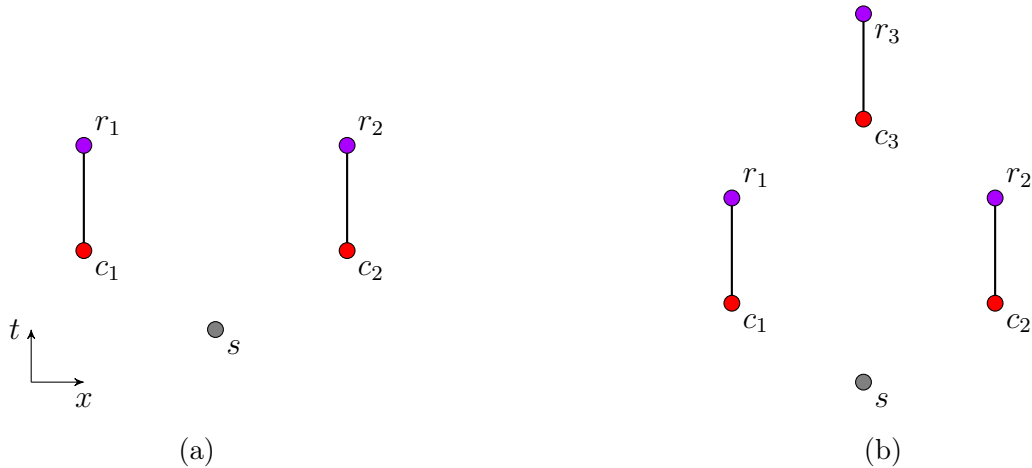
\begin{figure}
\centering
\begin{subfigure}[b]{.45\textwidth}
\centering
\begin{tikzpicture}[scale=0.7]

\coordinate (P) at (0cm, 1cm);

\coordinate (C2) at (2.5cm, 2.5cm);
\coordinate (Q2) at (2.5cm, 4.5cm);
\coordinate (L2) at (3.5cm,3.5cm);
\coordinate (R2) at (1.5cm,3.5cm);

\coordinate (C1) at (-2.5cm,2.5cm);
\coordinate (Q1) at (-2.5cm,4.5cm);
\coordinate (L1) at (-3.5cm,3.5cm);
\coordinate (R1) at (-1.5cm,3.5cm);

\draw[thick] (C1) -- (Q1); 	 
\draw[thick] (C2) -- (Q2); 

\draw[fill=gray] (P) circle (0.15cm);
\node[below right] at (P) {$s$};

\draw[fill=red] (C1) circle (0.15cm);
\node [below right] at (C1) {$c_1$};
\draw[fill=red] (C2) circle (0.15cm);
\node [below right] at (C2) {$c_2$};
	
\draw[fill=mypurple2] (Q1) circle (0.15cm);
\node [above right] at (Q1) {$r_1$};
\draw[fill=mypurple2] (Q2) circle (0.15cm);
\node [above right] at (Q2) {$r_2$};

\draw[->] (-3.5,0) -> (-3.5,1);
\draw[->] (-3.5,0) -> (-2.5,0);

\node [left] at (-3.5,1) {$t$};
\node [below] at (-2.5,0) {$x$};

\node at (0,8) { };

\end{tikzpicture}
\caption{\label{fig:simple}}
\end{subfigure}
\hfill
\begin{subfigure}[b]{.45\textwidth}
\centering
\begin{tikzpicture}[scale=0.7]
  
\coordinate (P) at (0cm, 1cm);

\coordinate (C2) at (2.5cm, 2.5cm);
\coordinate (Q2) at (2.5cm, 4.5cm);
\coordinate (L2) at (3.5cm,3.5cm);
\coordinate (R2) at (1.5cm,3.5cm);

\coordinate (C1) at (-2.5cm,2.5cm);
\coordinate (Q1) at (-2.5cm,4.5cm);
\coordinate (L1) at (-3.5cm,3.5cm);
\coordinate (R1) at (-1.5cm,3.5cm);

\draw[thick] (C1) -- (Q1); 	 
\draw[thick] (C2) -- (Q2); 

\draw[fill=gray] (P) circle (0.15cm);
\node[below right] at (P) {$s$};

\draw[fill=red] (C1) circle (0.15cm);
\node [below right] at (C1) {$c_1$};
\draw[fill=red] (C2) circle (0.15cm);
\node [below right] at (C2) {$c_2$};
	
\draw[fill=mypurple2] (Q1) circle (0.15cm);
\node [above right] at (Q1) {$r_1$};
\draw[fill=mypurple2] (Q2) circle (0.15cm);
\node [above right] at (Q2) {$r_2$};

\draw[thick] (0,6) -- (0,8);
\draw[fill=red] (0,6) circle (0.15cm);
\node [below right] at (0,6) {$c_3$};
\draw[fill=mypurple2] (0,8) circle (0.15cm);
\node [below right] at (0,8) {$r_3$};

\end{tikzpicture}
\caption{\label{fig:threesimple}}
\end{subfigure}
\caption{(a) A two-system summoning task involving two call-return pairs. At $s$, the $A$ and $B$ subsystems of a state $\ket{\Psi}_{ABR}$ are given to Alice. Alice's goal is to bring $A$ to $r_1$ and $B$ to $r_2$, or vice versa. This task is clearly achievable since $r_1$ and $r_2$ are in the casual future of $s$. (b) An achievable two-system summoning task on three call-return pairs. This task can be completed as follows. The $A$ subsystem is sent to $c_1$, and the call information $b_1$ received at $c_1$ is checked. If $b_1=1$, $A$ is brought to $r_1$, whereas if $b_0=0$, $A$ is forwarded to $r_3$. Similarly, the $B$ subsystem is first sent to $c_2$, and then to $r_2$ or $r_3$ based on the value of $b_2$.}
\end{figure}

In~\cite{adlam2018relativistic}, Adlam considered some of the tasks we define here and provided necessary conditions for their completion. We build on this work by constructing explicit protocols for completing a large class of tasks. In particular, when call-return pairs $(c_j,r_j)$, $(c_k,r_k)$ have $c_j\rightarrow r_k$ or $c_k\rightarrow r_j$ but never both, we provide a complete characterization of the tasks defined by Adlam.\footnote{These correspond to the ``unassisted'' variants discussed below.} We do not fully characterize tasks that have $c_j\rightarrow r_k$ and $c_k\rightarrow r_j$, but provide necessary conditions. We also point out that some of Adlam's necessary conditions are incorrect, and provide an explicit counterexample by constructing a protocol for a task that violates these conditions. 

This paper is organized as follows. In Section~\ref{sec:summoningPreliminaries}, we introduce some definitions and notation for describing summoning tasks. In Section~\ref{sec:twosystems}, we consider two-system summoning, which involves responding to requests for a shares of an unknown state. In Section~\ref{sec:entanglementSummoning}, we consider {entanglement summoning}, where the state to be distributed is a known maximally entangled state. Sections \ref{sec:twosystems} and \ref{sec:entanglementSummoning} focus on the case of causal connections that run one way ($c_i\rightarrow r_j$ or $c_j\rightarrow r_i$); in Section \ref{sec:bidirected} we discuss the case where the causal connections may be bidirectional. We conclude in Section~\ref{sec:discussion}.

\section{Preliminaries}\label{sec:summoningPreliminaries}

To motivate the various tasks studied in this paper, consider the following scenario. A number of labs $\{L_j\}_j$ are widely distributed in space. These labs hold quantum computers and may send quantum systems among themselves. In the course of executing some protocol, two labs $L_{j^*}$ and $L_{k^*}$ realize that they need to share the $A$ and $B$ subsystems of some state $\ket{\Psi}_{ABR}$. In the most general setting, $L_{j^*}$ and $L_{k^*}$ may or may not know the state $\ket{\Psi}_{ABR}$, may announce their requests for shares of the state at differing times, and may require their shares before differing deadlines. 

These quantum network scenarios can be abstracted as summoning tasks. We define several variants in this paper to capture a large range of plausible scenarios. In particular, \emph{two-system summoning} considers the case where the requisite quantum state is not known in advance, while \emph{entanglement summoning} considers the case of known resource states. We discuss two-system summoning in Section~\ref{sec:twosystems} and entanglement summoning in Section~\ref{sec:entanglementSummoning}. In this section, we set the stage by introducing terminology and notational conventions that will be used throughout the paper. We also recall the characterization of \emph{single-system summoning}~\cite{kent2013no,hayden2016summoning}, which will be used as a subroutine in some of our two-system and entanglement summoning protocols.

To formalize the notion of labs requesting shares of a quantum state, we introduce call points $c_j$ and return points $r_j$. We associate call and return points into pairs $(c_j,r_j)$, which can be informally thought of as representing a lab $L_j$: the point $c_j$ is the spacetime location at which $L_j$ decides that it needs a share of the state, while $r_j$ is the spacetime location at which the share is needed. Thus, in defining our summoning tasks, we have some (classical) input given at $c_j$ specifying whether a share of the quantum state should be brought to $r_j$. If a share is requested at $r_j$, we will say that a \emph{call} is made to $(c_j, r_j)$, or that $(c_j, r_j)$ is \emph{called} or receives a call.\footnote{We assume that $r_j$ is in the casual future of $c_j$ for every $j$, since a lab would usually decide it needs a state at some point in the future.} In cases where the state is unknown, we must also consider a starting spacetime location $s$ for the state.

The information received at a call point $c_j$ can be forwarded to other spacetime locations. The set of points that can be reached from $c_j$ will be limited, however, either by the finite speed of light or by more practical considerations.\footnote{For instance, in some settings, the relevant speed may instead be the speed of light in a fibre optic cable.} Formally, for two spacetime points $p$ and $q$, we write $p \prec q$ if it is possible to travel from $p$ to $q$. We can then define the \textit{causal future} $J^+(c_j)$ as $J^{+}(c_j) \equiv \{p: c_j \prec p\}$. Similarly, a system may reach the return point $r_j$ from anywhere in its \textit{causal past} $J^-(r_j) \equiv \{p: p \prec r_j\}$ of $r_j$. We refer to the overlap of $J^+(c_j)$ and $J^-(r_j)$ as the \textit{causal diamond} $D_j$. Every point in this region both has access to the classical input at $c_j$ and is able to send information (classical and quantum) to $r_j$.

\begin{definition} \label{def:diamond}
For a pair $(c_j,r_j)$ of call and return points, we define the \emph{causal diamond}
\[ D_j \equiv J^+(c_j) \cap J^-(r_j) \] as the intersection of the causal future of $c_j$ with the causal past of $r_j$.
\end{definition}

Our main technical results on summoning concern the achievability of various classes of summoning tasks. Each summoning task is specified by a set $\{(c_j,r_j)\}_j$ of call-return pairs---equivalently, a set $\{D_j\}_j$ of causal diamonds---and a start point when applicable. To characterize the causal relationships among the diamonds that are relevant to the achievability of summoning tasks, we introduce the following notation.
\begin{definition} \label{def:causallyconnected} For two distinct causal diamonds $D_j$ and $D_k$, we write
\[ D_j \rightarrow D_k \]
iff there exist points $p_j\in D_j$ and $p_k\in D_k$ such that $p_j \prec p_k$. Otherwise, we write $D_j \not\rightarrow D_k$.
\end{definition}
It is straightforward to show that $D_j \rightarrow D_k$ (for $D_j \neq D_k$) iff $c_j\prec r_k$. Consequently, our results could equivalently be phrased solely in terms of call and return points, but the introduction of causal diamonds simplifies notation. If $D_j$ and $D_k$ satisfy both $D_j \to D_k$ and $D_k \to D_j$, we will sometimes write $D_j \leftrightarrow D_k$.

In some cases, the only relevant information is whether at least one of $D_j \to D_k$ and $D_k \to D_j$ is true, so we make the following definition.
\begin{definition}
Two causal diamonds $D_j$ and $D_k$ are said to be \emph{causally connected} if $D_j \rightarrow D_k$ or $D_k \rightarrow D_j$, or both. We write $D_j \sim D_k$ in this case. If $D_j$ and $D_k$ are not causally connected, we write $D_j \not\sim D_k$.
\end{definition}

In earlier work, the scenario in which a call is made to just one of the diamonds, requesting a single share of an unknown quantum state, has been fully characterized~\cite{hayden2016summoning}. We refer to this as \emph{single-system summoning}, and define it below. This definition is adapted from \cite{hayden2019localizing}.
\begin{definition}\label{def:singlesystemsummoning}
A \emph{single-system summoning task} is specified by a start point $s$ and a set of causal diamonds $\{D_j\}_{j=1}^n$, each of which is defined by a call point $c_j$ and return point $r_j$ (with $c_j \prec r_j$). The task involves two agencies, Alice and Bob, which perform the following steps.
\begin{enumerate}[1.]
    \item Bob prepares a quantum state $\ket{\Psi}_{AR}$, and gives Alice the $A$ subsystem at $s$.
    \item At each $c_j$, Bob gives Alice a classical bit $b_j$. Alice is promised that $b_{j^*} = 1$ for exactly one $j^* \in\{1,\dots, n\}$, but does not know the value of $j^*$ in advance.
\end{enumerate}
The task is \emph{achievable} iff for any state $\ket{\Psi}_{AR}$ and any choice of $j^* \in \{1,\dots, n\}$ by Bob, Alice can return subsystem $A$ at $r_{j^*}$ with success probability $1$.
\end{definition}
Single-system summoning is neatly characterized by the following theorem, proven in~\cite{hayden2016summoning}. 
\begin{theorem} \label{thm:single}
A single-system summoning task is achievable if and only if
\begin{enumerate*}
    \item the return point $r_j$ of every diamond $D_j$ is in the causal future of the start point $s$, and
    \item every pair of diamonds  $D_j $ and $D_k$ is causally connected.
\end{enumerate*}
\end{theorem}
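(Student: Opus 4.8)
The plan is to prove the two directions separately, treating necessity first and sufficiency second, since necessity is essentially a causal/no-cloning argument while sufficiency requires an explicit protocol.

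\textbf{Necessity.} For condition (i): if some return point $r_j$ is not in the causal future of $s$, then the agents near $r_j$ can never have received the $A$ subsystem (which originates at $s$), so on a call to $D_j$ alone Alice cannot produce $A$ at $r_j$ — a contradiction with achievability. For condition (ii): suppose $D_j \not\sim D_k$, i.e.\ neither $c_j \prec r_k$ nor $c_k \prec r_j$. I would argue by a cloning/no-signalling contradiction: consider the two scenarios in which Bob calls only $D_j$ versus only $D_k$. Since the call bits $b_j$ and $b_k$ are received at $c_j$ and $c_k$ respectively, and $r_j \not\in J^+(c_k)$ while $r_k \not\in J^+(c_j)$, the output Alice produces at $r_j$ must be computable without knowledge of $b_k$, and the output at $r_k$ without knowledge of $b_j$. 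Now feed in the ``impossible'' pair of calls to \emph{both} $D_j$ and $D_k$: achievability on the single-call scenarios forces Alice to be able to recover $\ket{\psi}$ at $r_j$ \emph{and} at $r_k$ from data in causally disjoint regions, which clones an unknown state. More carefully, one runs the standard argument (as in \cite{hayden2016summoning}): the reduced state at $r_j$ must have fidelity $1$ with $\ket{\psi}$ and likewise at $r_k$, and because the two regions are spacelike-plus-unreachable one can realize both recoveries on the same input state, violating monogamy of fidelity / no-cloning. I expect this step to be the main obstacle, in that one must be careful to phrase "cannot signal" correctly given that $\prec$ is a possibly nonstandard (non-lightcone) relation, but the argument only uses that $\prec$ is the physical "can send information" relation, so it goes through.

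\textbf{Sufficiency.} Assume (i) and (ii). The idea is to use the characterization and protocol of single-system summoning (Theorem~\ref{thm:single}) more or less as a black box, but first I need to find a total order on the diamonds compatible with the causal structure. Since every pair $D_j, D_k$ is causally connected, the relation "$D_j$ can send to $D_k$" is a tournament on $\{D_1,\dots,D_n\}$; I would invoke the classical fact that every finite tournament has a Hamiltonian path to order the diamonds as $D_{\pi(1)}, D_{\pi(2)}, \dots, D_{\pi(n)}$ with $c_{\pi(i)} \prec r_{\pi(i+1)}$ for each $i$ (being careful about the $\leftrightarrow$ case — if both directions hold one simply picks one). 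The protocol: Alice holds $A$ at $s$ (which reaches every $r_j$ by (i)). She sends $A$ along the Hamiltonian path — carry $A$ to $c_{\pi(1)}$, read $b_{\pi(1)}$; if it is $1$, route $A$ to $r_{\pi(1)}$ and stop; if it is $0$, forward $A$ to $c_{\pi(2)}$ (possible since $c_{\pi(1)} \prec r_{\pi(2)}$, and one can choose a point in $D_{\pi(2)}$ reachable from $c_{\pi(1)}$ that then reaches $c_{\pi(2)}$'s relevant data — here I'd lean on the lemma $D_j \to D_k \iff c_j \prec r_k$ and the precise "passing through a diamond" construction from \cite{hayden2016summoning}), and continue. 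Exactly one $b_{\pi(i)}$ is $1$, so $A$ is delivered to the correct $r_{j^*}$ with probability $1$, with no cloning ever required.

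\textbf{Remark on citation.} Since the statement is attributed to \cite{hayden2016summoning}, the cleanest writeup simply cites that paper for both directions and sketches the two ingredients above (no-cloning for necessity, tournament Hamiltonian path plus the single-system routing protocol for sufficiency), noting any minor adaptation needed because here $\prec$ need not be the relativistic light-cone relation.
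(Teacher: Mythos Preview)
The paper does not prove Theorem~\ref{thm:single} at all; it simply states the result and cites~\cite{hayden2016summoning}. So your closing remark --- that the cleanest writeup just cites that paper --- is exactly what the authors do. Your necessity sketch is fine and matches the standard no-cloning argument.

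Your sufficiency sketch, however, has a genuine gap. The Hamiltonian-path routing protocol you describe does \emph{not} work when the tournament is non-transitive. Take the $3$-cycle $D_1 \to D_2 \to D_3 \to D_1$ (Fig.~\ref{fig:threecyclegraph}), with Hamiltonian path $D_1 \to D_2 \to D_3$. If the call is to $D_3$, then $D_1$ must forward $A$ (or teleport it and broadcast the Pauli data) toward $D_3$. But $D_1 \not\to D_3$: we have $c_1 \not\prec r_3$, so neither the physical system nor the teleportation correction bits from $c_1$ can reach $r_3$. The edge $D_1 \to D_2$ only guarantees $c_1 \prec r_2$, and from any point in $D_2$ reachable from $c_1$ there is no guarantee of reaching $r_3$ either (concretely: put $c_1,c_2$ at $t=0$, $r_2$ at $t=2$, $r_3$ at $t=1$, with the only point of $D_2$ in $J^+(c_1)$ being $r_2$). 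The ``passing through a diamond'' construction you allude to does not appear in~\cite{hayden2016summoning} and cannot be made to work in general.

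The actual protocol in~\cite{hayden2016summoning} is qualitatively different: it encodes $A$ into a quantum error-correcting code with one share per \emph{edge} of the tournament, places each share so that it can be routed along its edge to whichever endpoint is called, and designs the code so that the shares on all edges incident to any single vertex suffice to recover $A$. This avoids the need for any long chain of forwardings. If you want to sketch sufficiency rather than just cite, you should outline this code-based construction (or the continuous-variable/teleportation variants in~\cite{hayden2016spacetime,wu2018efficient}), not the Hamiltonian-path idea.
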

We will use single-system summoning as a subroutine in some of our protocols for two-system and entanglement summoning. Explicit protocols for completing single-system summoning tasks are given in~\cite{hayden2016summoning,hayden2016spacetime,wu2018efficient}. In \cite{hayden2016summoning,hayden2019localizing}, it was also argued that single-system summoning can be interpreted in terms of the localization of quantum information to sets of spacetime regions. It is natural to ask if a similar conclusion holds in the context of two-system summoning. We discuss this in \ref{sec:localizingentanglement}.

In the following sections, we will generally assume that $s$ is in the distant past of all of the diamonds---more precisely, that $s \prec c_j$ for all call points $c_j$. However, any protocol that completes a summoning task for $s$ in the distant past can be easily modified into a protocol that requires only $s \prec r_j$ for all return points $r_j$ (as in Condition~(i) of Theorem~\ref{thm:single}). We describe how to do this in the case of single-system summoning; the same strategy applies to the two-system summoning tasks we consider later. Suppose that the input system $A$ is located at a point $s$ such that $s \prec r_j$ for all $r_j$. Prepare a Bell state $\ket{\Phi}_{EE'}$ at some point $p$ in the distant past, and execute the original summoning protocol (which works for systems starting in the distant past) with input $E$ and start point $p$. Bring $E'$ to $s$, measure $AE'$ in the Bell basis, and broadcast the measurement outcome into the causal future of $s$. At the return point of the called diamond, the original summoning protocol produces $A$, up to a Pauli correction. Since every $r_j$ is in the causal future of $s$, it is then possible to apply the appropriate Pauli correction and return $A$. 

Note that provided that $s$ is in the causal past of all $r_j$, the achievability of a single-system summoning task depends only on the diamonds being causally connected---the directionality of the connections is irrelevant. 
In contrast, for the two-system summoning tasks discussed in the subsequent sections, the directionality of the causal connections between diamonds is often important. Hence, given the set of causal diamonds specifying a summoning task, it will be useful to summarize the relationships among the diamonds using a directed graph, which we define below.
\begin{definition} \label{def:causalgraph}
For a set of causal diamonds $\mathcal{D} = \{D_1, \dots, D_n\}$, the corresponding \textit{causal graph} $G_{\mathcal{D}} = (V,E)$ has vertex set $V = \mathcal{D}$, with $(D_j, D_k) \in E$ iff $D_j \to D_k$.
\end{definition}
If $D_j \leftrightarrow D_k$, then both $(D_j, D_k)$ and $(D_k,D_j)$ are edges in $G_{\mathcal{D}}$, in which case we will collectively refer to $(D_j,D_k)$ and $(D_k,D_j)$ as a \textit{bidirected edge}. An \textit{oriented graph} is a directed graph with no bidirected edges, i.e., at most one of $(D_j, D_k)$ and $(D_k, D_j)$ may be an edge. 

In the following, we will simply refer to the causal graph corresponding to the set of diamonds specifying a summoning task as the causal graph corresponding to that task. Since different sets of diamonds may give rise the same causal graph, different tasks may have the same corresponding causal graph.

As an example, Condition~(ii) of Theorem~\ref{thm:single} can be reformulated in terms of a task's corresponding causal graph as follows. For every pair of vertices $D_j$ and $D_k$, at least one of $(D_j, D_k)$ and $(D_k, D_j)$ is an edge in the causal graph. In other words, the causal graph contains a \emph{tournament} (an oriented graph in which every pair of vertices shares an edge) as a spanning subgraph.\footnote{A spanning subgraph of a graph $G$ is a subgraph of $G$ that includes all of the vertices of $G$.}

Fig.~\ref{fig:spacetimeandCausalgraph} shows an instance of a task embedded in spacetime and its corresponding causal graph. Note that the causal graph does not capture all of the information about the placement of call and return points. In particular, for two different arrangements of call and return points that have the same causal graph, there may exist an intermediate point $g$ such that $c_i,c_j \prec g \prec r_k,r_l$ for one arrangement, but not for the other. However, none of the protocols or necessary conditions presented in this article rely on any causal features beyond those captured by the causal graph. This is consistent with no-go theorems for position-based cryptography \cite{kent2011quantum,buhrman2014position} and with a more general result in \cite{dolev2019constraining}. 

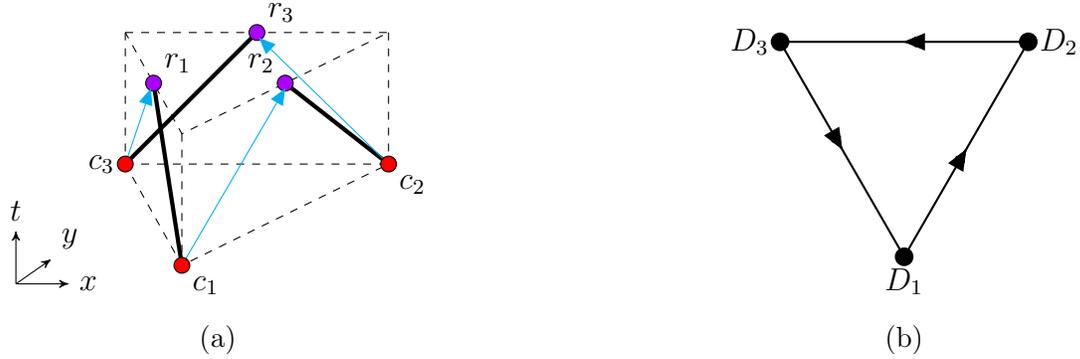
\begin{figure}
\centering
\begin{subfigure}[b]{.45\textwidth}
\centering
\begin{tikzpicture}[scale=0.7]

\coordinate (PS) at (2.5,-3.5,1.5);

	\coordinate (Ca) at (0,0,0);
	\coordinate (Cb) at (5,0,0);
	\coordinate (Cc) at (3,0,5);
	
	\coordinate (Qa) at (2.5,2.5,0);
	\coordinate (Qb) at (4, 2.5, 2.5);
	\coordinate (Qc) at (1.5, 2.5, 2.5);
	
	\draw[dashed] (Ca) -- (Cb) -- (Cc) -- (Ca);
	\draw[dashed] (0,2.5,0) -- (5,2.5,0) -- (3,2.5,5) -- (0,2.5,0);
	\draw[dashed] (0,2.5,0) -- (0,0,0);
	\draw[dashed] (5,2.5,0) -- (5,0,0);
	\draw[dashed] (3,2.5,5) -- (3,0,5);
	
	\draw[ultra thick] (Ca) -- (Qa);
	\draw[ultra thick] (Cb) -- (Qb);
	\draw[ultra thick] (Cc) -- (Qc);
	
	\draw[fill=mypurple2] (Qa) circle (0.15);
	\draw[fill=mypurple2]  (Qb) circle (0.15);
	\draw[fill=mypurple2]  (Qc) circle (0.15);
	
	\node [above right] at (Qa) {$r_3$};
	\node [above left] at (Qb) {$r_2$};
	\node [above right] at (Qc) {$r_1$};
	
	\draw[-triangle 45][cyan] (Ca) -- (1.45, 2.4, 2.5);
	\draw[-triangle 45][cyan] (Cb) -- (2.54,2.4,0);
	\draw[-triangle 45][cyan] (Cc) -- (4, 2.4, 2.5);
	
	\draw[fill=red] (Ca) circle (0.15);
	\draw[fill=red] (Cb) circle (0.15);
	\draw[fill=red] (Cc) circle (0.15);
	
	\node [left] at (Ca) {$c_3$};
	\node [below right] at (Cb) {$c_2$};
	\node [below right] at (Cc) {$c_1$};
	
	\draw[->] (-1.5,-1.7,1.5) -> (-0.5,-1.7,1.5);
	\draw[->] (-1.5,-1.7,1.5) -> (-1.3,-1.7,0.3);
	\draw[->] (-1.5,-1.7,1.5) -> (-1.5,-0.7,1.5);
	
	\node[right] at (-0.5,-1.7,1.5) {$x$};
	\node[above right] at (-1.3,-1.7,0.3) {$y$};
	\node[above] at (-1.5,-0.7,1.5) {$t$};

\end{tikzpicture}
\caption{}
\end{subfigure}
\hfill
\begin{subfigure}[b]{.45\textwidth}
\centering
\begin{tikzpicture}[scale=0.55]
\begin{scope}[thick,decoration={
    markings,
    mark=at position 0.5 with {\arrow{triangle 45}}}
    ] 
  
\draw[fill=black] (0,0) circle (0.2cm);
\node[below] at (0,0) {$D_1$};

\draw[postaction={on each segment={mid arrow}}] (-3,5.2) -- (0,0);

\draw[fill=black] (-3,5.2) circle (0.2cm);
\node[left] at (-3,5.2) {$D_3$};

\draw[postaction={on each segment={mid arrow}}] (3,5.2) -- (-3,5.2);

\draw[fill=black] (3,5.2) circle (0.2cm);
\node[right] at (3,5.2) {$D_2$};

\draw[postaction={on each segment={mid arrow}}] (0,0) -- (3,5.2);

\end{scope}
\end{tikzpicture}
\caption{\label{fig:threecyclegraph}}
\end{subfigure}
\caption{An example of a set of three call-return pairs, or equivalently, three causal diamonds. (a) The arrangement of the call and return points in spacetime. In this case, each diamond is a single line (shown in black), since for each $j \in \{1,2,3\}$, $c_j$ and $r_j$ are lightlike separated. Blue arrows between $p$ and $q$ indicate that $p \prec q$. (b) The corresponding causal graph [cf.~Definition~\ref{def:causalgraph}], which we refer to as a ``three-cycle.''\label{fig:spacetimeandCausalgraph}}
\end{figure}

\section{Two-system summoning}\label{sec:twosystems}

In this section, we define and characterize two-system summoning tasks, which involve distributing the $A$ and $B$ subsystems of an unknown quantum state $\ket{\Psi}_{ABR}$ (to the return points of) two causal diamonds. In \S\ref{sec:unassisted}, we consider the case where the calls are \emph{unassisted}, meaning that a call to $c_j$ reveals only whether a share of the state is required at $r_j$ [cf.~Definition~\ref{def:UBS}]. This seems the most likely scenario in the context of a quantum network. Two further subsections deal with variants on this task that involve \emph{assistance}, provided in the form of extra information at the call points. The first variant simply breaks the symmetry of the calls [cf.~Definition~\ref{def:LBS}], while in the second variant, each called diamond is notified of the identity of both called diamonds [cf.~Definition~\ref{def:GBS}]. These also correspond to plausible network scenarios, and allow us to explore the subtleties of how limiting the information at the call points affects the achievability of summoning tasks.

\subsection{No assistance} \label{sec:unassisted}

We define unassisted two-system summoning following~\cite{adlam2018relativistic}.\footnote{See Definition~1 therein. Note that in~\cite{adlam2018relativistic}, our ``two-system summoning'' is referred to as ``entanglement summoning.''}
\begin{definition} \label{def:UBS}
An \emph{unassisted two-system summoning task} is specified by a start point $s$ and a set of causal diamonds $\{D_j\}_{j=1}^n$, each of which is defined by a call point $c_j$ and return point $r_j$. The task involves two agencies, Alice and Bob, which perform the following steps.
\begin{enumerate}[1.]
    \item Bob prepares a quantum state $\ket{\Psi}_{ABR}$, and gives Alice the $A$ and $B$ subsystems at $s$.
    \item At each $c_j$, Bob gives Alice a classical bit $b_j$. Alice is promised that $b_{j^*} = b_{k^*} = 1$ for exactly two $j^*, k^* \in \{1,\dots, n\}$, but does not know the values of $j^*$ and $k^*$ in advance.
\end{enumerate}
The task is \emph{achievable} iff for any state $\ket{\Psi}_{ABR}$ and any choice of $j^*, k^* \in \{1,\dots, n\}$ by Bob, Alice can return subsystem $A$ at $r_{j^*}$ and subsystem $B$ at $r_{k^*}$, or vice versa, with success probability~$1$.
\end{definition}

In this subsection, we characterize the achievability of all unassisted two-system summoning tasks with no bidirectional causal connections between diamonds. The causal graphs corresponding to such tasks are all oriented graphs.  
Note, however, that if a given oriented graph corresponds to achievable tasks, then any graph obtained from this graph by making some of the edges bidirected also corresponds to achievable tasks. One can simply run the protocol for the oriented graph, ignoring the extra edges. Causal graphs with bidirected edges are explored further in Section~\ref{sec:bidirected}.

To analyze these tasks, we begin by studying examples involving only three causal diamonds. We simplify by moving the start point $s$ to the distant past. As explained after Theorem \ref{thm:single}, any summoning task which is achievable with $s$ in the distant past is also achievable when $s$ is merely in the causal past of all of the $r_i$. 

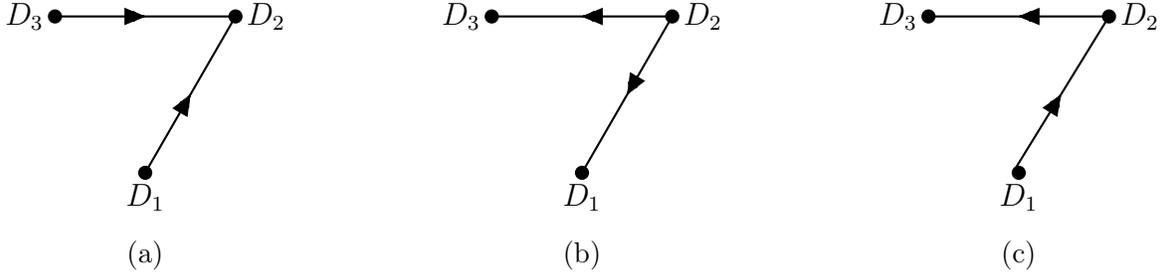
\begin{figure}
\centering
\begin{subfigure}[b]{.3\textwidth}
\centering
\begin{tikzpicture}[scale=0.4]
\begin{scope}[thick,decoration={
    markings,
    mark=at position 0.5 with {\arrow{triangle 45}}}
    ] 
  
\draw[fill=black] (0,0) circle (0.2cm);
\node[below] at (0,0) {$D_1$};

\draw[fill=black] (-3,5.2) circle (0.2cm);
\node[left] at (-3,5.2) {$D_3$};

\draw[postaction={on each segment={mid arrow}}] (-3,5.2) -- (3,5.2);

\draw[fill=black] (3,5.2) circle (0.2cm);
\node[right] at (3,5.2) {$D_2$};

\draw[postaction={on each segment={mid arrow}}] (0,0) -- (3,5.2);

\end{scope}

\end{tikzpicture}
\caption{\label{fig:twoIn}}
\end{subfigure}
\hfill
\begin{subfigure}[b]{.3\textwidth}
\centering
\begin{tikzpicture}[scale=0.4]
\begin{scope}[thick,decoration={
    markings,
    mark=at position 0.5 with {\arrow{triangle 45}}}
    ] 
  
\draw[fill=black] (0,0) circle (0.2cm);
\node[below] at (0,0) {$D_1$};

\draw[fill=black] (-3,5.2) circle (0.2cm);
\node[left] at (-3,5.2) {$D_3$};

\draw[postaction={on each segment={mid arrow}}] (3,5.2) -- (-3,5.2);

\draw[fill=black] (3,5.2) circle (0.2cm);
\node[right] at (3,5.2) {$D_2$};

\draw[postaction={on each segment={mid arrow}}] (3,5.2) -- (0,0);

\end{scope}
\end{tikzpicture}
\caption{\label{fig:twoOut}}
\end{subfigure}
\hfill 
\begin{subfigure}[b]{.3\textwidth}
\centering
\begin{tikzpicture}[scale=0.4]
\begin{scope}[thick,decoration={
    markings,
    mark=at position 0.5 with {\arrow{triangle 45}}}
    ] 
  
\draw[fill=black] (0,0) circle (0.2cm);
\node[below] at (0,0) {$D_1$};

\draw[fill=black] (-3,5.2) circle (0.2cm);
\node[left] at (-3,5.2) {$D_3$};

\draw[postaction={on each segment={mid arrow}}] (3,5.2) -- (-3,5.2);

\draw[fill=black] (3,5.2) circle (0.2cm);
\node[right] at (3,5.2) {$D_2$};

\draw[postaction={on each segment={mid arrow}}] (-0.2,0) -- (3,5.2);

\end{scope}
\end{tikzpicture}
\caption{\label{fig:inandout}}
\end{subfigure}

\caption{(a) The ``two-in'' graph. This is the causal graph corresponding to the task shown in Fig.~\ref{fig:threesimple}. (b) The ``two-out'' graph. (c) The ``in-and-out'' graph.\label{fig:twoInAndTwoOut}}
\end{figure}

Consider the ``two-in'' causal graph shown in Fig.~\ref{fig:twoIn}. Any unassisted two-system summoning task with this graph is achievable. To complete it, bring the $A$ subsystem to $D_1$ (from $s$) and the $B$ subsystem to $D_3$. Then, at $D_1$, return $A$ at $r_1$ if $b_1 = 1$, whereas if $b_1=0$, forward $A$ to $D_2$. The protocol at $D_3$ is analogous. One can readily verify that calls to any two of $D_1,D_2,D_3$ result in the $A$ and $B$ subsystems being returned at (the return points of) the two called diamonds. Note that it immediately follows that any task on three diamonds whose causal graph contains this ``two-in'' graph as a subgraph is also achievable. 

Now, consider the ``two-out'' causal graph in Fig.~\ref{fig:twoOut}. Our first lemma establishes the unachievability of all unassisted two-system summoning tasks with this causal graph.
\begin{lemma}\label{lemma:noTwoOutGraphs}
Any unassisted two-system summoning task corresponding to the ``two-out'' causal graph in Fig.~\ref{fig:twoOut}, or any spanning subgraph thereof, is unachievable.
\end{lemma}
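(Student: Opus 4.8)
The plan is to derive a contradiction from a successful protocol using the no-cloning theorem, exploiting the fact that in the ``two-out'' graph the diamond $D_2$ receives edges into both $D_1$ and $D_3$ but there is no edge between $D_1$ and $D_3$. Concretely, suppose for contradiction that some unassisted two-system summoning task with this causal graph (or a spanning subgraph, which has strictly fewer edges and hence is only harder) admits a perfect strategy. I would consider two of Bob's possible call patterns: first, Bob calls $D_1$ and $D_2$; second, Bob calls $D_2$ and $D_3$. By the definition of achievability, in the first case Alice must produce the $A$ and $B$ halves of $\ket{\Psi}_{ABR}$ at $\{r_1,r_2\}$ (in some order), and in the second case at $\{r_2,r_3\}$.

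The key step is to run a single physical experiment in which Bob calls \emph{only} $D_2$ at first --- more precisely, I would argue about what information Alice's agents have available in the causal past of each return point, and show that Alice could be made to simultaneously succeed on two incompatible tasks. The clean way to do this: since $D_1 \not\to D_3$ and $D_3 \not\to D_1$, i.e. $c_1 \not\prec r_3$ and $c_3 \not\prec r_1$, the return points $r_1$ and $r_3$ are ``independent'' in the sense that Alice's operations producing an output at $r_1$ cannot depend on the bit $b_3$, and vice versa. Now have Bob send $b_2 = 1$ and consider the hypothetical branches $b_1 = 1, b_3 = 0$ versus $b_1 = 0, b_3 = 1$. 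In the first branch Alice outputs the state at $r_1$ and at $r_2$; in the second, at $r_3$ and at $r_2$. Because the operations determining the $r_1$-output are insensitive to $b_3$ and those determining the $r_3$-output are insensitive to $b_1$, there is a single run of the protocol (with $b_1 = b_3 = 1$, say, or by a purification/Choi-state argument treating the branches coherently) in which faithful copies of a share of $\ket{\Psi}_{ABR}$ appear at $r_1$, at $r_3$, \emph{and} at $r_2$ --- three shares of a two-party state, forcing a clone of at least one subsystem of an arbitrary unknown state. This contradicts no-cloning.

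I would make the ``single run'' argument rigorous by the standard device used for summoning no-go results: feed in half of a maximally entangled reference state as the unknown input, run the protocol, and compare the reduced states at the relevant return points to the ideal target using a fidelity/monogamy argument, noting that the local operations in the causal shadow of $r_1$ cannot have touched $b_3$ and symmetrically for $r_3$. The reduction to spanning subgraphs is immediate, since removing edges only removes causal connections and hence only restricts Alice's ability to forward systems, so unachievability is preserved.

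\textbf{Main obstacle.} The subtle point --- and the step I expect to require the most care --- is justifying that the outputs at $r_1$ and $r_3$ can be realized in a \emph{common} physical history despite Bob's promise that exactly two diamonds are called: one must be careful that Alice's strategy is a fixed collection of local operations chosen in advance, so that her behavior in the causal past of $r_1$ is literally a function that ignores $b_3$ (since $c_3 \not\prec r_1$), and then invoke a purification argument so that the ``two branches'' are two marginals of one global state to which no-cloning applies. Getting the bookkeeping of which classical bits are accessible where, and phrasing the contradiction as the existence of an approximate universal cloner, is where the real work lies; the rest is routine.
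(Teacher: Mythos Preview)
Your approach is essentially the paper's: have Bob ``cheat'' by calling all three diamonds, observe that each return point's output depends only on the call bits in its causal past, and derive a cloning contradiction. Two remarks on execution.

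First, you are over-engineering the ``main obstacle.'' No purification, Choi-state, or approximate-cloner argument is needed. Alice's protocol is a fixed collection of physical operations chosen before any calls arrive; once Bob (illegitimately) sets $b_1=b_2=b_3=1$, that same protocol simply runs on those inputs. The output at $r_1$ is a fixed function of what lies in $J^-(r_1)$, which excludes $b_3$; similarly for $r_3$. There is one physical history, and the outputs at $r_1,r_2,r_3$ are whatever that fixed protocol produces. The paper phrases this as: even though Bob is breaking the rules, it should not be possible to manipulate a physically allowed protocol into violating no-cloning.

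Second, the paper adds a small simplification you omit: since $D_1,D_3\not\to D_2$, the output at $r_2$ is independent of $b_1,b_3$, so it is a \emph{fixed} subsystem --- say $A$, without loss of generality. Then in the legal pattern $(D_1,D_2)$ the output at $r_1$ must be $B$, and in the legal pattern $(D_2,D_3)$ the output at $r_3$ must be $B$; in the cheating run both $r_1$ and $r_3$ therefore return $B$, cloning it. Your pigeonhole version (three outputs, two subsystems) is also correct, but pinning down $D_2$'s output first makes the contradiction cleaner and avoids any worry about whether all three locations actually output something.
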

\begin{proof}
Suppose that there exists a protocol that completes such a task. This means that when calls are made to any pair of diamonds, the protocol returns subsystems $A$ and $B$ at the return points of the two called diamonds. Consider the case where $D_2$ is one of the called diamonds. Since $D_1,D_3\not\rightarrow D_2$, the subsystem returned at (the return point of) $D_2$ cannot depend on what the other called diamond is, so we can take this subsystem to be $A$ without loss of generality. Then, making the other call to $D_1$ should result in $B$ being returned to $D_1$, whereas making the other call being to $D_3$ should result in $B$ being returned to $D_3$. 

Now, suppose that Bob cheats by making calls to all three diamonds. Since $D_1 \not\sim D_3$, i.e., $D_1$ and $D_3$ are not causally connected, the behaviour of the protocol at $D_1$ cannot be affected by the call to $D_3$, and vice versa. Consequently, the fact that $D_1$ and $D_3$ separately return $B$ when, respectively, $(D_1,D_2)$ and $(D_3,D_2)$ are called implies that they will both return $B$ when all three diamonds are called. But this implies that $B$ has been cloned, in violation of the linearity of quantum mechanics. Even though calling three diamonds breaks the rules of the task, it should not be possible to manipulate any physically allowed protocol to violate a law of physics. Hence, no such protocol exists, so the task is unachievable. 
\end{proof}

To characterize all achievable tasks on three diamonds, it remains to consider the graph in Fig~\ref{fig:threecyclegraph}. This is a directed cycle graph of length $3$, which we will refer to as a ``3-cycle.'' Unassisted two-system summoning tasks with this causal graph are also unachievable.
\begin{lemma}\label{lemma:nothreecycle}
Any unassisted two-system summoning task corresponding to the ``3-cycle'' causal graph shown in Fig.~\ref{fig:threecyclegraph}, or any spanning subgraph thereof, is unachievable. 
\end{lemma}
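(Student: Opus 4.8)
The plan is to assume a protocol for such a task exists and derive a contradiction with the monogamy of entanglement, by having Bob cheat. Orient the $3$-cycle as $D_1 \to D_2 \to D_3 \to D_1$, and fix once and for all the reference state $\ket{\Psi}_{ABR} = \ket{\Phi}_{AR_A}\otimes\ket{\Phi}_{BR_B}$, where $R = R_A R_B$ and $\ket{\Phi}$ is a Bell pair; since Bob may choose the state, any achievable protocol must in particular succeed on this one. Write $\psi_{AR} = \operatorname{Tr}_B\proj{\Psi}$ and $\psi_{BR} = \operatorname{Tr}_A\proj{\Psi}$, and note that $\psi_{AR}$ is maximally entangled between $A$ and $R_A$ (and maximally mixed on $R_B$), symmetrically for $\psi_{BR}$.

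The structural observation driving everything is that the $3$-cycle contains none of the reverse edges: $D_2 \not\to D_1$, $D_3 \not\to D_2$, $D_1 \not\to D_3$, equivalently $c_2 \not\prec r_1$, $c_3 \not\prec r_2$, $c_1 \not\prec r_3$. Hence no system arriving at $r_1$, classical or quantum, can have been causally influenced by the bit $b_2$ announced at $c_2$ (such a system would trace a causal curve from a point of $D_2$ to $r_1$, forcing $c_2 \prec r_1$), so the reduced state $\rho_{r_1 R}$ produced by the protocol is a function of $(b_1,b_3)$ only; likewise $\rho_{r_2 R}$ depends only on $(b_1,b_2)$, and $\rho_{r_3 R}$ only on $(b_2,b_3)$. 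As in the proof of Lemma~\ref{lemma:noTwoOutGraphs}, this holds for any physically realizable protocol, cheat-detecting or not.

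Now run the protocol on $\ket{\Psi}$ with Bob calling all three diamonds, $b_1=b_2=b_3=1$. Compare $r_1$ with the legitimate call to $(D_1,D_3)$, which has $(b_1,b_2,b_3)=(1,0,1)$: since $\rho_{r_1 R}$ sees only $(b_1,b_3)=(1,1)$ in both runs it is the same density matrix, and in the legitimate run the protocol succeeds, so (up to the assignment permitted by the ``or vice versa'' clause, which a perfect protocol must make definitely) $\rho_{r_1 R}\in\{\psi_{AR},\psi_{BR}\}$; comparing $r_2$ with the call to $(D_1,D_2)$ and $r_3$ with the call to $(D_2,D_3)$ gives the same for $\rho_{r_2 R}$ and $\rho_{r_3 R}$. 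Thus in the all-called run each of $r_1,r_2,r_3$, jointly with $R$, is in state $\psi_{AR}$ or $\psi_{BR}$; by pigeonhole two of them, say $r_i$ and $r_j$, carry the same one. If it is $\psi_{AR}$, then $r_i$ and $r_j$ are each maximally entangled with the qubit $R_A$, but purity of $\proj{\Phi}$ forces $r_i R_A$ to be uncorrelated with the rest of the global state, so $\rho_{r_j R_A}$ would be a product state, contradicting $\rho_{r_j R_A}=\proj{\Phi}$; the case $\psi_{BR}$ is identical with $R_B$ replacing $R_A$. So no protocol exists, and since the argument used only the absence of the three reverse edges, it applies verbatim to every spanning subgraph of the $3$-cycle.

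I expect the main obstacle to be making the causality step fully rigorous: one must show that the geometric fact $D_2\not\to D_1$ genuinely implies $\rho_{r_1 R}$ is independent of $b_2$ for an arbitrary adaptive, entangling, possibly cheat-detecting protocol — this is where the quantum-tasks formalism does the real work, and it is also the ingredient one would state precisely to make ``any spanning subgraph thereof'' automatic. A lesser point worth spelling out is why a perfect protocol must commit, in each run, to a definite assignment of $A$ and $B$ to the two called return points rather than a classical mixture of the two assignments, since that is what licenses $\rho_{r_i R}\in\{\psi_{AR},\psi_{BR}\}$.
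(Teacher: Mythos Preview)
Your proof is correct and follows essentially the same strategy as the paper: have Bob cheat by calling all three diamonds, use the missing reverse edges to argue that each return point behaves as in some legitimate two-call run and hence outputs $A$ or $B$, then pigeonhole to a contradiction. The only cosmetic difference is that you phrase the final contradiction as monogamy of entanglement (via a fixed Bell-pair reference state), whereas the paper invokes no-cloning directly; with a purifying reference $R$ these are the same obstruction, and the paper's version sidesteps your worry about probabilistic $A$/$B$ assignment since cloning an unknown subsystem is already impossible in expectation.
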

\begin{proof}
Suppose that there exists a protocol that completes such a task. This means that: 
\begin{enumerate}
    \item Calls to $D_1$ and $D_2$ result in $A$ and $B$ each returned at $D_1$ and $D_2$ (i.e., $A$ at $D_1$ and $B$ at $D_2$, or vice versa). 
    \item Calls to $D_2$ and $D_3$ result in $A$ and $B$ returned at $D_2$ and $D_3$.
    \item Calls to $D_3$ and $D_1$ result in $A$ and $B$ returned at $D_3$ and $D_1$.
\end{enumerate}
Now, suppose that Bob cheats by making calls to $D_1$, $D_2$, and $D_3$. Then, consider the information available at (the return point of) $D_1$. $D_1$ only sees the calls to $D_1$ and $D_3$, and therefore has the same information as in the case where only $D_1$ and $D_3$ are called. Hence, by~(iii), $D_1$ returns $A$ or $B$. Similarly, $D_2$ only sees the calls to $D_2$ and $D_1$, so by~(i), $D_2$ must return $A$ or $B$. Finally, $D_3$ only sees the calls to $D_2$ and $D_3$, so by (ii), $D_3$ returns $A$ or $B$. But this implies that the protocol must have cloned either $A$ or $B$, which is impossible, so no such protocol exists. 
\end{proof}

We can now characterize tasks involving an arbitrary number of diamonds. Note that any vertex-induced subgraph of the full causal graph corresponds to an unachievable task, then the full task is also unachievable. Hence, Lemmas~\ref{lemma:noTwoOutGraphs} and~\ref{lemma:nothreecycle} can be used to constrain the set of achievable tasks. Conversely, we prove in~\S\ref{sec:UBSprotocol} that a task is achievable as long as none of the vertex-induced subgraphs of its causal graph coincide with the unachievable three-vertex graphs of Lemmas~\ref{lemma:noTwoOutGraphs} and~\ref{lemma:nothreecycle}.
Thus, we arrive at necessary and sufficient conditions for the achievability of unassisted two-system summoning tasks.

\begin{theorem}[unassisted two-system summoning] \label{thm:unassistedtwosystemsummoning} An unassisted two-system summoning task whose corresponding causal graph is an oriented graph is achievable if and only if
\begin{enumerate}
    \item the return point $r_j$ of every diamond $D_j$ is in the causal future of the start point $s$, and
    \item in the subgraph of the causal graph induced by any three vertices, there are two edges pointing into one of the vertices.
\end{enumerate}
\end{theorem}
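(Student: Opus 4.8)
The plan is to prove both directions. The forward (necessity) direction is already essentially done: if condition (i) fails, then some $r_j$ is not in the causal future of $s$, and a call to $D_j$ (paired with any other diamond) cannot be satisfied since the share cannot reach $r_j$; if condition (ii) fails, then some three vertices induce a subgraph with at most one edge into each vertex. An oriented graph on three vertices with at most one in-edge per vertex is—after checking the small number of cases—either a subgraph of the ``two-out'' graph (at most one edge into $D_2$, plus edges not contributing a second in-edge anywhere) or a subgraph of the ``3-cycle.'' Actually the cleanest way to see this: among oriented three-vertex graphs, the ones that are \emph{not} covered by Lemma~\ref{lemma:noTwoOutGraphs} or Lemma~\ref{lemma:nothreecycle} are exactly those containing the ``two-in'' graph as a spanning subgraph, and a short case analysis on the $2^3=8$ tournaments (plus their subgraphs) shows ``contains two-in as spanning subgraph'' is equivalent to ``some vertex has in-degree $\geq 2$.'' So I would state this combinatorial equivalence as a small claim, verify it by cases, and then invoke the two lemmas together with the observation (stated in the excerpt) that an unachievable vertex-induced subgraph makes the whole task unachievable.

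For the converse (sufficiency), I would defer the explicit protocol to \S\ref{sec:UBSprotocol} as the excerpt promises, but sketch its structure here. Given condition (ii), every triple has a vertex with two in-edges; I want to leverage this to route the $A$ and $B$ subsystems through the diamonds so that no matter which two diamonds are called, the two subsystems end up at the two called return points. The natural idea is to pick, for each subsystem, a path (or more generally a decision tree) through the diamonds respecting the causal order $c_j \prec r_k$ on the edges: at each diamond a subsystem visits, if that diamond is called, return the subsystem there; otherwise forward it along an out-edge to the next diamond. The condition that every triple has a vertex of in-degree $\geq 2$ is exactly what should guarantee that such routing can be set up consistently—intuitively it prevents the ``3-cycle'' obstruction (where routing chases its tail) and the ``two-out'' obstruction (where two sink-like diamonds both demand the same subsystem). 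I would reduce to the single-system summoning characterization (Theorem~\ref{thm:single}) by arguing that, after fixing which subsystem is ``responsible'' for which diamonds, each subsystem faces an effective single-system summoning task on its assigned diamonds, and condition (ii) ensures each such sub-task's diamonds are pairwise causally connected.

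The main obstacle is the sufficiency direction—specifically, finding the right \emph{partition-or-tree} structure and proving it always exists under condition (ii). One has to handle the asymmetry between the two calls correctly: the protocol cannot know in advance which called diamond gets $A$ and which gets $B$, so the routing must be robust to both assignments simultaneously. I expect the clean approach is induction on the number of diamonds: find a vertex $D_j$ whose removal preserves condition (ii) (e.g. a vertex of in-degree $\geq 2$ in the whole graph, or a source), handle the base case of two or three diamonds by the explicit constructions already given (the ``two-in'' protocol and Fig.~\ref{fig:simple}/\ref{fig:threesimple}), and show how to extend a protocol for $n-1$ diamonds to one for $n$ by splicing $D_j$ into the routing using its in- and out-edges. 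Verifying that condition (ii) is inherited and that the splicing is causally consistent (every forwarding step respects $\prec$) will be the delicate bookkeeping, and is presumably exactly what \S\ref{sec:UBSprotocol} carries out in detail.
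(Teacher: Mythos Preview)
Your necessity argument is correct and essentially matches the paper's: an oriented three-vertex graph with no vertex of in-degree $\geq 2$ is a spanning subgraph of either the ``two-out'' graph or the ``3-cycle,'' and Lemmas~\ref{lemma:noTwoOutGraphs} and~\ref{lemma:nothreecycle} then finish the job.

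For sufficiency, however, you are missing the paper's key structural step, and the routes you propose instead are either harder than necessary or do not work. The paper does \emph{not} proceed by induction or by partitioning the diamonds between the two subsystems. Instead, it first proves a graph lemma (Lemma~\ref{lemma:unassistedgraphs}): an oriented graph satisfying Condition~(ii) is forced to be a \emph{transitive tournament}, or a transitive tournament with the single edge between the first two vertices in the topological order removed (Fig.~\ref{fig:twosystemgraphs}). In other words, Condition~(ii) on all triples is strong enough to yield a global linear order on the diamonds. Once you have this, the protocol is conceptually simple: lay down two ``entanglement rails'' of Bell pairs along the linear order and teleport $A$ and $B$ forward until each hits a called diamond, with the first called diamond switching $B$ onto the $A$-rail so the second called diamond returns $B$ (Protocols~\ref{protocol:entanglementRailsA} and~\ref{protocol:entanglementRailsB}). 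No induction, no partition.

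Your ``reduce to two single-system summonings by assigning each subsystem a subset of diamonds'' idea runs directly into the obstacle you yourself flag: in the unassisted task the calls are unlabelled, so the two called diamonds cannot coordinate which one plays the $A$-role. This is exactly the distinction the paper draws between the unassisted and label-assisted cases (\S\ref{sec:LABS}); the parallel-single-system-summoning trick works only with labelled calls. Your induction sketch is not obviously wrong, but it is working much harder than needed because it does not exploit the total order that Condition~(ii) already hands you.
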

\begin{proof}
The necessity of Condition (i) is immediate from causality. Now, suppose that a subgraph $G_3$ of the full causal graph $G_{\mathcal{D}}$ induced by some set of three vertices violates Condition~(ii), i.e., each vertex in $G_3$ has in-degree at most~$1$. Clearly, $G_3$ must be a spanning subgraph of either the ``two-out'' graph (Fig.~\ref{fig:twoOut}) or the ``3-cycle'' (Fig.~\ref{fig:threecyclegraph}). In either case, any unassisted two-system summoning task corresponding $G_3$ is unachievable, by Lemmas~\ref{lemma:noTwoOutGraphs} and~\ref{lemma:nothreecycle}. Since $G_3$ is a vertex-induced subgraph of $G_{\mathcal{D}}$, it immediately follows that any unassisted two-system summoning task corresponding to $G_{\mathcal{D}}$ is also unachievable. 

The proof that Conditions~(i) and~(ii) are sufficient proceeds in two steps. First, we show in Lemma~\ref{lemma:unassistedgraphs} in~\ref{appendix:graphlemmas} that any oriented graph satisfying Condition~(ii) falls under one of the two cases in Fig.~\ref{fig:twosystemgraphs}: the graph is either a transitive tournament (Fig.~\ref{fig:transitive}), or obtained from a transitive tournament by removing the edge between the first two vertices in its topological ordering (Fig.~\ref{fig:nearlyTransitive}). We then provide protocols for both cases in \S\ref{sec:UBSprotocol}.
\end{proof}

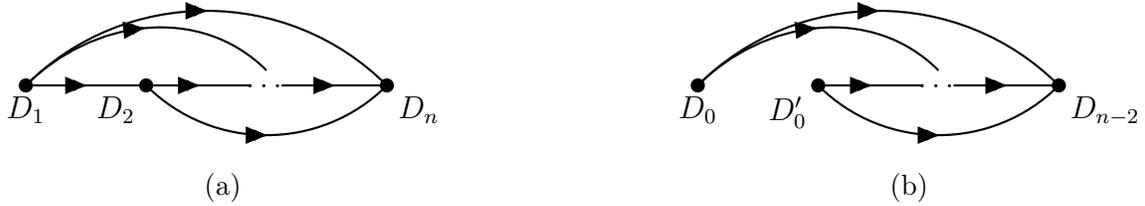
\begin{figure}
\centering
\begin{subfigure}[b]{.45\textwidth}
\centering
\begin{tikzpicture}[scale=0.4]
\begin{scope}[thick,decoration={
    markings,
    mark=at position 0.5 with {\arrow{triangle 45}}}
    ] 
  
\draw[fill=black] (0,0) circle (0.2cm);
\node[below] at (0,0) {$D_1$};

\draw[postaction={on each segment={mid arrow}}] (0,0) -- (4,0);

\draw[fill=black] (4,0) circle (0.2cm);
\node[below left] at (4,0) {$D_2$};

\draw[postaction={on each segment={mid arrow}}] (4,0) -- (7.5,0);

\draw[fill=black] (12,0) circle (0.2cm);
\node[below right] at (12,0) {$D_n$};

\node at (8,0) {\dots};

\draw[postaction={on each segment={mid arrow}}] (8.5,0) -- (12,0);

\draw[postaction={on each segment={mid arrow}}] (0,0) to [out=45,in=135] (8,0.5);
\draw[postaction={on each segment={mid arrow}}] (0,0) to [out=45,in=135] (12,0);

\draw[postaction={on each segment={mid arrow}}] (4,0) to [out=-45,in=-135] (12,0);

\end{scope}

\end{tikzpicture}
\caption{\label{fig:transitive}}
\end{subfigure}
\hfill
\begin{subfigure}[b]{.45\textwidth}
\centering
\begin{tikzpicture}[scale=0.4]

\begin{scope}[thick,decoration={
    markings,
    mark=at position 0.5 with {\arrow{triangle 45}}}
    ] 
  
\draw[fill=black] (0,0) circle (0.2cm);
\node[below] at (0,0) {$D_0$};


\draw[fill=black] (4,0) circle (0.2cm);
\node[below left] at (4,0) {$D_0'$};

\draw[postaction={on each segment={mid arrow}}] (4,0) -- (7.5,0);

\draw[fill=black] (12,0) circle (0.2cm);
\node[below right] at (12,0) {$D_{n-2}$};

\node at (8,0) {\dots};

\draw[postaction={on each segment={mid arrow}}] (8.5,0) -- (12,0);

\draw[postaction={on each segment={mid arrow}}] (0,0) to [out=45,in=135] (8,0.5);
\draw[postaction={on each segment={mid arrow}}] (0,0) to [out=45,in=135] (12,0);

\draw[postaction={on each segment={mid arrow}}] (4,0) to [out=-45,in=-135] (12,0);

\end{scope}
\end{tikzpicture}\caption{\label{fig:nearlyTransitive}}
\end{subfigure}
\caption{Schematic diagrams illustrating graphs corresponding to achievable unasssisted two-system summoning tasks. (a) A \textit{transitive tournament}: a tournament that has a \textit{topological ordering}---the vertices can be labelled such that $j < k$ if $D_j\rightarrow D_k$. (b) A graph that nearly forms a transitive tournament: the graph is the same as in (a), except with the edge between the first two vertices in the topological ordering removed. We then relabel the first two diamonds as $D_0$ and $D_0'$.\label{fig:twosystemgraphs}}
\end{figure}

\subsubsection{Protocol for unassisted two-system summoning} \label{sec:UBSprotocol}~\\

Consider an unassisted two-system summoning task that satisfies the conditions of Theorem~\ref{thm:unassistedtwosystemsummoning}. By Lemma~\ref{lemma:unassistedgraphs} in~\ref{appendix:graphlemmas}, this implies that the causal graph has one of the two forms shown in Fig.~\ref{fig:twosystemgraphs}. We start by constructing a protocol (Protocol~\ref{protocol:entanglementRailsA}) for the simpler case of Fig.~\ref{fig:transitive}, where the graph is a transitive tournament. Then, we extend this protocol to the case of Fig.~\ref{fig:nearlyTransitive} (Protocol~\ref{protocol:entanglementRailsB}). For simplicity, the protocols are described for the case where $A$ and $B$ are qubit systems; the extension to larger systems is trivial.

In the following protocols, the diamonds are labelled as in Fig.~\ref{fig:twosystemgraphs}. In particular, note that for each $i \geq 1$, $D_i \to D_j$ for all $j > i$. This means that any measurement outcomes obtained at the call point $c_i$ can be broadcast to all return points $r_j$ with $j > i$.

\pagebreak
\begin{myprotocol}{1a}\label{protocol:entanglementRailsA} ~\\
\noindent \ul{Preparation:}
\begin{itemize}
    \item For each $i \in \{1,\dots, n-1\}$, distribute Bell pairs $\ket{\Phi}_{E_i'E_{i+1}}$ and $\ket{\Phi}_{F_i'F_{i+1}}$ between the call points of $D_i$ and $D_{i+1}$ [cf.~Fig.~\ref{fig:entanglementRails}]. 
    \item Send subsystems $A$ and $B$ (from the start point $s$) to the call point of diamond $D_1$. Relabel $A$ as $E_1$ and $B$ as $F_1$.
\end{itemize}
\ul{Execution:}
\begin{itemize}
    \item At $D_i$ for each $i \in \{1,\dots, n\}$:
    \begin{itemize}
    \item If a call is received at the call point $c_i$ (i.e., $b_i = 1$), return $E_i$ at the return point $r_i$.\footnote{Implicitly, any extra Pauli operators on the state due to the teleportation are to be removed at $r_j$ using the teleportation data (measurement outcomes) received from diamonds in the causal past of $D_i$.\label{Paulicorrectionfootnote}} Measure $F_iE_{i}'$ in the Bell basis and broadcast the measurement outcome to every $r_j$ with $j>i$.
    \item If no call is received at $c_i$ (i.e., $b_i = 0$), measure $E_iE_{i}'$ in the Bell basis, and measure $F_iF_i'$ in the Bell basis. Broadcast the measurement outcomes to every $r_j$ with $j>i$.
    \end{itemize}
\end{itemize}
\end{myprotocol}

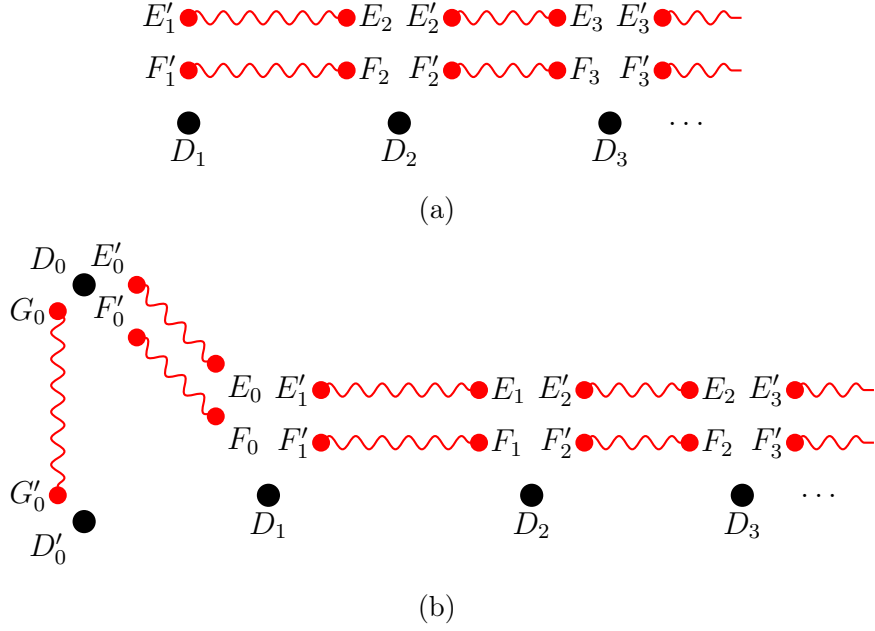
\begin{figure}
\centering
\begin{subfigure}[a]{1\textwidth} 
\centering
\begin{tikzpicture}[scale=0.7]
\begin{scope}[thick,decoration={
    markings,
    mark=at position 0.5 with {\arrow{triangle 45}}}
    ] 
 
\draw[fill=black] (0,0) circle (0.2cm);
\node[below] at (0,-0.1) {$D_1$};

\draw[fill=black] (4,0) circle (0.2cm);
\node[below] at (4,-0.1) {$D_2$};

\draw[fill=black] (8,0) circle (0.2cm);
\node[below] at (8,-0.1) {$D_3$};

\draw[red,fill=red] (0,1) circle (0.15cm);
\node[left] at (0,1) {$F_1'$};
\draw[red,fill=red] (3,1) circle (0.15cm);
\node[right] at (3,1) {$F_2$};
\draw[red,snake it] (0,1) -- (3,1);

\draw[red,fill=red] (0,2) circle (0.15cm);
\node[left] at (0,2) {$E_1'$};
\draw[red,fill=red] (3,2) circle (0.15cm);
\node[right] at (3,2) {$E_2$};
\draw[red,snake it] (0,2) -- (3,2);

\draw[red,fill=red] (5,1) circle (0.15cm);
\node[left] at (5,1) {$F_2'$};
\draw[red,fill=red] (7,1) circle (0.15cm);
\node[right] at (7,1) {$F_3$};
\draw[red,snake it] (5,1) -- (7,1);

\draw[red,fill=red] (5,2) circle (0.15cm);
\node[left] at (5,2) {$E_2'$};
\draw[red,fill=red] (7,2) circle (0.15cm);
\node[right] at (7,2) {$E_3$};
\draw[red,snake it] (5,2) -- (7,2);

\draw[red,fill=red] (9,1) circle (0.15cm);
\node[left] at (9,1) {$F_3'$};
\draw[red,snake it] (9,1) -- (10.5,1);

\draw[red,fill=red] (9,2) circle (0.15cm);
\node[left] at (9,2) {$E_3'$};
\draw[red,snake it] (9,2) -- (10.5,2);

\node at (9.5,0) {\dots};

\end{scope}

\end{tikzpicture}
\caption{\label{fig:entanglementRails}}
\end{subfigure} 

\begin{subfigure}[b]{1\linewidth}
\centering
\begin{tikzpicture}[scale=0.7]
\begin{scope}[thick,decoration={
    markings,
    mark=at position 0.5 with {\arrow{triangle 45}}}
    ] 
 
\draw[fill=black] (-1,0) circle (0.2cm);
\node[below] at (-1,-0.1) {$D_1$};

\draw[fill=black] (4,0) circle (0.2cm);
\node[below] at (4,-0.1) {$D_2$};

\draw[fill=black] (8,0) circle (0.2cm);
\node[below] at (8,-0.1) {$D_3$};

\draw[fill=black] (-4.5,4) circle (0.2cm);
\node[above left] at (-4.6,4) {$D_0$};

\draw[fill=black] (-4.5,-0.5) circle (0.2cm);
\node[below left] at (-4.6,-0.5) {$D_0'$};

\draw[red,fill=red] (0,1) circle (0.15cm);
\node[left] at (0,1) {$F_1'$};
\draw[red,fill=red] (3,1) circle (0.15cm);
\node[right] at (3,1) {$F_1$};
\draw[red,snake it] (0,1) -- (3,1);

\draw[red,fill=red] (0,2) circle (0.15cm);
\node[left] at (0,2) {$E_1'$};
\draw[red,fill=red] (3,2) circle (0.15cm);
\node[right] at (3,2) {$E_1$};
\draw[red,snake it] (0,2) -- (3,2);

\draw[red,fill=red] (5,1) circle (0.15cm);
\node[left] at (5,1) {$F_2'$};
\draw[red,fill=red] (7,1) circle (0.15cm);
\node[right] at (7,1) {$F_2$};
\draw[red,snake it] (5,1) -- (7,1);

\draw[red,fill=red] (5,2) circle (0.15cm);
\node[left] at (5,2) {$E_2'$};
\draw[red,fill=red] (7,2) circle (0.15cm);
\node[right] at (7,2) {$E_2$};
\draw[red,snake it] (5,2) -- (7,2);

\draw[red,fill=red] (9,1) circle (0.15cm);
\node[left] at (9,1) {$F_3'$};
\draw[red,snake it] (9,1) -- (10.5,1);

\draw[red,fill=red] (9,2) circle (0.15cm);
\node[left] at (9,2) {$E_3'$};
\draw[red,snake it] (9,2) -- (10.5,2);

\node at (9.5,0) {\dots};

\draw[red,snake it] (-5,3.5) -- (-5,0);
\draw[red,fill=red] (-5,3.5) circle (0.15cm);
\node[left] at (-5,3.5) {$G_0$};
\draw[red,fill=red] (-5,0) circle (0.15cm);
\node[left] at (-5,0) {$G_0'$};

\draw[red,snake it] (-3.5,3) -- (-2,1.5);
\draw[red,fill=red] (-3.5,3) circle (0.15cm);
\node[above left] at (-3.5,3) {$F_0'$};
\draw[red,fill=red] (-2,1.5) circle (0.15cm);
\node[below right] at (-2,1.5) {$F_0$};

\draw[red,snake it] (-3.5,4) -- (-2,2.5);
\node[above left] at (-3.5,4) {$E_0'$};
\draw[red,fill=red] (-3.5,4) circle (0.15cm);
\draw[red,fill=red] (-2,2.5) circle (0.15cm);
\node[below right] at (-2,2.5) {$E_0$};
\end{scope}
\end{tikzpicture}
\caption{\label{fig:entanglementRails2}
}
\end{subfigure}
\caption{(a) The distribution of Bell pairs (red) used in Protocol~\ref{protocol:entanglementRailsA}, which completes unassisted two-system summoning in the case where the causal graph has the form in Fig.~\ref{fig:transitive}. (b) The distribution of Bell pairs (red) used in Protocol~\ref{protocol:entanglementRailsB}, which completes unassisted two-system summoning in the case where the causal graph has the form in Fig.~\ref{fig:nearlyTransitive}.}
\end{figure}

The set-up of the pre-distributed entanglement is shown in Fig.~\ref{fig:entanglementRails}. The $E_i'E_{i+1}$ Bell pairs collectively form an ``entanglement rail'', and the $F_i' F_{i+1}$ Bell pairs form another rail. At the diamonds that are not called, the $A$ subsystem is forwarded along the $E$ rail via teleporation, and the $B$ subsystem along the $F$ rail. Let the two called diamonds be $D_{j^*}$ and $D_{k^*}$, with $j^* < k^*$. Since none of the diamonds $D_i$ with $i< j^*$ are called, $A$ is teleported (without the Pauli corrections performed) to $E_{j^*}'$. $D_{j^*}$ receives all of the measurement outcomes from these teleportations at $r_{j^*}$, and is therefore able to return $A$ at $r_{j^*}$. According to the protocol, $D_{j^*}$ also teleports $B$ from $F_{j^*}'$ to $E_{j^* + 1}$, which effectively moves $B$ from the $F$ rail to the $E$ rail. Thus, $D_{k^*}$ returns $B$ when it hands in $E_{k^*}$ as per the protocol.

Next, we build on this protocol to handle the case in Fig.~\ref{fig:nearlyTransitive}.

\begin{myprotocol}{1b}\label{protocol:entanglementRailsB}~\\
\noindent \ul{Preparation:}
\begin{itemize}
    \item For each $i \in \{0,\dots, n-3\}$, distribute Bell pairs $\ket{\Phi}_{E_i'E_{i+1}}$ and $\ket{\Phi}_{F_i'F_{i+1}}$ between the call points of $D_i$ and $D_{i+1}$. Additionally, distribute a Bell pair $\ket{\Phi}_{G_0 G_0'}$ between the call points of $D_0$ and $D_0'$ [cf.~Fig.~\ref{fig:entanglementRails}].
    \item Send subsystem $A$ (from the start point $s$) to the call point of $D_0$, and subsystem $B$ to the call point of $D_0'$.
\end{itemize}
\ul{Execution:}
\begin{itemize}
    \item At $D_0$:
    \begin{itemize}
        \item If a call is received at the call point $c_0$, return $A$ at the return point $r_0$. Measure $G_0 E_0'$ in the Bell basis and broadcast the measurement outcome to every $r_j$ with $j \geq 1$.
        \item If no call is received at $c_0$, measure $AE_0'$ in the Bell basis, and measure $G_0 F_0'$ in the Bell basis. Broadcast the measurement outcomes to every $r_j$ with $j \geq 1$.
    \end{itemize}
    \item At $D_0'$:
    \begin{itemize}
        \item If a call is received at the call point $c_0'$, return $B$ at return point $r_0'$. 
        \item If no call is received at $c_0'$, measure $BG_0'$ in the Bell basis and broadcast the measurement outcome to every $r_j$ with $j \geq 1$. 
    \end{itemize}
    \item At $D_i$ for each $i \in \{1,\dots, n-2\}$, perform the same steps as in Protocol~\ref{protocol:entanglementRailsA}.
\end{itemize}
\end{myprotocol}

The arrangement of the pre-distributed entanglement is shown in Fig.~\ref{fig:entanglementRails2}. Similar to in Protocol~\ref{protocol:entanglementRailsA}, there are two entanglement rails, $E$ and $F$, along diamonds $D_0, \dots, D_{n-2}$. In addition, a Bell pair is shared between the only pair of disconnected diamonds, $D_0$ and $D_0'$.
To understand why this protocol works, consider the possible combinations of called diamonds. In the case where both $D_0$ and $D_0'$ are called, the $A$ and $B$ subsystems are returned at $D_0$ and $D_0'$ respectively, completing the task. If neither $D_0$ nor $D_0'$ are called, $D_0$ teleports $A$ to $E_0$ and $D_0'$ teleports $B$ to $F_0$, and the remaining diamonds---which form a transitive tournament---execute Protocol~\ref{protocol:entanglementRailsA}, teleporting $A$ and $B$ along the two rails until they are handed in by the called diamonds. If $D_0$ is called and $D_0'$ is not, $D_0$ returns $A$, and the $B G_0'$ and $G_0E_0'$ measurements by $D_0'$ and $D_0$ result in $B$ (modulo Pauli corrections) being moved to the $E$ rail. The steps executed at the remaining diamonds teleport $B$ along the $E$ rail until it is handed in by the other called diamond. Finally, if $D_0'$ is called and $D_0$ is not, $D_0'$ returns $B$, and $A$ is teleported along the $E$ rail to the other called diamond. 

\subsection{Assistance through labelled calls} \label{sec:LABS}

In a two-system summoning task, the goal is to distribute an unknown quantum between two spacetime locations. An interesting example is the ``$3$-cycle'' graph shown in Fig.~\ref{fig:threecyclegraph}. By Lemma~\ref{lemma:nothreecycle}, any unassisted two-system summoning task corresponding to this graph is unachievable. In some ways, the unachievability of this task is surprising, since any \textit{single}-system summoning task with the same graph is achievable~\cite{hayden2016summoning}, and adding in a second system and a second call appears to make the task easier, as either system can go to either called diamond. The fact that more options actually makes the task more difficult represents a ``paradox of choice'', similar to the one observed in~\cite{adlam2016quantum}.

It turns out that to make tasks corresponding to the ``$3$-cycle'' graph achievable, it suffices to break the symmetry of the two calls. Bob can do this by giving Alice an extra resource: an anti-correlated pair of bits. Equivalently, he labels the two calls $1$ and $2$ (instead of giving the number $1$ to both called diamonds, as in the unassisted setting). In this section, we explore the consequences of providing this extra information.

Labelling the two calls defines a new summoning task, which we formalize below. 
\begin{definition} \label{def:LBS} A \emph{label-assisted two-system summoning task} is specified by a start point $s$ and a set of causal diamonds $\{D_j\}_{j=1}^n$, each of which is defined by a call point $c_j$ and return point $r_j$. The task involves by two agencies, Alice and Bob, which perform the following steps:
\begin{enumerate}
    \item Bob prepares a quantum state $\ket{\Psi}_{ABR}$, and gives Alice the $A$ and $B$ subsystems at $s$.
    \item At each $c_j$, Bob gives Alice a number $b_j\in\{0,1,2\}$. Alice is promised that $b_{j^*} = 1$ for exactly one $j^*\in \{1,\dots, n\}$ that and $b_{k^*} = 2$ for exactly one $k^*\in \{1,\dots,n\}$, but does not know the values of $j^*$ and $k^*$ in advance.
\end{enumerate}
The task is \emph{achievable} if for any state $\ket{\Psi}_{ABR}$ and any choice of $j^*, k^* \in \{1,\dots, n\}$ by Bob, Alice can return subsystem $A$ at $r_{j^*}$ and subsystem $B$ at $r_{k^*}$, or vice versa, with success probability~$1$.
\end{definition}

It is straightforward to complete any label-assisted two-system summoning task corresponding to the ``$3$-cycle.'' To do this, simply execute two single-system summoning protocols in parallel: $A$ is used as the input for a single-system summoning task that responds to the call labelled $1$, while $B$ is used as the input for a separate single-system summoning task that responds to the call labelled $2$.

The reader may wonder why this strategy does not work in the case of unassisted two-system summoning. To clarify this, let us discuss in more detail how Alice takes advantage of the labelled calls to perform two single-system summoning tasks. In general, to perform one such task, Alice will have an agent positioned at the call point of each diamond. The agent performs actions which depend on whether their diamond is called. Now, if Bob makes two calls, one labelled $1$ and the other labelled $2$, Alice can summon a subsystem to each by deploying two separate, non-interacting groups of agents, with the first group ``ignoring'' the $1$ call (i.e., treating the $1$ call as no call), and the second group ignoring the $2$ call. Clearly, this is only possible when the two calls have different labels. If we the unassisted case, if we set up two single-system summoning tasks, the agent at the call point of a called diamond would not know which of these tasks it should ``ignore,'' as it cannot coordinate with the agent at the other called diamond.

Although this type of assistance renders tasks with the ``$3$-cycle'' graph achievable, tasks with the ``two-out'' graph in Fig.~\ref{fig:twoOut} is still unachievable in the label-assisted setting.
\begin{lemma}\label{lemma:LabelAssisted-NoTwoOut}
Any label-assisted two-system summoning task corresponding to the ``two-out'' causal graph in Fig.~\ref{fig:twoOut}, or any spanning subgraph thereof, is unachievable.
\end{lemma}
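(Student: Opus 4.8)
The plan is to mirror the proof of Lemma~\ref{lemma:noTwoOutGraphs}, the only new ingredient being that in the label-assisted setting Bob can pin the label $1$ to a convenient diamond. Relabel the three diamonds so that $D_2$ is the source of the two-out graph; then $D_1 \not\to D_2$ and $D_3 \not\to D_2$ (equivalently $c_1 \not\prec r_2$ and $c_3 \not\prec r_2$), and $D_1 \not\sim D_3$, and all three relations continue to hold for every spanning subgraph. Assume toward a contradiction that some label-assisted protocol succeeds on this task.

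First I would examine the family of honest scenarios in which Bob delivers the label $1$ to $D_2$ (so $j^* = 2$) and the label $2$ to one of $D_1$, $D_3$. Since neither $c_1$ nor $c_3$ lies in the causal past of $r_2$, the classical data Alice receives at those two call points---in particular their labels---cannot influence anything at $r_2$. Hence the joint state of the system Alice returns at $r_2$ together with Bob's reference $R$ is the same function of $\ket{\Psi}_{ABR}$ whether $k^* = 1$ or $k^* = 3$. For the task to succeed, that returned system must carry either $A$ or $B$ in both cases, and since the state is literally the same it carries the \emph{same} one; the $A \leftrightarrow B$ symmetry of the task lets us assume without loss of generality that $r_2$ returns $A$. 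It then follows that when $(j^*,k^*) = (2,1)$ the protocol delivers $B$ at $r_1$, and when $(j^*,k^*) = (2,3)$ it delivers $B$ at $r_3$.

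Next I would run the cheating argument: Bob sends the label $1$ to $D_2$ and the label $2$ to \emph{both} $D_1$ and $D_3$, violating the promise that $2$ is used exactly once. Because $D_3 \not\to D_1$, the label at $c_3$ is invisible at $r_1$, so Alice's behaviour at $r_1$ is identical to the honest $(j^*,k^*) = (2,1)$ run, and she returns $B$ there; symmetrically, $D_1 \not\to D_3$ forces her to return $B$ at $r_3$ as well. Choosing $\ket{\Psi}_{ABR}$ with $B$ entangled with $R$, subsystem $B$ has been produced at two causally disconnected locations, each perfectly correlated with $R$ in the same way---a clone of $B$---contradicting the linearity of quantum mechanics (equivalently, the monogamy of entanglement). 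As in Lemma~\ref{lemma:noTwoOutGraphs}, the fact that Bob's cheat breaks the rules of the task is no escape: a physically realizable protocol cannot be coaxed into cloning. The one place that needs care---and the only real obstacle---is the ``same subsystem at $r_2$'' step: one must be sure that although $r_2$'s output may legitimately depend on the label it is handed at $c_2$, it cannot depend on which of $D_1$, $D_3$ is the partner diamond, which is exactly what $D_1, D_3 \not\to D_2$ supplies.
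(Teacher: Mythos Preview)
Your proposal is correct and follows essentially the same approach as the paper's proof, which simply says to rerun the argument of Lemma~\ref{lemma:noTwoOutGraphs} with the call labelled~$1$ sent to $D_2$ and the call labelled~$2$ sent to both $D_1$ and $D_3$. You have spelled out in full the details the paper leaves implicit, including the key observation that fixing the label at $c_2$ while varying only the inputs at $c_1,c_3$ (which are outside the causal past of $r_2$) forces $r_2$ to return the same subsystem in both honest runs.
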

\begin{proof}The same argument as is used in the proof of Lemma~\ref{lemma:noTwoOutGraphs} applies---consider making a call labelled $1$ to $D_2$, and calls labelled $2$ to $D_1$ and $D_3$. 
\end{proof}

To complete the characterization of tasks on three diamonds, we also need to consider the ``in-and-out'' graph in Fig.~\ref{fig:inandout}.
\begin{lemma}\label{lemma:no_a->b->c}
Any label-assisted two-system summoning task corresponding to the ``in-and-out'' causal graph in Fig.~\ref{fig:inandout}, or any spanning subgraph thereof, is unachievable.
\end{lemma}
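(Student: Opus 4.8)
The plan is to argue by contradiction in the style of Lemmas~\ref{lemma:noTwoOutGraphs}, \ref{lemma:nothreecycle}, and \ref{lemma:LabelAssisted-NoTwoOut}: assuming a protocol completes the task for the ``in-and-out'' graph $D_1 \to D_2 \to D_3$, I would exhibit an illegitimate assignment of call labels under which the protocol is forced to produce a clone of a subsystem of an unknown state. The first step is to record the relevant causal facts: here $c_1 \prec r_2$ and $c_2 \prec r_3$, whereas $c_1 \not\prec r_3$, $c_3 \not\prec r_1$, $c_3 \not\prec r_2$, and $c_2 \not\prec r_1$ (in particular $D_1 \not\sim D_3$). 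Consequently the register Alice returns at $r_1$ depends on Bob's bits only through $b_1$, the register at $r_2$ only through $(b_1,b_2)$, and the register at $r_3$ only through $(b_2,b_3)$. Note that this uses only \emph{absences} of causal connections, together with the assumed success of the protocol on every legitimate call assignment, so whatever argument works for the ``in-and-out'' graph applies verbatim to every spanning subgraph of it.

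Next I would fix an input state $\ket{\Psi}_{ABR}$ for which $A$ and $B$ are each correlated with $R$ — for concreteness write $R = R_A R_B$ with $A$ maximally entangled with $R_A$ and $B$ maximally entangled with $R_B$ — and consider the illegitimate call assignment $(b_1,b_2,b_3) = (1,2,1)$, in which $D_1$ and $D_3$ both receive the label $1$ and $D_2$ receives the label $2$. On the pair $(b_1,b_2)$ this agrees with the legitimate assignment $(1,2,0)$ (labels $1$ and $2$ to $D_1$ and $D_2$); since $b_3$ cannot influence the registers at $r_1$ or $r_2$, the joint state $\rho_{r_1 r_2 R}$ produced by the protocol in the illegitimate run equals the one produced in that legitimate run, which for a successful protocol is (isometric to) $\ket{\Psi}_{ABR}$ and in particular \emph{pure}. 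On the pair $(b_2,b_3)$ the assignment instead agrees with the legitimate assignment $(0,2,1)$ (labels $2$ and $1$ to $D_2$ and $D_3$); since $b_1$ cannot influence the register at $r_3$, the state $\rho_{r_3 R}$ in the illegitimate run equals the one in this legitimate run, which for a successful protocol is (isometric to) the marginal of $\ket{\Psi}_{ABR}$ on $A,R$ or on $B,R$, hence \emph{correlated} for our chosen state.

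The contradiction is then immediate: a pure $\rho_{r_1 r_2 R}$ forces the $r_3$-register to factor off, so $\rho_{r_3 R} = \rho_{r_3} \otimes \rho_R$, contradicting the previous sentence; equivalently, in the illegitimate run $r_1$ and $r_2$ jointly carry $A$ and $B$ correctly correlated with $R$ while $r_3$ simultaneously carries a second copy of one of them, a forbidden clone. I expect the only point that needs to be stated with care — the main, though mild, ``obstacle'' — is the splicing claim that flipping $b_3$ leaves $\rho_{r_1 r_2 R}$ unchanged (and flipping $b_1$ leaves $\rho_{r_3 R}$ unchanged): this is just the observation that the register handed in at $r_j$ is produced by the part of the protocol in $J^-(r_j)$, which contains neither $c_3$ nor $c_1$ respectively, while $R$ is never acted on by Alice. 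It is also worth remarking that the argument is insensitive to the ``or vice versa'' clause in Definition~\ref{def:LBS}, since we only ever track which \emph{pair} of subsystems is correctly returned, not which of $A,B$ is sent to which return point. As noted above, the same proof then covers every spanning subgraph of the ``in-and-out'' graph, completing the lemma.
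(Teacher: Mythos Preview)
Your proposal is correct and follows essentially the same approach as the paper: both use the illegitimate assignment $(b_1,b_2,b_3)=(1,2,1)$ and the causal facts $D_3\not\rightarrow D_1$, $D_3\not\rightarrow D_2$, $D_1\not\rightarrow D_3$ to splice together the behaviour of two legitimate runs and force a cloning contradiction. Your write-up is somewhat more explicit---phrasing the contradiction via purity of $\rho_{r_1 r_2 R}$ rather than the paper's pigeonhole ``three outputs, each equal to $A$ or $B$''---but the underlying argument is the same.
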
 

\begin{proof}
Suppose that there exists a protocol that completes such a task. This means that: 
\begin{enumerate}
    \item If Bob makes a call labelled $1$ to $D_1$ and a call labelled $2$ to $D_2$, $A$ and $B$ are returned at $D_1$ and $D_2$ (i.e., $A$ is returned at $D_1$ and $B$ at $D_2$, or vice versa.
    \item If Bob makes a call labelled $2$ to $D_2$ and a call labelled $1$ to $D_3$, $A$ and $B$ are returned at $D_2$ and $D_3$.
\end{enumerate}
Now, suppose that Bob cheats by making \emph{two} calls labelled $1$ to $D_1$ and $D_3$, and a call labelled $2$ to $D_2$. Since $D_3 \not\rightarrow D_1$, what is returned at $D_1$ cannot depend on whether $D_3$ is called, so by~(i), $D_1$ returns $A$ or $B$. Similarly, since $D_3 \not\rightarrow D_2$, by~(i), $D_2$ returns $A$ or $B$. Finally, since $D_1 \not\rightarrow D_3$, $D_3$ acts as it would in the case where Bob only makes a call labelled $2$ to $D_2$ and a call labelled $1$ to $D_3$, so by (ii), $D_3$ returns $A$ or $B$. But this implies that either $A$ or $B$ has been cloned, which is impossible, so no such protocol exists.
\end{proof}

Combining Lemmas \ref{lemma:LabelAssisted-NoTwoOut} and \ref{lemma:no_a->b->c}, we see that in the causal graph of any achievable task (on an arbitrary number of diamonds), every subgraph induced by three vertices must either have two edges pointing into one vertex, or form a ``$3$-cycle.'' In fact, this condition is also sufficient for the task to be achievable, as we prove in \ref{appendix:labelassistedtwosystem}. 
\begin{theorem}[label-assisted two-system summoning] \label{thm:labelassistedtwosystemsummoning} A label-assisted two-system summoning task whose causal graph is an oriented graph is achievable if and only if
\begin{enumerate}
    \item the return point $r_j$ of every diamond $D_j$ is in the causal future of the start point $s$, and 
    \item the subgraph of the causal graph induced by any three vertices either has two edges pointing into one of the vertices, or is a cycle of length $3$.
\end{enumerate}
\end{theorem}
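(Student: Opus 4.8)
The plan mirrors the proof of Theorem~\ref{thm:unassistedtwosystemsummoning}: establish necessity of Conditions~(i) and~(ii) from the three-diamond no-go lemmas, then establish sufficiency by classifying the admissible causal graphs and giving a protocol for each class. For necessity, Condition~(i) is forced by causality. For Condition~(ii), suppose some three vertices induce a subgraph $G_3$ that is neither a $3$-cycle nor has two edges pointing into one vertex. On three vertices a vertex of in-degree $\geq 2$ occurs exactly when $G_3$ is a transitive triangle or a ``two-in'' graph (both of which have two edges into one vertex), so $G_3$ must have all in-degrees $\leq 1$ and not be a $3$-cycle; the only possibilities are the empty graph, a single edge, the ``in-and-out'' graph (Fig.~\ref{fig:inandout}), and the ``two-out'' graph (Fig.~\ref{fig:twoOut}). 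Each of these is a spanning subgraph of the ``two-out'' or the ``in-and-out'' graph, so by Lemmas~\ref{lemma:LabelAssisted-NoTwoOut} and~\ref{lemma:no_a->b->c} any label-assisted two-system summoning task with causal graph $G_3$ is unachievable; since $G_3$ is vertex-induced in the full causal graph, so is the full task.

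For sufficiency, the first step is a graph-theoretic lemma, to be proved in \ref{appendix:labelassistedtwosystem} in the spirit of Lemma~\ref{lemma:unassistedgraphs}: every oriented graph satisfying Condition~(ii) is either (a) a tournament, or (b) obtained from a tournament on $n-2$ vertices by adjoining two vertices $D_0,D_0'$ that are non-adjacent to one another and each have an outgoing edge to every other vertex. Indeed, if $D_j\not\sim D_k$ then for any third vertex $D_\ell$ the triple $\{D_j,D_k,D_\ell\}$ has no edge between $D_j$ and $D_k$, and ruling out the four forbidden three-vertex configurations above forces $D_j\to D_\ell$ and $D_k\to D_\ell$, so the endpoints of any non-edge are sources; two distinct non-edges would yield a pair of opposite edges, so there is at most one non-edge $\{D_0,D_0'\}$, whence the remaining $n-2$ vertices are pairwise adjacent and thus form a tournament. (A non-transitive core in case~(b) is precisely what the permission of $3$-cycles buys us relative to the unassisted setting.)

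In case~(a) the causal graph is trivially a tournament, so together with Condition~(i) the hypotheses of Theorem~\ref{thm:single} hold, and Alice runs two non-interacting single-system summoning protocols in parallel: one with input $A$ treating a diamond as called iff its label is $1$, and one with input $B$ treating a diamond as called iff its label is $2$. Each subroutine sees exactly one call, so by Theorem~\ref{thm:single} it returns its share at the appropriately labelled return point. In case~(b), Alice places $A$ at the call point $c_0$ of $D_0$ and $B$ at the call point $c_0'$ of $D_0'$, and pre-distributes a Bell pair between $c_0$ and $c_0'$ together with the entanglement for a \emph{two-track} single-system summoning protocol on the tournament core $T$ --- one that delivers the track-$1$ system to the first called core diamond and the track-$2$ system to the second (under a fixed ordering of $T$), discards track $2$ if fewer than two core diamonds are called, and interprets ``called'' as ``label in $\{1,2\}$''; such a protocol exists on any tournament by Theorem~\ref{thm:single} and an argument generalizing the rails of Protocol~\ref{protocol:entanglementRailsA}, and its start point may be taken to be $c_0$ since $D_0$ is a source (likewise $c_0'\prec r_j$ for every core diamond $D_j$). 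If $D_0$ is called it returns $A$ at $r_0$; if $D_0'$ is called it returns $B$ at $r_0'$; if $D_0'$ is not called it teleports $B$ along the Bell pair so that $B$ is held at $c_0$. At $c_0$, if $D_0$ is not called Alice feeds $A$ into track $1$ and the teleported $B$ into track $2$, while if $D_0$ is called she feeds the teleported $B$ into track $1$. A case analysis over which of $D_0,D_0'$ are called --- using that exactly one core diamond is called whenever exactly one of $D_0,D_0'$ is, and both core diamonds (one with each label) are called whenever neither is --- then verifies that $A$ and $B$ are returned at the two called diamonds, up to the allowed swap.

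The hard part is the case~(b) protocol when exactly one of $D_0,D_0'$ is called: the unique remaining core call may carry label $1$ or label $2$, and because $D_0\not\sim D_0'$ the uncalled source cannot learn the called source's label, so the share it forwards into $T$ must be routed label-obliviously to the live core call yet must also be consistent with the label-sensitive routing required when two core calls are live. This is exactly what forces the use of a single two-track protocol on $T$ (rather than two parallel label-specific single-system summoning protocols), and establishing that two-track primitive on an arbitrary tournament is the main remaining work in \ref{appendix:labelassistedtwosystem}.
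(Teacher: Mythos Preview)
Your necessity argument and graph classification (Lemma~\ref{lemma:onemissingedge}) match the paper, and your case~(a) protocol is the same. The gap is in case~(b): the label-oblivious ``two-track'' primitive you propose on an arbitrary tournament core $T$ does not exist. Restricted to the situation where exactly two core diamonds are called, your primitive would in particular solve \emph{unassisted} two-system summoning on $T$ (take whichever track arrives at each called diamond). But when $T$ contains a $3$-cycle, Lemma~\ref{lemma:nothreecycle} shows this is impossible. Concretely, on the $3$-cycle $D_1\to D_2\to D_3\to D_1$ with any fixed notion of ``first/second'', the standard three-call cheat forces all three diamonds to return the same track, cloning that system. So the deferred ``main remaining work'' cannot be completed as stated.

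The error in your reasoning is the claim that the share forwarded from the uncalled source ``must be routed label-obliviously''. It is true that the uncalled source cannot learn the other source's label, but the routing decision need not be finalized at the source: it can be deferred to the return point of the core diamond, which \emph{does} see both $b_0$ and $b_0'$ (since $D_0,D_0'\to D_j$ for every core $D_j$). This is exactly what the paper's Protocol~\ref{LBSprotocol} exploits. It runs \emph{four} label-specific single-system summonings on $T$ (systems $G_1,G_2$ started from $c_0$ and $G_1',G_2'$ from $c_0'$, each responding to one label), with two Bell pairs $\ket{\Phi}_{EE'},\ket{\Phi}_{FF'}$ between $c_0$ and $c_0'$. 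Each source swaps its Bell half onto $G_1/G_2$ (resp.\ $G_1'/G_2'$) using only its own local call information, and the called core diamond then selects among the four $G$'s at its return point using $b_0,b_0'$. Thus two parallel label-specific summonings \emph{do} suffice, once you allow the selection to happen downstream rather than at the source.
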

\begin{proof}
The necessity of Condition~(i) is immediate from causality. The necessity of Condition~(ii) follows immediately from Lemmas~\ref{lemma:LabelAssisted-NoTwoOut} and~\ref{lemma:no_a->b->c}, along with the fact if any vertex-induced subgraph of the causal graph corresponds to an unachievable task, then the full task is also unachievable.

We prove that Conditions~(i) and~(ii) are sufficient in \ref{appendix:labelassistedtwosystem}. First, we show using Lemma~\ref{lemma:onemissingedge} in \ref{appendix:graphlemmas} that any oriented graph satisfying Condition~(ii) falls under one of the two cases in Fig.~\ref{fig:rivalGraph}. We then construct protocols for both cases.
\end{proof}

\begin{figure}
\centering
\begin{subfigure}[b]{.45\textwidth}
\centering
\begin{tikzpicture}[scale=0.65]
    \begin{scope}[thick,decoration={
    markings,
    mark=at position 0.5 with {\arrow{triangle 45}}}
    ] 
    
    \draw[fill=black] (0,0) circle (0.1cm);
    \node[below] at (0,0) {$D_1$};
    
    \draw[fill=black] (2,2) circle (0.1cm);
    \node[right] at (2,2) {$D_3$};
    
    \draw[fill=black] (-2,2) circle (0.1cm);
    \node[left] at (-2,2) {$D_2$};

    \node[below] at (0,4) {\dots};
    
      
    \node[below] at (-3,-3) {$ $};
    
    \node[below] at (3,-3) {$ $};

    
    \draw (-2,2) -- (-1,3);
    \draw (2,2) -- (1,3);
    \draw (0,0) -- (0,3);
    
    \draw[postaction={on each segment={mid arrow}}] (2,2) -- (0,0);
    \draw[postaction={on each segment={mid arrow}}] (-2,2) -- (0,0);
    \draw[postaction={on each segment={mid arrow}}] (-2,2) -- (2,2);

    \end{scope}
    \end{tikzpicture}
\caption{\label{fig:tournament}}
\end{subfigure}
\hfill
\begin{subfigure}[b]{.45\textwidth}
\centering
\begin{tikzpicture}[scale=0.65]
    \begin{scope}[thick,decoration={
    markings,
    mark=at position 0.5 with {\arrow{triangle 45}}}
    ] 
    
    \draw[fill=black] (0,0) circle (0.1cm);
    \node[below] at (0,0) {$D_1$};
    
    \draw[fill=black] (2,2) circle (0.1cm);
    \node[right] at (2,2) {$D_3$};
    
    \draw[fill=black] (-2,2) circle (0.1cm);
    \node[left] at (-2,2) {$D_2$};

    \node[below] at (0,4) {\dots};
    
    \draw[dashed,blue] (0,2) circle (3.1);
      
    \draw[fill=black] (-3,-3) circle (0.1cm);
    \node[below] at (-3,-3) {$D_0$};
    
    \draw[fill=black] (3,-3) circle (0.1cm);
    \node[below] at (3,-3) {$D_0'$};

    \draw[-triangle 45] (-3,-3) -- (-1.5,-0.75);
    \draw[-triangle 45] (3,-3) -- (1.5,-0.75);
    
    \draw (-2,2) -- (-1,3);
    \draw (2,2) -- (1,3);
    \draw (0,0) -- (0,3);
    
    \draw[postaction={on each segment={mid arrow}}] (2,2) -- (0,0);
    \draw[postaction={on each segment={mid arrow}}] (-2,2) -- (0,0);
    \draw[postaction={on each segment={mid arrow}}] (-2,2) -- (2,2);

    \end{scope}
    \end{tikzpicture}
\caption{\label{fig:rivalGraphb}}
\end{subfigure}
\caption{Schematic diagrams illustrating graphs corresponding to achievable label-assisted two-system summoning tasks. (a)~A tournament: every pair of vertices shares an edge. (b)~A graph in which there are only two vertices, $D_0$ and $D_0'$, that do not share an edge. The remaining vertices $\{D_j\}_{j=1}^{n-2}$ induce a tournament, and $D_0,D_0' \to D_j$ for every $j \geq 1$. 
\label{fig:rivalGraph}}
\end{figure}
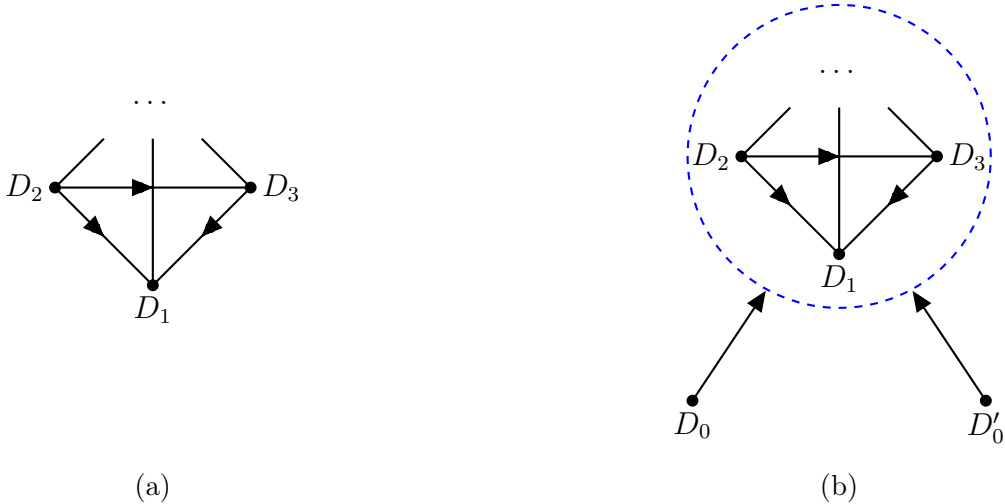

Thus, substantially more tasks are achievable in the label-assisted case than in the unassisted case. Recall that in the unassisted case, the causal graph needs to have a topological ordering, and either every pair of vertices shares an edge, as in Fig.~\ref{fig:transitive}, or there is one pair of non-adjacent vertices, as in Fig.~\ref{fig:nearlyTransitive}. With labelled calls, the need for topological ordering is relaxed, and the graph takes the form of Fig.~\ref{fig:tournament} if there are no non-adjacent vertices, or Fig.~\ref{fig:rivalGraphb} when there is one pair of non-adjacent vertices. This parallels results on single-system summoning. Consider single-system summoning as specified in Definition~\ref{def:singlesystemsummoning}, but allow Bob to give multiple calls. If the calls are unlabelled, \cite{adlam2016quantum} showed that only tournaments with a topological ordering (Fig.~\ref{fig:transitive}) correspond to achievable tasks, due to the ``paradox of choice'' mentioned earlier. However, if the calls are labelled, Alice can ignore all labels except one, reducing the many-call task to the usual single-call task. Thus, the achievability of labelled many-call tasks is also characterized by Theorem~\ref{thm:single}, so the causal graph needs only be a tournament (Fig.~\ref{fig:tournament}). 

\subsection{Assistance through global calls}\label{sec:twoSystemGlobalCalls}

Bob could also provide additional information to Alice in the form of ``global'' calls. Recall that in the unassisted case, if a diamond is called, it does not receive any information (at its call point) about the identity of the other called diamond. In the global-assisted case, on the other hand, each of the two called diamonds $D_{j^*}$ and $D_{k^*}$ is given both $j^*$ and $k^*$. 

\begin{definition} \label{def:GBS} A \emph{global-assisted two-system summoning task} is specified by a start point $s$ and a set of causal diamonds $\{D_j\}_{j=1}^n$, each of which is defined by a call point $c_j$ and return point $r_j$. The task involves two agencies, Alice and Bob, which perform the following steps:
\begin{enumerate}[1.]
    \item Bob prepares a quantum state $\ket{\Psi}_{ABR}$ and gives Alice the $A$ and $B$ subsystems at $s$.
    \item At each $c_j$, Bob gives Alice either $0$, or the tuple $(j^*,k^*)$ with $j^*,k^* \in \{1,\dots,n\}$ and $j^* \neq k^*$. Alice is promised that she will receive $(j^*,k^*)$ at $c_{j^*}$ and $c_{k^*}$, and $0$ at all of the other call points.
\end{enumerate}
The task is \emph{achievable} if for any state $\ket{\Psi}_{ABR}$ and any choice of $j^*,k^* \in \{1,\dots, n\}$ by Bob, Alice can return subsystem $A$ at $r_{j^*}$ and subsystem $B$ at $r_{k^*}$, or vice versa, with success probability~$1$.
\end{definition}

We do not fully characterize global-assisted two-system summoning, but we make one observation. In the global-assisted case, there are achievable tasks whose causal graphs have two or more pairs of non-adjacent vertices. This stands in contrast to the unassisted and label-assisted cases, where there can be at most one pair of non-adjacent vertices, [cf.~Figs.~\ref{fig:twosystemgraphs} and~\ref{fig:rivalGraph}], i.e., at most one pair of diamonds that is not causally connected. As an example, consider the causal graph in Fig.~\ref{fig:twomissingpossible} (which clearly violates the conditions of Theorems~\ref{thm:unassistedtwosystemsummoning} and~\ref{thm:labelassistedtwosystemsummoning}). There are two pairs of non-adjacent vertices ($D_1 \not\sim D_3$ and $D_2\not\sim D_4$), but nonetheless, any global-assisted two-system summoning task with this graph can be completed. We give an explicit protocol in \ref{appendix:globalassistedexample}.

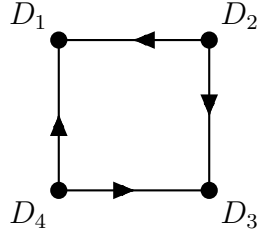
\begin{figure}
    \centering
    \begin{tikzpicture}[scale=1]
    \begin{scope}[thick,decoration={
    markings,
    mark=at position 0.5 with {\arrow{triangle 45}}}
    ] 
  
    \draw[fill=black] (-1,1) circle (0.1cm);
    \node[above left] at (-1,1) {$D_1$};

    \draw[fill=black] (1,1) circle (0.1cm);
    \node[above right] at (1,) {$D_2$};

    \draw[fill=black] (1,-1) circle (0.1cm);
    \node[below right] at (1,-1) {$D_3$};
    
    \draw[fill=black] (-1,-1) circle (0.1cm);
    \node[below left] at (-1,-1) {$D_4$};

    \draw[postaction={on each segment={mid arrow}}] (1,1) -- (-1,1);
    \draw[postaction={on each segment={mid arrow}}] (1,1) -- (1,-1);
    \draw[postaction={on each segment={mid arrow}}] (-1,-1) -- (1,-1);
    \draw[postaction={on each segment={mid arrow}}] (-1,-1) -- (-1,1);

    \end{scope}
    \end{tikzpicture}
    \caption{A four-vertex graph with two pairs of non-adjacent vertices. Any global-assisted two-system summoning task corresponding to this graph is achievable (see \ref{appendix:globalassistedexample}), but its unassisted and label-assisted counterparts are not.\label{fig:twomissingpossible}}
\end{figure}

\section{Entanglement summoning}\label{sec:entanglementSummoning}

In the two-system summoning tasks discussed in the previous section, the goal is to fulfill requests for two subsystems of an \emph{unknown} quantum state to be brought to two parties (of many) in different spacetime regions, abstracted as causal diamonds. However, in the context of many practical applications, the parties may instead need to share a \textit{known} resource state, for use in a later protocol. In this setting, it is more natural to consider entanglement summoning, where the goal is to distribute the $A$ and $B$ subsystems of a Bell pair $\ket{\Phi}_{AB}$.

Since the state to be distributed is known, many copies may be prepared in advance. Consequently, the no-cloning theorem does not restrict the achievability of entanglement summoning tasks, as it did for two-system summoning. Instead, we find restrictions that originate from the monogamy of entanglement, which can create a coordination issue. In particular, one might try to complete an entanglement summoning task by simply distributing many copies $\{\ket{\Phi}_{A_i B_i}\}_i$ between all of the pairs of causal diamonds in advance. However, this would result in each diamond having multiple halves of Bell pairs, and since each subsystem $A_i$ is maximally entangled with only one other subsystem $B_i$, the two called diamonds must somehow ensure that they return halves of the \textit{same} Bell pair. The goal of this section is to understand for which causal graphs this coordination problem can be solved. 

As in Section~\ref{sec:twosystems}, we restrict our attention to tasks with one-way causal connections between diamonds in this section. We discuss tasks with bidirectional causal connections in Section~\ref{sec:bidirected}.

\subsection{No assistance}

We begin with the unassisted case, where the only information given to Alice at each call point $c_j$ is whether a share of a Bell pair should be returned to the corresponding return point $r_j$. Our definition follows \cite{adlam2018relativistic}.\footnote{See Definition~2 therein. Note that in~\cite{adlam2018relativistic}, our ``entanglement summoning'' is referred to as ``entanglement distribution.''}
\begin{definition} \label{def:UES}
An \textit{unassisted entanglement summoning task} is specified by a set of causal diamonds $\{D_j\}_{j=1}^n$, each of which is defined by a call point $c_j$ and return point $r_j$. The task involves two agencies, Alice and Bob, which perform the following:
\begin{enumerate}[1.]
    \item At each $c_j$, Bob gives Alice a classical bit $b_j\in\{0,1\}$. Alice is promised that $b_{j^*} = b_{k^*} = 1$ for exactly two $j^*,k^* \in \{1,\dots, n\}$, but does not know the values of $j^*$ and $k^*$ in advance.
\end{enumerate}
The task is \emph{achievable} if for any choice of $j^*,k^* \in \{1,\dots,n\}$ by Bob, Alice can return a system $A$ at $r_{j^*}$ and a system $B$ at $r_{k^*}$ such that the state on $AB$ is the Bell pair $ \ket{\Phi}\equiv \frac{1}{\sqrt{2}}(\ket{00}+\ket{11})$.\footnote{The task is defined with respect to a Bell pair, but our conclusions are unchanged for any bipartite state $\rho$ (on any number of qubits) that is \emph{not 2-shareable} \cite{werner1990remarks,ranade2009symmetric,myhr2009spectrum}. A state $\rho$ is $2$-shareable if there exists a density matrix $\sigma_{ABB'}$ such that the density matrix on $AB$ and $AB'$ are both equal to $\rho$. This suffices for the proof of Lemma~\ref{lemma:monogamy} below, and the protocol in Theorem~\ref{thm:entanglementSummoningUnassisted}
can be adapted to distribute the state $\rho$.}
\end{definition}
As in the previous section, to determine what causal graphs correspond to achievable tasks, we start by analyzing tasks on three diamonds.
\begin{lemma}\label{lemma:monogamy}
Any unassisted entanglement summoning task corresponding to the ``two-out'' causal graph in Fig.~\ref{fig:twoOut}, or any spanning subgraph thereof, is unachievable. 
\end{lemma}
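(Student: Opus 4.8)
The plan is to mirror the no-cloning argument of Lemma~\ref{lemma:noTwoOutGraphs}, but replace the appeal to linearity with the monogamy of entanglement---concretely, the fact that the Bell pair $\ket{\Phi}$ is not $2$-shareable. Suppose for contradiction that some protocol achieves the task. The first step is to pin down the protocol's behaviour at $D_2$: since $D_1 \not\to D_2$ and $D_3 \not\to D_2$, we have $c_1 \not\prec r_2$ and $c_3 \not\prec r_2$, so nothing that happens at or after the other called diamond's call point can influence what is returned at $r_2$. Moreover, in the unassisted setting the call received at $c_2$ is just the bit $b_2 = 1$, so the operations performed in $D_2$ are completely fixed. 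Hence, whenever $D_2$ is called, the system handed in at $r_2$ is one fixed output subsystem of the protocol, independent of which other diamond is called; call this subsystem $A$.

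The second step is to record the two success conditions: when $\{D_1, D_2\}$ are called, $D_1$ returns a system $B$ with $\rho_{AB} = \proj{\Phi}$; and when $\{D_2, D_3\}$ are called, $D_3$ returns a system $B'$ with $\rho_{AB'} = \proj{\Phi}$. The third step is to let Bob cheat by making calls to all three diamonds simultaneously. The key point is that, at every relevant diamond, this run is causally indistinguishable from a legal two-call run: $D_2$'s operations are unchanged (as above); $D_1$'s operations are unchanged because $D_3 \not\to D_1$ (indeed $D_1 \not\sim D_3$), so the call to $D_3$ cannot reach $r_1$; and $D_3$'s operations are unchanged because $D_1 \not\to D_3$. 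Therefore the all-called run still produces $A$ at $r_2$, $B$ at $r_1$, $B'$ at $r_3$, and because the extra call cannot reach the relevant return points, the reduced state of the global output obeys $\rho_{AB} = \proj{\Phi}$ and $\rho_{AB'} = \proj{\Phi}$ simultaneously. Then $\rho_{ABB'}$ is a $2$-shareable extension of the Bell pair, contradicting non-$2$-shareability; hence no such protocol exists. Finally, since a spanning subgraph of the two-out graph only deletes edges, it only strengthens the non-connections $D_1,D_3 \not\to D_2$ and $D_1 \not\sim D_3$ used above, so the argument applies verbatim.

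I do not anticipate a real obstacle: the whole content is the observation that ``call all three'' looks locally legal at each diamond, which is forced by the causal structure. The only point requiring a little care is justifying that the system returned at $r_2$ is literally the same output subsystem in all three runs, and that calling the extra diamond does not perturb the marginals $\rho_{AB}$ and $\rho_{AB'}$---both of which follow from $c_1 \not\prec r_2$, $c_3 \not\prec r_2$, $c_3 \not\prec r_1$, and $c_1 \not\prec r_3$. One should also note (as in the footnote to Definition~\ref{def:UES}) that only non-$2$-shareability of $\ket{\Phi}$ is used, so the lemma extends to any non-$2$-shareable resource state.
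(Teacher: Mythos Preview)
Your proposal is correct and follows essentially the same argument as the paper: fix the output at $D_2$ using $D_1,D_3 \not\rightarrow D_2$, have Bob cheat by calling all three diamonds, and use $D_1 \not\sim D_3$ to force both $D_1$ and $D_3$ to return systems maximally entangled with $A$, contradicting monogamy. Your phrasing in terms of non-$2$-shareability is exactly the refinement the paper notes in the footnote to Definition~\ref{def:UES}, and your explicit remark about spanning subgraphs is a nice addition that the paper leaves implicit.
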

\begin{proof}
Suppose that there exists a protocol that completes such a task. This means that in the case where $D_2$ and $D_1$ are called, $D_2$ returns a system $A$ and $D_1$ returns a system $B_1$, where the state on $AB_1$ is $\ket{\Phi}$. Now, consider the case where $D_2$ and $D_3$ are called. Since $D_1, D_3 \not\rightarrow D_2$, the system that $D_2$ returns cannot be influenced by which of $D_1$ or $D_3$ is called. Therefore, $D_2$ must return $A$, as it does in the case where only $D_2$ and $D_1$ are called. Meanwhile, $D_3$ returns a system $B_3$ such that the state on $AB_3$ is $\ket{\Phi}$. 

Suppose that Bob cheats by making calls to all three diamonds. Since $D_1, D_3 \not\rightarrow D_2$, we know that $D_2$ still returns $A$. Also, $D_1 \not\sim D_3$, so $D_1$ returns $B_1$ such that the state on $AB_1$ is $\ket{\Phi}$, as it would in the case where only $D_2$ and $D_1$ are called. By the same argument, $D_3$ returns $B_3$ such that the state on $AB_3$ is $\ket{\Phi}$. But by the monogamy of entanglement, $B_1$ and $B_3$ cannot both be maximally entangled with $A$. Therefore, no such protocol exists.
\end{proof}

Lemma~\ref{lemma:monogamy} helps us prove Theorem~\ref{thm:entanglementSummoningUnassisted} below, which provides necessary and sufficient conditions for the achievability of unassisted entanglement summoning tasks on any number of diamonds.
To understand the intuition behind these conditions, consider the information that is available to one of the two called diamonds, say $D_{j^*}$. In a generic protocol, several Bell pairs $\ket{\Phi}_{A_iB_i}$ will have been distributed beforehand, so $D_{j^*}$ may have shares of multiple Bell pairs. Of these shares, $D_{j^*}$ must return the one that is entangled with the share returned by the other called diamond, $D_{k^*}$. In the case where $D_{k^*} \rightarrow D_{j^*}$, Alice's agent at (the return point of) $D_{j^*}$ would see that the other call was made to $D_{k^*}$, and would therefore return their share of the Bell pair shared with $D_{k^*}$. However, if $D_{k^*} \not\rightarrow D_{j^*}$, Alice's agent at $D_{j^*}$ potentially encounters a coordination issue. In this case, the agent sees that none of the diamonds $D_i$ for which $D_i \to D_{j^*}$ have been called, so they know that the other called diamond is in the set
$\mathcal{S}_{j^*} \equiv \{D_i : D_i \not\rightarrow D_{j^*}\}$, 
but does not know which one it is. Hence, it is natural to expect that in any successful protocol, there is a system $A_{j^*}$ such that whenever one of the diamonds in $\mathcal{S}_{j^*}$ is called, that diamond returns $A_{j^*}$. The agent at $D_{j^*}$ can then confidently return the system $B_{j^*}$ for which the state on $A_{j^*}B_{j^*}$ is $\ket{\Phi}$
. Since any pair of diamonds may be called, this argument applies to every $j^* \in \{1,\dots, n\}$. 

The above reasoning motivates a necessary condition: for each $j \in \{1,\dots, n\}$, it should be possible to complete a single-system summoning task on the diamonds in $\mathcal{S}_j$. By Theorem~\ref{thm:single}, this requires each pair of diamonds in $\mathcal{S}_j$ to be causally connected. Equivalently, the subgraph of the full causal graph induced by $\mathcal{S}_j$ must be a tournament. 
We prove that this condition is in fact necessary and sufficient, establishing the following theorem.
\begin{theorem}[unassisted entanglement summoning]\label{thm:entanglementSummoningUnassisted} 
An unassisted entanglement summoning task whose causal graph $G_{\mathcal{D}} = (\mathcal{D},E)$ is an oriented graph is achievable if and only if for every $D_j \in \mathcal{D}$, the subgraph $G_{\mathcal{S}_j}$ of $G_{\mathcal{D}}$ induced by
\begin{equation} \label{Sj}
    \mathcal{S}_j \equiv \{D_i \in \mathcal{D}: D_i \not\rightarrow D_j\}
\end{equation}
is a tournament.
\end{theorem}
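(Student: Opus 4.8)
The plan is to prove both directions separately, handling necessity via a monogamy-of-entanglement argument generalizing Lemma~\ref{lemma:monogamy}, and sufficiency by an explicit protocol built out of single-system summoning subroutines (Theorem~\ref{thm:single}) run on the sets $\mathcal{S}_j$.

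For necessity, suppose some $\mathcal{S}_j$ fails to induce a tournament, so there exist $D_k, D_\ell \in \mathcal{S}_j$ with $D_k \not\sim D_\ell$. Assume toward a contradiction that a successful protocol exists. The key structural fact to extract is that, because $D_k, D_\ell \not\to D_j$, whatever system $D_j$ returns when called alongside $D_k$ cannot be affected by whether the other call went to $D_k$ or to $D_\ell$ (its call point sees neither); so $D_j$ returns ``the same'' system $A_j$ in the $(D_j,D_k)$-call and the $(D_j,D_\ell)$-call, and by correctness, $D_k$ returns a purification $B_k$ of $A_j$ and $D_\ell$ returns a purification $B_\ell$ of $A_j$. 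Now let Bob cheat and call $D_j$, $D_k$, and $D_\ell$ simultaneously. Since $D_k \not\sim D_\ell$, the behaviour at $D_k$ is unaffected by the call to $D_\ell$ and vice versa, and the behaviour at $D_j$ is unaffected by which of $D_k, D_\ell$ was additionally called; hence $D_j$ still returns $A_j$, $D_k$ still returns $B_k$ maximally entangled with $A_j$, and $D_\ell$ still returns $B_\ell$ maximally entangled with $A_j$ — contradicting monogamy of entanglement (the footnote's $2$-shareability hypothesis is what one actually invokes). One subtlety to handle carefully: ``returns the same system $A_j$'' should be phrased at the level of the reduced state on the relevant subsystem, since the global state differs; the cleanest framing is that the channel implemented by the protocol's agents in the causal past of $r_j$, restricted to the inputs those agents see, is identical across the two scenarios, so the output state at $r_j$ (and its correlations with everything spacelike-or-causally-independent) is identical. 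This step — making the ``same system'' claim rigorous while only invoking causal independence, not the full classification of $G_{\mathcal{D}}$ — is the main obstacle.

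For sufficiency, assume every $G_{\mathcal{S}_j}$ is a tournament. I would build a protocol as follows. For each $j$, since $G_{\mathcal{S}_j}$ is a tournament and we may place the state source in the distant past, by Theorem~\ref{thm:single} there is a single-system summoning protocol $\Pi_j$ that summons one half of a Bell pair to whichever diamond in $\mathcal{S}_j$ gets called — note that $D_j \in \mathcal{S}_j$ always (since $D_j \not\to D_j$), so $D_j$ is among the possible recipients. The protocol: for each $j$, prepare a fresh Bell pair $\ket{\Phi}_{A_j B_j}$, distribute $A_j$ as the input of $\Pi_j$ (run over the diamonds $\mathcal{S}_j$, treating a ``call'' to a diamond in $\mathcal{S}_j$ as the $\Pi_j$-call) and route $B_j$ toward $D_j$ itself, arriving at $r_j$. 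Now suppose Bob calls $D_{j^*}$ and $D_{k^*}$. Without loss of generality the causal graph has $D_{j^*} \not\to D_{k^*}$ or $D_{k^*}\not\to D_{j^*}$ in at least one direction; say $D_{j^*} \not\to D_{k^*}$, i.e.\ $D_{j^*} \in \mathcal{S}_{k^*}$. Then $D_{k^*}$ returns $B_{k^*}$ (routed to it directly), and we need $D_{j^*}$ to return $A_{k^*}$: this is exactly what $\Pi_{k^*}$ delivers, since $D_{j^*} \in \mathcal{S}_{k^*}$ is called and none of the $\Pi_{k^*}$-diamonds that causally precede $D_{j^*}$ inside $\mathcal{S}_{k^*}$ were called (a diamond $D_i$ with $D_i \to D_{j^*}$ is not in $\mathcal{S}_{j^*}$... here one must check $D_i$'s membership/call-status within $\mathcal{S}_{k^*}$ correctly — the point is that among the called diamonds, only $D_{j^*}$ and $D_{k^*}$ are called, $D_{k^*}\notin\mathcal{S}_{k^*}$, so within $\mathcal{S}_{k^*}$ exactly the single diamond $D_{j^*}$ receives the $\Pi_{k^*}$-call, which is precisely the single-call promise that $\Pi_{k^*}$ requires). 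Hence $A_{k^*}$ arrives at $r_{j^*}$ and $B_{k^*}$ at $r_{k^*}$, giving the Bell pair $\ket{\Phi}_{AB}$ as required. The remaining case (both $D_{j^*}\to D_{k^*}$ fails and $D_{k^*}\to D_{j^*}$ fails, i.e.\ $D_{j^*}\not\sim D_{k^*}$) is subsumed: either direction of membership works, and if both $D_{j^*}\in\mathcal{S}_{k^*}$ and $D_{k^*}\in\mathcal{S}_{j^*}$ one simply fixes a convention (use $\Pi_{k^*}$, say) to avoid both agents trying to return the direct copy. I would also remark that since the state is known, arbitrarily many copies can be pre-distributed, so there is no no-cloning obstruction to running all the $\Pi_j$ in parallel, and the Pauli corrections from teleportations internal to each $\Pi_j$ are applied at the return points exactly as in Theorem~\ref{thm:single}.

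I expect the bookkeeping in the sufficiency direction — verifying that, for the actually-called pair, exactly one diamond inside the relevant $\mathcal{S}_{k^*}$ receives a $\Pi_{k^*}$-call, and choosing a tie-breaking convention when $D_{j^*}\not\sim D_{k^*}$ — to be routine but fiddly, and the conceptual crux to be the necessity argument's ``same returned system'' step, which I would state as a small lemma about causally-independent inputs before deploying monogamy.
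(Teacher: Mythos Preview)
Your necessity argument matches the paper's: both reduce to the three-diamond monogamy obstruction of Lemma~\ref{lemma:monogamy} (if $D_l, D_m \in \mathcal{S}_k$ are disconnected, the subgraph on $\{D_k, D_l, D_m\}$ is a spanning subgraph of the ``two-out'' graph).

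Your sufficiency protocol, however, has a genuine gap in the case $D_{j^*} \not\sim D_{k^*}$, and ``fix a convention'' does not repair it. When $D_{j^*}$ is called and sees that no $D_i$ with $D_i \to D_{j^*}$ was called, it cannot distinguish between (a) the other called diamond being its disconnected partner $D_{k^*}$, and (b) the other called diamond being some $D_i$ with $D_{j^*} \to D_i$ (so $D_i \not\to D_{j^*}$ by orientedness). Hence $D_{j^*}$ must return the \emph{same} system in both scenarios. If that system is $B_{j^*}$, case (b) works ($D_i$ sees $D_{j^*}$'s call and returns $A_{j^*}$), but case (a) fails: $D_{k^*}$ is in the symmetric situation and returns $B_{k^*}$, and $B_{j^*}$, $B_{k^*}$ are halves of \emph{different} Bell pairs. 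Declaring a tie-break so that $D_{j^*}$ instead returns $A_{k^*}$ breaks case (b), since $D_i$ would then need the partner of $A_{k^*}$, namely $B_{k^*}$, which in your protocol was routed only to $r_{k^*}$ and is unavailable at $D_i$.

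The paper's Protocol~\ref{UESprotocol} resolves this with an idea you are missing. It first observes that the hypothesis forces each diamond to be disconnected from at most one other, yielding a pairing $\mathcal{X}_{\textrm{part}}$. For each disconnected pair $\{D_j, D_{j'}\}$ it prepares a \emph{single} Bell pair $\ket{\Phi}_{A_j A_{j'}}$ and cross-summons: $A_j$ through $\mathcal{S}_{j'}$, and $A_{j'}$ through $\mathcal{S}_j$. The uniform rule ``if called and no incoming call seen, return $A_j$'' then handles both scenarios above: in (a), $D_{j'}$ returns $A_{j'}$, entangled with $A_j$; in (b), $D_i$ sees $D_j$'s call and returns $A_{j'}$, which is available at $D_i$ because $A_{j'}$ was summoned through $\mathcal{S}_j \ni D_i$. (Proposition~\ref{prop:twocallsinglesystem} handles the subcase where both called diamonds lie in the same $\mathcal{S}_\ell$.) As a minor aside, your proposal is internally inconsistent about whether $D_j \in \mathcal{S}_j$; the paper takes $D_j \notin \mathcal{S}_j$, and under the opposite reading the hypothesis would already force $G_{\mathcal{D}}$ to be a tournament, trivializing the very case you try to patch.
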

\begin{proof}
We prove necessity by contradiction. Suppose that for some $k \in \{1,\dots, n\}$, $G_{\mathcal{S}_k}$ is not a tournament, i.e., there exist diamonds $D_l, D_m \in \mathcal{S}_k$ that are not causally connected, $D_l \not\sim D_m$. By the definition of $\mathcal{S}_k$, we also have $D_l \not\rightarrow D_k$ and $D_m \not\rightarrow D_k$. Hence, the subgraph of $G_{\mathcal{D}}$ induced by $\{D_k, D_l, D_m\}$ must be a subgraph of the ``two-out'' graph with $D_k \to D_l$ and $D_k \to D_m$. By Lemma~\ref{lemma:monogamy}, any unassisted entanglement summoning task corresponding to such a graph is unachievable. It follows that any task corresponding to the full causal graph $G_{\mathcal{D}}$ is unachievable.

We prove sufficiency below in \S~\ref{sec:UESprotocol}, by constructing a protocol for completing any unassisted entanglement summoning task satisfying the conditions of the theorem. 
\end{proof}

Of course, any task with bidirectional causal connections whose causal graph can be obtained by adding edges to an oriented graph satisfying the conditions of Theorem~\ref{thm:entanglementSummoningUnassisted} is also achievable.

\subsubsection{Protocol for unassisted entanglement summoning} \label{sec:UESprotocol}~\\

Consider an unassisted entanglement summoning task, specified by a set of causal diamonds $\mathcal{D} = \{D_j\}_{j=1}^n$, that satisfies the conditions of Theorem~\ref{thm:entanglementSummoningUnassisted}. Namely, the corresponding causal graph $G_{\mathcal{D}}$ is an oriented graph, and the induced subgraph $G_{\mathcal{S}_j}$ for every $j \in \{1,\dots, n\}$ is a tournament. We provide a protocol (Protocol~\ref{UESprotocol}) that completes any such task.

First, note that the conditions of Theorem~\ref{thm:entanglementSummoningUnassisted} imply that for each $D_j \in \mathcal{D}$, there is at most one $D_{j'} \in \mathcal{D}$ for which $D_j \not\sim D_{j'}$, i.e., every diamond is only causally disconnected from at most one other diamond. To see this, suppose to the contrary that $D_j \not\sim D_{j'}$ and $D_j \not\sim D_{j''}$ for $j' \neq j''$. Then, by Eq.~\eqref{Sj}, $D_{j'}, D_{j''} \in \mathcal{S}_j$, and since $\mathcal{S}_j$ must be a tournament, we must have $D_{j'} \sim D_{j''}$.  Since $G_{\mathcal{D}}$ is an oriented graph, either $D_{j'} \to D_{j''}$ or $D_{j''} \to D_{j'}$, but not both. In the former case, $D_j, D_{j''} \in \mathcal{S}_{j'}$, and $D_j \not\sim D_{j''}$ implies that $\mathcal{S}_{j'}$ is not a tournament, while in the latter case, $D_j, D_{j'} \in \mathcal{S}_{j''}$, and $D_j \not\sim D_{j'}$ implies that $\mathcal{S}_{j''}$ is not a tournament.

Hence, defining the subset
\begin{equation} \mathcal{X}  \equiv \{D_j \in \mathcal{D}: \exists D_i \in \mathcal{D} \text{ for which } D_j \not\sim D_i\}, \end{equation}
we know that for every $D_j \in \mathcal{X}$, $D_j \not\sim D_{j'}$ for exactly one $D_{j'} \in \mathcal{X}$. This means that $\mathcal{X}$ can be partitioned into pairs of diamonds $\{D_j, D_{j'}\}$, where $D_j \not\sim D_{j'}$, but $D_j \sim D_k$ and $D_{j'} \sim D_k$ for all $k \not\in \{j,j'\}$. We denote this partition by $\mathcal{X}_{\textrm{part}}$.

In the protocol below, we will use single-system summoning as a black-box subroutine.\footnote{One can use one of the explicit single-system summoning protocols described in \cite{hayden2016summoning,hayden2016spacetime,wu2018efficient}.} Recall from Theorem~\ref{thm:single} that a single-system summoning task on a set of causal diamonds $\mathcal{S}$ is achievable if the causal graph $G_{\mathcal{S}}$ corresponding to $\mathcal{S}$ is a tournament. Note that as it is defined in Definition~\ref{def:singlesystemsummoning}, the single-system summoning task assumes that only one of the diamonds is called, i.e., $b_i = 1$ for exactly one $D_i \in \mathcal{S}$. For an entanglement summoning task specified by a set of diamonds $\mathcal{D}$, several steps of our protocol involve ``summoning'' a system $A$ through a subset of diamonds $\mathcal{S}\subset \mathcal{D}$. This means that we define a single-system summoning task on $\mathcal{S}$ with input system $A$, using the bits $b_i$ that are given to $D_i \in \mathcal{S}$ for the entanglement summoning task as the input bits for the single-system summoning task. When the causal graph $G_{\mathcal{S}}$ is a tournament and when only one diamond in $\mathcal{S}$ is called, this is guaranteed to bring $A$ to the return point of the called diamond. In entanglement summoning, however, there are two input bits with value $1$ since two diamonds are called. If both called diamonds are in $\mathcal{S}$, it is not immediately clear how the single-system summoning protocol will behave. Fortunately, in the case where $G_{\mathcal{S}}$ is a tournament (which has no bidirected edges), the behaviour is quite simple, as given by the following proposition.
\begin{proposition}\label{prop:twocallsinglesystem}
Let $\mathcal{S}$ be a set of causal diamonds whose corresponding causal graph is a tournament. If calls are made to two diamonds $D_j, D_k \in \mathcal{S}$, with $D_j \to D_k$, executing a single-system summoning protocol on $\mathcal{S}$ returns the input system $A$ at the return point of $D_j$.
\end{proposition}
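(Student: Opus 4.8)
The plan is to prove this by a no-signalling argument that reduces the two-call scenario to the one-call scenario for which the single-system summoning protocol is guaranteed correct. The only causal input we need is that $D_j\to D_k$ rules out $D_k\to D_j$: since $G_{\mathcal{S}}$ is a tournament it is an oriented graph, so at most one of $(D_j,D_k)$ and $(D_k,D_j)$ is an edge; given $D_j\to D_k$ we conclude $D_k\not\to D_j$, and recalling that $D_i\to D_{i'}$ iff $c_i\prec r_{i'}$, this says $c_k\not\prec r_j$, i.e.\ $r_j\notin J^+(c_k)$.

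Next I would set up the two runs of the protocol to be compared. Let $\sigma_1$ be the run of interest: $b_j=b_k=1$ and $b_i=0$ for every other $D_i\in\mathcal{S}$. Let $\sigma_0$ be the run with $b_j=1$ and $b_i=0$ for all $D_i\in\mathcal{S}\setminus\{D_j\}$ (in particular $b_k=0$). These two runs are identical except for the single classical bit handed to Alice at the call point $c_k$.

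The crux is then the causality step: the system Alice hands in at $r_j$ is produced entirely by operations her agents perform at spacetime points in $J^-(r_j)$, and those agents can condition only on classical data that has reached $J^-(r_j)$ — namely the bits $b_i$ with $c_i\prec r_j$ (equivalently $D_i\to D_j$) together with $b_j$ itself. Since $c_k\notin J^-(r_j)$, no agent relevant to the output at $r_j$ ever sees $b_k$; hence, by the no-signalling principle built into the quantum tasks framework (a local choice at $c_k$ cannot influence anything outside $J^+(c_k)$, and $r_j\notin J^+(c_k)$), the system returned at $r_j$, together with its joint state with the reference $R$, is exactly the same in $\sigma_0$ and in $\sigma_1$. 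In $\sigma_0$ exactly one diamond of $\mathcal{S}$ is called, namely $D_j$, so by the correctness of the single-system summoning protocol [cf.~Definition~\ref{def:singlesystemsummoning} and Theorem~\ref{thm:single}] the system returned at $r_j$ in $\sigma_0$ is the input system $A$ (i.e.\ the state on the returned system and $R$ equals the original $\ket{\Psi}_{AR}$). Combining the two statements, $A$ is returned at $r_j$ in $\sigma_1$ as well, which is the claim.

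I expect the only delicate point to be the causality step — formally justifying that what is handed in at $r_j$ depends on the inputs only through the bits at call points in $J^-(r_j)$, and hence is insensitive to $b_k$. I would present this as an instance of the same relativistic-causality reasoning already invoked in the unachievability proofs (e.g.\ in Lemma~\ref{lemma:noTwoOutGraphs}, where the behaviour at a diamond is argued to be unaffected by a call it cannot receive), so it requires no new machinery; everything else is bookkeeping about which call points lie in $J^-(r_j)$.
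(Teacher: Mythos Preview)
Your proposal is correct and follows essentially the same approach as the paper's proof: both reduce the two-call run to the one-call run via the causality/no-signalling observation that since $D_k\not\to D_j$ (using that the tournament is oriented), the output at $r_j$ cannot depend on the bit at $c_k$, and then invoke correctness of single-system summoning in the one-call case. Your write-up is simply more explicit about the two comparison runs and about why $c_k\notin J^-(r_j)$, but the underlying argument is identical.
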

\begin{proof}
$\mathcal{S}$ satisfies Condition~(ii) of Theorem~\ref{thm:single}, so when there is a call only to $D_j$ (and not to $D_k$), single-system summoning successfully returns $A$ at the return point of $D_j$. Since $D_k \not\rightarrow D_j$, what is returned at $D_j$ cannot be causally influenced by whether $D_k$ is called. Therefore, when calls are made to both $D_j$ and $D_k$, $A$ must still be returned at $D_j$. 
\end{proof}

With these preliminaries in hand, we can describe our protocol for unassisted entanglement summoning.

\setcounter{protocol}{1}
\begin{protocol}~\label{UESprotocol}~\\
\noindent \ul{Preparation:}
\begin{itemize}
    \item For each pair of diamonds $\{D_j,D_{j'}\} \in \mathcal{X}_{\textrm{part}}$, prepare a Bell pair  $\ket{\Phi}_{A_{j}A_{j'}}$ in the distant past. 
    \item For each diamond $D_k \in \mathcal{D} \setminus \mathcal{X}$, prepare a Bell pair $\ket{\Phi}_{A_kB_k}$ in the distant past.
    \item Prepare the resources for performing a single-system summoning task on $\mathcal{S}_j$ for each $j \in \{1,\dots,n\}$.
\end{itemize}
\ul{Execution:}
\begin{itemize}
    \item At the call point of each $D_i \in \mathcal{D}$, broadcast the call information $b_i$ to the return point of every $D_j$ for which $D_i \to D_j$.
    \item For $\ket{\Phi}_{A_{j}A_{j'}}$ associated to each pair $\{D_j, D_{j'}\} \in \mathcal{X}_{\textrm{part}}$, summon $A_{j}$ through $\mathcal{S}_{j'}$ and $A_{j'}$ through $\mathcal{S}_{j}$.
    \item For $\ket{\Phi}_{A_kB_k}$ associated to each $D_k\in \mathcal{D} \setminus \mathcal{X}$, send $A_k$ to $D_k$ and summon $B_k$ through $\mathcal{S}_k$. 
    \item At the return point of each $D_j \in \mathcal{D}$: 
    \begin{itemize}
        \item If $D_j$ was called (i.e., $b_j = 1$), check the call information $b_i$ from every $D_i$ such that $D_i\rightarrow D_j$.
    \begin{itemize}
        \item If $b_i = 1$ for some $i$ such that $D_i\rightarrow D_j$, return $A_{i'}$ if $D_i \in \mathcal{X}$ (where $i'$ is such that $\{D_i,D_{i'}\} \in \mathcal{X}_{\textrm{part}}$) or return $B_{i}$ if $D_i \in \mathcal{D} \setminus \mathcal{X}$.\footnote{We show below that the required system will be available at $D_j$.}
        \item Else, if $b_i = 0$ for all $i$ such that $D_i\rightarrow D_j$, return $A_j$.
    \end{itemize}
    \end{itemize} 
\end{itemize}
\end{protocol}

Note that since the entanglement summoning task is assumed to satisfy the conditions of Theorem~\ref{thm:entanglementSummoningUnassisted}, the causal graph corresponding to $\mathcal{S}_j$ for every $j \in \{1,\dots, n\}$ is a tournament, so single-system summoning on $\mathcal{S}_j$ is achievable by Theorem~\ref{thm:single}. To verify this protocol, we must consider all possible choices of called diamonds. Let $D_{j}$ and $D_{k}$ be the two called diamonds, i.e., $b_{j} = b_{k} = 1$. Either $D_{j} \not\sim D_{k}$ or $D_{j} \sim D_{k}$. 

Consider first the case where $D_{j} \not\sim D_{k}$. This implies that $\{D_{j},D_{k}\} \in \mathcal{X}_{\textrm{part}}$, so a Bell pair $\ket{\Phi}_{A_{j}A_{k}}$ is prepared in the preparation phase. $A_{j}$ is summoned through $\mathcal{S}_{k}$ and $A_{k}$ through $\mathcal{S}_{j}$.  Recalling the definition of $\mathcal{S}_i$ from Eq.~\eqref{Sj}, we have that $D_{k} \in \mathcal{S}_{j}$ and $D_{j} \in \mathcal{S}_{k}$. Moreover, since $D_{j} \not\in \mathcal{S}_{j}$ and $D_{k}\not\in \mathcal{S}_{k}$, there is only one called diamond in each of the sets $\mathcal{S}_{j}$ and $\mathcal{S}_{k}$. Hence, summoning $A_{j}$ through $\mathcal{S}_{k}$ successfully brings $A_{j}$ to $D_{j}$, and summoning $A_{k}$ through $\mathcal{S}_{j}$ brings $A_{k}$ to $D_{k}$. Then, since $D_{j}$ sees that $b_i = 0$ for all $i$ such that $D_i \to D_{j}$, $D_{j}$ returns $A_{j}$ as per the protocol. Similarly, since $b_i = 0$ for all $i$ such that $D_i \to D_{k}$, $D_{k}$ returns $A_{k}$. This successfully completes the task, as the state on $A_{j}A_{k}$ is indeed $\ket{\Phi}$.

Next, consider the case where $D_{j} \sim D_{k}$. Suppose that $D_{j} \to D_{k}$ without loss of generality. We further divide this into two subcases, 1)~$D_{j} \in \mathcal{X}$ and 2)~$D_{j} \in \mathcal{D}\setminus \mathcal{X}$.
\begin{enumerate}[1)]
    \item $D_{j} \in \mathcal{X}$: Since $D_{j} \in \mathcal{X}$, there exists a (unique) $j' \in \{1,\dots, n\}$ such that $D_j \not\sim D_{j'}$, i.e., $\{D_j, D_{j'}\} \in \mathcal{X}_{\textrm{part}}$. The protocol therefore prepares a Bell pair $\ket{\Phi}_{A_j A_{j'}}$ during the preparation phase, and summons $A_j$ through $\mathcal{S}_{j'}$ and $A_{j'}$ through $\mathcal{S}_j$. $D_j \to D_k$ (together with the fact that $G_{\mathcal{D}}$ is an oriented graph) implies that $D_k \in \mathcal{S}_j$, and clearly, $D_j \not\in \mathcal{S}_j$. Hence, there is only one called diamond in $\mathcal{S}_j$, so summoning $A_{j'}$ through $\mathcal{S}_j$ successfully brings $A_{j'}$ to $D_k$. As for $\mathcal{S}_{j'}$, $D_j \in \mathcal{S}_{j'}$ by definition of $j'$, but it is possible that $k \in \mathcal{S}_{j'}$ as well. However, $D_j \to D_k$ by assumption, so by Proposition \ref{prop:twocallsinglesystem}, summoning $A_j$ through $\mathcal{S}_{j'}$ brings $A_j$ to $D_j$ even if $D_k \in \mathcal{S}_{j'}$. Then, since $D_j$ sees that $b_i = 0$ for all $D_i \to D_j$, $D_j$ returns $A_j$. Since $D_k$ sees that $b_j = 1$ and since $D_j \in \mathcal{X}$, $D_k$ returns $A_{j'}$.
    \item $D_j \in \mathcal{D} \setminus \mathcal{X}$: Since $D_j \in \mathcal{D} \setminus \mathcal{X}$, the protocol prepares a Bell pair $\ket{\Phi}_{A_jB_j}$ during the preparation phase. $A_j$ is sent to $D_j$ while $B_j$ is summoned through $\mathcal{S}_j$. $D_k \in \mathcal{S}_j$ and $D_j \not\in\mathcal{S}_j$ as in the previous subcase, so $D_k$ receives $B_j$. Since $D_j$ sees that $b_i = 0$ for all $D_i \to D_j$, $D_j$ returns $A_j$. Since $D_k$ sees that $b_j = 1$ and since $D_j \in \mathcal{D}\setminus \mathcal{X}$, $D_k$ returns $B_j$. 
\end{enumerate}
Thus, Protocol~\ref{UESprotocol} succeeds in every case, proving that any unassisted entanglement task satisfying the conditions of Theorem~\ref{thm:entanglementSummoningUnassisted} is achievable.

\subsection{Assistance through labelled calls}

We can ask whether giving Alice additional information makes the task easier. As with two-system summoning, we can consider labelling the calls to break the symmetry of the two calls. We saw in \S\ref{sec:LABS} that in the case two-system summoning, this type of assistance enlarges the set of achievable tasks (relative to the unassisted scenario). In entanglement summoning, on the other hand, assistance through labelled calls does not make any additional tasks achievable. 

For completeness, we define and characterize label-assisted entanglement summoning.
\begin{definition}
A \emph{label-assisted entanglement summoning task} is specified by a set of causal diamonds $\{D_j\}_{j=1}^n$, each of which is defined by a call point $c_j$ and return point $r_j$. The task involves two agencies, Alice and Bob, which perform the following.
\begin{enumerate}
    \item At each $c_j$, Bob gives Alice a number $b_j\in\{0,1,2\}$. Alice is promised that $b_{j^*} = 1$ for exactly one $j^*\in \{1,\dots, n\}$ and that $b_{k^*} = 2$ for exactly one $k^*\in \{1,\dots,n\}$, but does not know the values of $j^*$ and $k^*$ in advance.
\end{enumerate}
The task is \emph{achievable} if for any choice of $j^*,k^* \in \{1,\dots,n\}$ by Bob, Alice can return a system $A$ at $r_{j^*}$ and a system $B$ at $r_{k^*}$ such that the state on $AB$ is the Bell pair $\ket{\Phi}$.
\end{definition}

Restricting to one-way causal connections, the set of achievable label-assisted entanglement summoning tasks coincides with the set of achievable unassisted entanglement summoning tasks. Thus, if an entanglement summoning task is unachievable in the unassisted context, it remains unachievable even when we label the calls.
\begin{theorem}[label-assisted entanglement summoning]
A label-assisted entanglement summoning task whose causal graph $G_{\mathcal{D}}$ is an oriented graph is achievable if and only if $G_{\mathcal{D}}$ satisfies the conditions of Theorem~\ref{thm:entanglementSummoningUnassisted}.
\end{theorem}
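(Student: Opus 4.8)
The plan is to prove both directions by reduction to results already established: Theorem~\ref{thm:entanglementSummoningUnassisted} handles sufficiency, and a labelled re-run of the monogamy argument of Lemma~\ref{lemma:monogamy} handles necessity. For sufficiency, suppose $G_{\mathcal{D}}$ satisfies the conditions of Theorem~\ref{thm:entanglementSummoningUnassisted}. Then the \emph{unassisted} task on $\mathcal{D}$ is achievable via Protocol~\ref{UESprotocol}. To complete the label-assisted task, Alice simply discards the label data: she replaces any received value $b_j \in \{1,2\}$ by $b_j = 1$ and then runs Protocol~\ref{UESprotocol}. The label-assisted promise (exactly one call point with $b = 1$ and exactly one with $b = 2$) then collapses to the unassisted promise (exactly two calls), so Protocol~\ref{UESprotocol} returns a Bell pair across the two called diamonds. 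Hence every label-assisted task that meets the conditions is achievable; in particular, labelled calls never make an entanglement summoning task harder than its unassisted counterpart.

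For necessity I would argue the contrapositive. If $G_{\mathcal{D}}$ fails the conditions of Theorem~\ref{thm:entanglementSummoningUnassisted}, then by the definition~\eqref{Sj} there is a diamond $D_k$ such that $G_{\mathcal{S}_k}$ is not a tournament, i.e., there exist $D_l, D_m \in \mathcal{S}_k$ with $D_l \not\sim D_m$. Since $D_l, D_m \not\rightarrow D_k$ and $D_l \not\sim D_m$, the subgraph of $G_{\mathcal{D}}$ induced by $\{D_k, D_l, D_m\}$ is a spanning subgraph of the ``two-out'' graph of Fig.~\ref{fig:twoOut}, with $D_k$ as the common source. Thus it suffices to establish a label-assisted analogue of Lemma~\ref{lemma:monogamy}: any label-assisted entanglement summoning task corresponding to the ``two-out'' graph, or a spanning subgraph thereof, is unachievable. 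Unachievability of the full task then follows, since a vertex-induced subgraph corresponding to an unachievable task makes the whole task unachievable.

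To prove that analogue I would copy the structure of Lemma~\ref{lemma:monogamy} with labels inserted. Assume a protocol exists. Because $D_l, D_m \not\rightarrow D_k$, what $D_k$ returns is a function only of $D_k$'s local data --- in particular of the label it receives --- and is independent of which of $D_l, D_m$ receives the complementary label. So if Bob gives a call labelled $2$ to $D_k$, then $D_k$ returns some fixed system, call it $A$, regardless of whether the label $1$ is assigned to $D_l$ or to $D_m$. Hence calling $D_k$ with label $2$ and $D_l$ with label $1$ forces $D_l$ to return a system $B_l$ with the state on $AB_l$ equal to $\ket{\Phi}$, and calling $D_k$ with label $2$ and $D_m$ with label $1$ forces $D_m$ to return $B_m$ with the state on $AB_m$ equal to $\ket{\Phi}$. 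Now let Bob cheat by giving $D_k$ the label $2$ and giving \emph{both} $D_l$ and $D_m$ the label $1$. Since $D_l \not\sim D_m$, neither diamond can detect the other's call, so $D_l$ still returns $B_l$, $D_m$ still returns $B_m$, and $D_k$ still returns $A$; but then $A$ is maximally entangled with both $B_l$ and $B_m$, contradicting the monogamy of entanglement. As in Lemma~\ref{lemma:monogamy}, this cheating call pattern lies outside the task's promise, but a legitimate protocol cannot be coaxed into violating a law of physics, so no such protocol exists. I do not expect a genuine obstacle here: the only points requiring care are that the label received by $D_k$ is part of its local data (so the independence step is applied at fixed $D_k$-label rather than naively), and that the bad three-vertex configuration arises exactly when the condition of Theorem~\ref{thm:entanglementSummoningUnassisted} fails, which is immediate from the definition of $\mathcal{S}_k$.
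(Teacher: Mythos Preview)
Your proposal is correct and follows essentially the same route as the paper. The paper's proof is terser---it simply asserts that ``the same argument as in the proof of Lemma~\ref{lemma:monogamy}'' yields the labelled two-out obstruction and then invokes the necessity half of Theorem~\ref{thm:entanglementSummoningUnassisted}, while for sufficiency it notes that one can run the unassisted protocol ignoring labels---but your expanded account (fixing $D_k$'s label and duplicating the complementary label at $D_l$ and $D_m$) is exactly the content that ``same argument'' is meant to convey, and matches the pattern the paper uses explicitly in Lemma~\ref{lemma:LabelAssisted-NoTwoOut}.
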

\begin{proof}
The same argument as in the proof of Lemma~\ref{lemma:monogamy} gives that any label-assisted entanglement summoning task corresponding to the ``two-out graph'' in Fig.~\ref{fig:twoOut}, or any spanning subgraph thereof, is unachievable. It then follows directly from the proof of Theorem~\ref{thm:entanglementSummoningUnassisted} that the same necessary conditions applies.

\S\ref{sec:UESprotocol} showed that any unassisted entanglement summoning task whose causal graph satisfies the conditions of Theorem~\ref{thm:entanglementSummoningUnassisted} is achievable by providing a protocol. Clearly, any task that is achievable in the unassisted setting, where the calls are not labelled, is still achievable if labels are added to the calls. One can simply use the protocol for the unassisted task, ignoring the labels.
\end{proof}

\subsection{Assistance through global calls}

In a global-assisted entanglement summoning task, all diamonds except two receive $0$ at the call points, as in the previous scenarios, but the two called diamonds $D_{j^*}$ and $D_{k^*}$ receive the tuple $(j^*,k^*)$. That is, each of the two calls reveal the identity of \emph{both} diamonds which should return halves of a Bell pair. In this case, the entanglement summoning task becomes trivial---any task is achievable, regardless of the causal structure of the set of diamonds. To complete a task on an arbitrary set of diamonds $\mathcal{D}$, prepare a Bell pair $\ket{\Phi}_{A_{jk}B_{jk}}$ for every pair of diamonds $D_j, D_k \in \mathcal{D}$ with $j < k$, and send $A_{jk}$ to to $D_j$ and $B_{jk}$ to $D_k$. Then, at the return point of each $D_i \in \mathcal{D}$, if $D_i$ was called, i.e., if $i \in \{j^*,k^*\}$, returns the system indexed by $j^*k^*$; if $D_i$ was not called, return nothing.

\section{Tasks with bidirectional causal connections}\label{sec:bidirected}

In Sections~\ref{sec:twosystems} and~\ref{sec:entanglementSummoning}, we focused on two-system and entanglement summoning tasks corresponding to oriented graphs, i.e., graphs with no bidirected edges. For any pair of causal diamonds $D_j$ and $D_k$ in such a task, either $D_j \to D_k$ or $D_k \to D_j$, but not both. However, it is certainly geometrically possible for the causal connection between two diamonds to run both ways. Hence, it may be of interest to characterize tasks with bidirectional causal connections as well. These correspond to causal graphs that contain bidirected edges.

Note that the results of the previous sections in fact already characterize a large class of graphs with bidirected edges. If a two-system or entanglement summoning task is achievable on an oriented graph $G$, then the same type of task is achievable on \textit{any} graph $G'$ that can be obtained from $G$ by adding edges. The task on $G'$ can be completed by executing the protocol for $G$, ignoring all of the additional edges. Thus, we immediately know that any unassisted (resp.~label-assisted) two-system summoning task is achievable if its causal graph contains an oriented graph satisfying the conditions of Theorem~\ref{thm:unassistedtwosystemsummoning} (resp.~Theorem~\ref{thm:labelassistedtwosystemsummoning}) as a spanning subgraph. Likewise, unassisted or label-assisted entanglement summoning is achievable on any graph that contains an oriented graph satisfying the conditions of Theorem~\ref{thm:entanglementSummoningUnassisted} as a spanning subgraph.

Therefore, the graphs that remain to be studied are those whose oriented spanning subgraphs all correspond to unachievable tasks. These turn out to be of greater technical difficulty to characterize, as it is harder to generalize necessary conditions and protocols for small graphs to those with arbitrary numbers of vertices. Nevertheless, we present partial results for unassisted two-system and entanglement summoning.

\subsection{Two-system summoning with bidirectional causal connections}

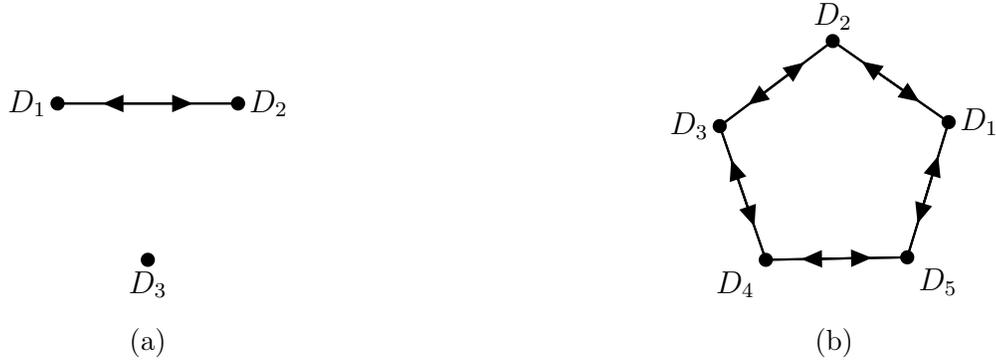
\begin{figure}
\centering
\begin{subfigure}[b]{.45\textwidth}
\centering
\begin{tikzpicture}[scale=0.4]
\begin{scope}[thick,decoration={
    markings,
    mark=at position 0.5 with {\arrow{triangle 45}}}
    ] 

\draw[fill=black] (0,0) circle (0.2cm);
\node[below] at (0,0) {$D_3$};
\draw[fill=black] (3,5.2) circle (0.2cm);
\node[right] at (3,5.2) {$D_2$};
\draw[fill=black] (-3,5.2) circle (0.2cm);
\node[left] at (-3,5.2) {$D_1$};

\draw[thick] (-3,5.2) -- (3,5.2);
\draw[thick,-triangle 45] (-3,5.2) -- (1.5,5.2);
\draw[thick,-triangle 45] (3,5.2) -- (-1.5,5.2);

\end{scope}
\end{tikzpicture}
\caption{\label{fig:bidirectionExampleOne}}
\end{subfigure}
\hfill
\begin{subfigure}[b]{.45\textwidth}
\centering
\begin{tikzpicture}[scale=1.6,rotate=19]
\begin{scope}[thick,decoration={
    markings,
    mark=at position 0.5 with {\arrow{triangle 45}}}
    ] 
  
\draw[fill=black] (1,0) circle (0.05cm);
\node[right] at (1,0) {$D_1$};

\draw[fill=black] (0.309,0.951) circle (0.05cm);
\node[above] at (0.309,0.951) {$D_2$};

\draw[fill=black] (-0.809,0.588) circle (0.05cm);
\node[left] at (-0.809,0.588) {$D_3$};

\draw[fill=black] (-0.809,-0.588) circle (0.05cm);
\node[below left] at (-0.809,-0.588) {$D_4$};

\draw[fill=black] (0.309,-0.951) circle (0.05cm);
\node[below right] at (0.309,-0.951) {$D_5$};

\begin{scope} [rotate=0]
\draw (1,0) -- (0.309,0.951);
\draw[-triangle 45] (1,0) -> (0.481,0.713);
\draw[-triangle 45] (0.309,0.951) -> (0.827,0.238);
\end{scope}

\begin{scope} [rotate=72]
\draw (1,0) -- (0.309,0.951);
\draw[-triangle 45] (1,0) -> (0.481,0.713);
\draw[-triangle 45] (0.309,0.951) -> (0.827,0.238);
\end{scope}

\begin{scope} [rotate=72*2]
\draw (1,0) -- (0.309,0.951);
\draw[-triangle 45] (1,0) -> (0.481,0.713);
\draw[-triangle 45] (0.309,0.951) -> (0.827,0.238);
\end{scope}

\begin{scope} [rotate=72*3]
\draw (1,0) -- (0.309,0.951);
\draw[-triangle 45] (1,0) -> (0.481,0.713);
\draw[-triangle 45] (0.309,0.951) -> (0.827,0.238);
\end{scope}

\begin{scope} [rotate=72*4]
\draw (1,0) -- (0.309,0.951);
\draw[-triangle 45] (1,0) -> (0.481,0.713);
\draw[-triangle 45] (0.309,0.951) -> (0.827,0.238);
\end{scope}

\end{scope}
\end{tikzpicture}
\caption{\label{fig:pentagon}}
\end{subfigure}
\caption{(a) A simple three-vertex graph with a bidirected edge, and two pairs of non-adjacent vertices. Entanglement summoning tasks with this causal graph are achievable, but two-system summoning tasks are not. (b) A five-vertex graph with bidirected edges. Two-system summoning is unachievable on this graph, as can be seen from the fact that the graph in (a) is a vertex-induced subgraph. Entanglement summoning tasks corresponding to this graph have not yet been characterized.\label{fig:bidirectedExamples}}
\end{figure}

We begin with unassisted two-system summoning. Recall from Fig.~\ref{fig:twosystemgraphs} that when we restrict to one-way causal connections, any task whose causal graph has more than one pair of non-adjacent vertices is unachievable. It is natural to ask if this remains true when we allow bidirectional causal connections. Consider the causal graph in Fig.~\ref{fig:bidirectionExampleOne}, where $D_3$ is not adjacent to $D_1$ nor $D_2$, so there are two pairs of non-adjacent vertices $D_3 \not\sim D_1$ and $D_3 \not\sim D_2$. If the causal connection between $D_1$ and $D_2$ were unidirectional, we know from Theorem~\ref{thm:unassistedtwosystemsummoning} that any corresponding unassisted two-system summoning task would be unachievable. Even with the bidirectional edge between $D_1$ and $D_3$, any task with this graph is still unachievable. This can be shown using the same type of argument as in the proofs of Lemmas~\ref{lemma:noTwoOutGraphs} and~\ref{lemma:nothreecycle}, to see that any successful protocol could be manipulated to clone one of the input subsystems. 

The unachievability of unassisted two-system summoning on the graph in Fig.~\ref{fig:bidirectionExampleOne} immediately leads to a necessary condition on tasks with an arbitrary number of diamonds. In any achievable task, each diamond can only be causally disconnected from at most one other diamond. In terms of the causal graph, each vertex $D_j$ can only be non-adjacent to one other vertex $D_{j'}$---otherwise, if $D_j \not\sim D_{j'}$ and $D_j \not\sim D_{j''}$ for $j' \neq j''$, the subgraph induced by $\{D_j,D_{j'},D_{j''}\}$ would be a subgraph of Fig.~\ref{fig:bidirectionExampleOne}. Thus, the subset of diamonds that are not connected to every other diamond can be partitioned into pairs $\{D_j, D_{j'}\}$ such that $D_j \not\sim D_{j'}$, but $D_j \sim D_k$ and $D_{j'} \sim D_k$ for all other diamonds $D_k$.

Now, consider the causal graph in Fig.~\ref{fig:doubleEdgeSquare}, in which there are two disjoint pairs of non-adjacent vertices. We show below that any unassisted two-system summoning task corresponding to this graph is achievable, which establishes that when bidirectional causal connections are allowed, tasks with more than one pair of causally disconnected diamonds can be achievable. This also serves as a counterexample to Theorem~5 of~\cite{adlam2018relativistic}, which claims that an unassisted two-system summoning task is achievable only if there is no more than one pair of disconnected diamonds.

To construct our protocol for unassisted two-system summoning on the graph in Fig.~\ref{fig:doubleEdgeSquare}, we first introduce a subroutine that we call a ``bounce.'' A ``bounce'' is executed on two diamonds $D_j$ and $D_k$ that share a bidirectional causal connection, $D_j \leftrightarrow D_k$, and involves some quantum system $X$, which starts at one of the diamonds. When $X$ starts at $D_j$, we denote the subroutine by $D_j \righttoleftarrow_X D_k$. The goal of this ``bounce'' subroutine is to bring $X$ to the return point $r_k$ of $D_k$ if $D_k$ is called, and to the return point $r_j$ of $D_j$ if $D_k$ is not called. Note that we cannot do this by sending $X$ directly to $r_k$; $r_k$ may not be in the causal past of $r_j$, so in the case where $D_k$ was not called, it may not be able to send $X$ back to $D_j$. (More generally, even if $D_j \leftrightarrow D_k$, it is possible that $D_j \cap D_k = \varnothing$, so there is no spacetime point that both has access to the call information at $D_j$ and $D_k$ \emph{and} from which $X$ can be forwarded to $D_j$ or $D_k$.)
Instead, we implement $D_j \righttoleftarrow_X D_k$ as follows.\\

\noindent\textbf{Subroutine: $D_j \righttoleftarrow_X D_k$}

\noindent \textit{\ul{Preparation:}
\begin{itemize}
    \item Prepare a Bell pair $\ket{\Phi}_{EE'}$. Send $E$ to the call point of $D_j$, and send $E'$ to the call point of $D_k$.
    \item Send $X$ to the call point of $D_j$.
\end{itemize}}

\noindent \textit{\ul{Execution:}
\begin{itemize}
    \item At $D_j$:
    \begin{itemize}
        \item At $c_j$, measure $XE$ in the Bell basis and broadcast the measurement outcome to $r_j$ and $r_k$.
        \item At $r_j$, if $E'$ is received from $D_k$, apply Pauli corrections to $E'$ based on the measurement outcome received from $c_j$.
    \end{itemize}
    \item At $D_k$:
    \begin{itemize}
        \item If a call is received at $c_k$, send $E'$ to $r_k$. At $r_k$, apply Pauli corrections to $E'$ based on the measurement outcome received from $c_j$.
        \item If no call is received at $c_k$, send $E'$ to $r_j$.
    \end{itemize}
\end{itemize}}

With the ``bounce'' subroutine defined, we can succinctly describe a protocol that completes any unassisted two-system summoning task corresponding to Fig.~\ref{fig:squareWithSystems}. At the start point $s$, encode the $A$ subsystem of the input state $\ket{\Psi}_{ABR}$ using a $(3,5)$ secret-sharing scheme, such that any three of the five shares can be used to recover $A$~\cite{cleve1999share}. Denote the five shares by $\alpha_1,\dots,\alpha_5$. Similarly, encode the $B$ subsystem into five shares $\beta_1,\dots,\beta_5$. Fig.~\ref{fig:squareWithSystems} shows where the shares should be initially sent in the preparation phase (e.g., $\alpha_1$, $\beta_1$, and $\alpha_5$ should be sent to the call point of $D_1$). Table~\ref{tab:squareprotocol} provides the instructions on which shares to then send to or ``bounce'' off of which diamonds, given each possible configuration of calls. It can be verified that in every case, the return point of one of the called diamonds receives three of the $\alpha$ shares, while the return point of the other called diamond receives three of the $\beta$ shares. Subsystems $A$ and $B$ can subsequently be recovered from these shares, and returned at the two called diamonds as required.

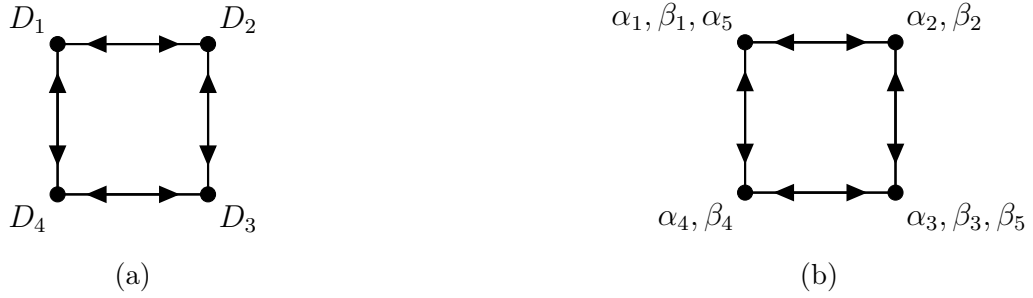
\begin{figure}
\centering
\begin{subfigure}[b]{.45\textwidth}
\centering
\begin{tikzpicture}[scale=0.5]

\draw[fill=black] (-2,-2) circle (0.2cm);
\node[below left] at (-2,-2) {$D_4$};

\draw[fill=black] (2,-2) circle (0.2cm);
\node[below right] at (2,-2) {$D_3$};

\draw[fill=black] (2,2) circle (0.2cm);
\node[above right] at (2,2) {$D_2$};

\draw[fill=black] (-2,2) circle (0.2cm);
\node[above left] at (-2,2) {$D_1$};

\draw[thick] (-2,-2) -- (2,-2);
\draw[thick,-triangle 45] (-2,-2) -- (1.25,-2);
\draw[thick,-triangle 45] (2,-2) -- (-1.25,-2);

\begin{scope}[rotate=90]
\draw[thick] (-2,-2) -- (2,-2);
\draw[thick,-triangle 45] (-2,-2) -- (1.25,-2);
\draw[thick,-triangle 45] (2,-2) -- (-1.25,-2);
\end{scope}

\begin{scope}[rotate=180]
\draw[thick] (-2,-2) -- (2,-2);
\draw[thick,-triangle 45] (-2,-2) -- (1.25,-2);
\draw[thick,-triangle 45] (2,-2) -- (-1.25,-2);
\end{scope}

\begin{scope}[rotate=270]
\draw[thick] (-2,-2) -- (2,-2);
\draw[thick,-triangle 45] (-2,-2) -- (1.25,-2);
\draw[thick,-triangle 45] (2,-2) -- (-1.25,-2);
\end{scope}

\end{tikzpicture}
\caption{\label{fig:doubleEdgeSquare}}
\end{subfigure}
\hfill
\begin{subfigure}[b]{.45\textwidth}
\centering
\begin{tikzpicture}[scale=0.5]
  
\draw[fill=black] (-2,-2) circle (0.2cm);
\node[below left] at (-2,-2) {$\alpha_4,\beta_4$};

\draw[fill=black] (2,-2) circle (0.2cm);
\node[below right] at (2,-2) {$\alpha_3,\beta_3,\beta_5$};

\draw[fill=black] (2,2) circle (0.2cm);
\node[above right] at (2,2) {$\alpha_2,\beta_2$};

\draw[fill=black] (-2,2) circle (0.2cm);
\node[above left] at (-2,2) {$\alpha_1,\beta_1,\alpha_5$};

\draw[thick] (-2,-2) -- (2,-2);
\draw[thick,-triangle 45] (-2,-2) -- (1.25,-2);
\draw[thick,-triangle 45] (2,-2) -- (-1.25,-2);

\begin{scope}[rotate=90]
\draw[thick] (-2,-2) -- (2,-2);
\draw[thick,-triangle 45] (-2,-2) -- (1.25,-2);
\draw[thick,-triangle 45] (2,-2) -- (-1.25,-2);
\end{scope}

\begin{scope}[rotate=180]
\draw[thick] (-2,-2) -- (2,-2);
\draw[thick,-triangle 45] (-2,-2) -- (1.25,-2);
\draw[thick,-triangle 45] (2,-2) -- (-1.25,-2);
\end{scope}

\begin{scope}[rotate=270]
\draw[thick] (-2,-2) -- (2,-2);
\draw[thick,-triangle 45] (-2,-2) -- (1.25,-2);
\draw[thick,-triangle 45] (2,-2) -- (-1.25,-2);
\end{scope}
\end{tikzpicture}
\caption{\label{fig:squareWithSystems}}
\end{subfigure}
\caption{(a) A four-vertex graph with two (disjoint) pairs of  non-adjacent vertices. We show that unassisted two-system summoning is achievable on this graph, contradicting~\cite{adlam2018relativistic}. (b) The initial distribution of shares used in our protocol. \label{fig:doublesquare}}
\end{figure}

\begin{table}
\centering
\begin{tabular}{|c|p{30mm}|p{30mm}|}
\hline
& if called & if not called \\
\hline
$D_1$ & $D_1 \righttoleftarrow_{\alpha_1} D_4$ \newline $D_1 \righttoleftarrow_{\beta_1} D_2$ \newline keep $\alpha_5$ & $D_1 \righttoleftarrow_{\alpha_1} D_2$ \newline $D_1 \righttoleftarrow_{\beta_1} D_4$ \newline send $\alpha_5$ to $D_2$ \\
\hline
$D_2$ & $D_2 \righttoleftarrow_{\alpha_2} D_1$ \newline $D_2 \righttoleftarrow_{\beta_2} D_3$ & $D_2 \righttoleftarrow_{\alpha_2} D_3$ \newline $D_2 \righttoleftarrow_{\beta_2} D_1$ \\
\hline
$D_3$ & $D_3 \righttoleftarrow_{\alpha_3} D_2$ \newline $D_3 \righttoleftarrow_{\beta_3} D_4$ \newline keep $\beta_5$ & $D_3 \righttoleftarrow_{\alpha_3} D_4$ \newline $D_3 \righttoleftarrow_{\beta_3} D_2$ \newline send $\beta_5$ to $D_4$ \\
\hline
$D_4$ & $D_4 \righttoleftarrow_{a_4} D_3$ \newline $D_4 \righttoleftarrow_{\beta_4} D_1$ & $D_4 \righttoleftarrow_{\alpha_4} D_1$ \newline $D_4 \righttoleftarrow_{\beta_4} D_3$ \\
\hline
\end{tabular}
\caption{Instructions for completing unassisted two-system summoning tasks corresponding to Fig.~\ref{fig:doubleEdgeSquare}. Beginning with the shares distributed as in Fig.~\ref{fig:squareWithSystems}, the table specifies what to do with each of the shares at the call point of each diamond, depending on whether that diamond receives a call.}
\label{tab:squareprotocol}
\end{table}

\subsection{Entanglement summoning with bidirectional causal connections}

As shown in \S~\ref{sec:UESprotocol}, in any achievable unassisted entanglement summoning task with no bidirectional causal connections, each diamond can only be causally disconnected from at most one other diamond. If we allow for bidirectional causal connections, however, this is no longer the case. Consider the example in Fig.~\ref{fig:bidirectionExampleOne}, in which $D_3$ is disconnected from both $D_1$ and $D_2$, and $D_1 \leftrightarrow D_2$. Any unassisted entanglement summoning task with this causal graph can be completed by pre-distributing a Bell pair between $D_1$ and $D_2$, and another Bell pair between $D_1$ and $D_3$.

Note that Fig.~\ref{fig:bidirectionExampleOne} still satisfies the necessary condition of Theorem~\ref{thm:entanglementSummoningUnassisted}. Recall that the necessity of this condition follows from the monogamy of entanglement. Even though Theorem~\ref{thm:entanglementSummoningUnassisted} was stated for oriented graphs, this necessary condition in fact applies to all causal graphs, regardless of whether there are bidirected edges. This can be seen from the fact that the proof in Theorem~\ref{thm:entanglementSummoningUnassisted} does not rely on the causal graph being an oriented graph. 

However, we do not know whether the condition of Theorem~\ref{thm:entanglementSummoningUnassisted} is still sufficient when we allow bidirectional causal connections. 
An intriguing example is provided by the five-vertex graph shown in Fig.~\ref{fig:pentagon}. This graph satisfies the condition of Theorem~\ref{thm:entanglementSummoningUnassisted}, so we have not identified any fundamental reason that the corresponding task should be unachievable. It can even be verified that unassisted entanglement summoning is achievable on any of its subgraphs induced by four vertices. Nonetheless, we have not been able to find a protocol which completes this task, so the achievability of this task remains unknown. We leave understanding this and entanglement summoning with bidirected edges more generally to future work.

\section{Discussion}\label{sec:discussion}

The quantum task framework is one way of understanding how quantum information may be moved through and processed in spacetime. In this article, we studied the achievability of two types of relativistic quantum tasks which formalize the problem of distributing two subsystems of a quantum state under timing constraints. We determined necessary and sufficient conditions for the achievability of these tasks in the case where the spacetime regions occupied by the parties do not have bidirectional causal connections. In proving our conditions, we introduced techniques that may be useful for analyzing and constructing protocols for more general quantum tasks. 

The two types of tasks we consider, two-system summoning and entanglement summoning, differ in the state to be distributed. In the former, the state is an unknown input, whereas in the latter, it is fixed to be a maximally entangled bipartite state. For each type of task, we considered the effect of various degrees of assistance provided in the form of additional information, and found changes in the conditions for achievability. The changes reflect coordination issues inherent to these summoning tasks, which certain types of additional information can help resolve. For instance, providing label assistance in two-system summoning allows Alice to avoid unintentionally summoning the two subsystems to the same diamond, while providing global assistance in entanglement summoning ensures that Alice never returns halves of different Bell pairs. 

The patterns that appear in analyzing these tasks suggest that there are theorems of greater generality yet to be proved. One such theorem~\cite{dolev2019constraining} was indeed inspired by the investigation in this paper. As another example, there is an intriguing similarity between the coordination problems encountered in single-system summoning with unrestricted calls \cite{adlam2018relativistic} and two-system summoning with unlabelled calls [cf.~\S\ref{sec:unassisted}]. In both cases, multiple identical inputs lead to tasks with cyclic causal graphs being unachievable. It would be interesting to understand if there is a general connection between indistinguishable inputs and causal graphs with topological ordering. We believe that a more precise statement could also be made about the power of entanglement in overcoming coordination problems arising from a lack of causal connections, as we have seen in a number of our protocols.  

\section*{Acknowledgements}

We thank Patrick Hayden for helpful discussions. KD is supported by the Center for Science of Information (CSoI), an NSF Science and Technology Center, under grant agreement CCF-0939370. AM acknowledges support from the It from Qubit Collaboration, which is sponsored by the Simons Foundation. AM was also supported by a CGS-D award given by the National Research Council of Canada. KW is supported by the Stanford Graduate Fellowship. This work benefited from a visit to the Stanford Institute for Theoretical Physics, supported by a Michael Smith Foreign Study Supplement given by the National Research Council of Canada. 

\appendix

\addtocontents{toc}{\fixappendix}

\section{Some graph lemmas}\label{appendix:graphlemmas}

In this appendix, we provide two technical results that we use to prove Theorems~\ref{thm:unassistedtwosystemsummoning} and~\ref{thm:labelassistedtwosystemsummoning}.
We say that a pair of vertices $D_j$ and $D_k$ in a directed graph $G = (V,E)$ are non-adjacent if $(D_j,D_k) \not\in E$ and $(D_k,D_j) \not\in E$, and write $D_j \not\sim D_k$ in this case. In the context of causal graphs, this corresponds to the causal diamonds $D_j$ and $D_k$  having no causal connection [cf.~Definition~\ref{def:causallyconnected}]. We also write $D_j \to D_k$ as shorthand for $(D_j,D_k) \in E$, and $D_j \not\rightarrow D_k$ for $(D_j, D_k) \not\in E$.

\begin{lemma} \label{lemma:onemissingedge}
Suppose that a directed graph $G = (V,E)$ satisfies the following conditions:
\begin{enumerate}
    \item $G$ is an oriented graph (i.e., $D_j \to D_k \Rightarrow D_k \not\rightarrow D_j$), and
    \item the subgraph of $G$ induced by any three vertices either has two edges pointing into one of the vertices, or is a cycle of length~$3$.
\end{enumerate}
Then, $D_j \not\sim D_k$ for at most one pair of vertices $D_j, D_k \in V$.
\end{lemma}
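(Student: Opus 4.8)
The plan is to prove the statement by contradiction: assume $G$ has two \emph{distinct} non-adjacent pairs of vertices, and derive a violation of condition (ii). I would split into two cases according to whether the two pairs share a vertex.

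First, the case where the two pairs overlap: say $D_a \not\sim D_b$ and $D_a \not\sim D_c$ with $D_b \neq D_c$. Consider the subgraph induced by $\{D_a, D_b, D_c\}$. Two of its three potential edges---those incident to $D_a$---are absent, so at most one edge is present. Hence no vertex of this triangle has in-degree $2$, and the triangle is certainly not a directed $3$-cycle, contradicting condition (ii). This case is immediate.

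Second, the case where the two pairs are disjoint: $D_a \not\sim D_b$ and $D_c \not\sim D_d$ with $D_a, D_b, D_c, D_d$ all distinct. Here I would first extract orientation constraints. Applying condition (ii) to $\{D_a, D_b, D_c\}$: since $D_a D_b$ is not an edge, this triangle cannot be a $3$-cycle, so it must have two edges pointing into a single vertex; the only candidate is $D_c$, because the other two vertices can each receive at most the single edge from $D_c$. This forces $D_a \to D_c$ and $D_b \to D_c$. The same argument applied to $\{D_a, D_b, D_d\}$ gives $D_a \to D_d$ and $D_b \to D_d$. Now, because $G$ is an oriented graph (condition (i)), $D_a \to D_c$ and $D_a \to D_d$ imply $D_c \not\rightarrow D_a$ and $D_d \not\rightarrow D_a$. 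Therefore, in the triangle $\{D_a, D_c, D_d\}$---whose edge $D_c D_d$ is also absent---the vertex $D_a$ has in-degree $0$, while $D_c$ and $D_d$ each have in-degree at most $1$; so this triangle again violates condition (ii), a contradiction.

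I expect the disjoint case to be the only place needing a little care, specifically the step where condition (i) is invoked to conclude that $D_a$ receives no edge within $\{D_a, D_c, D_d\}$; without the oriented-graph hypothesis that deduction fails, which is why the hypothesis is present. Everything else is a routine in-degree count inside $3$-vertex induced subgraphs.
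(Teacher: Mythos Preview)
Your proposal is correct and follows essentially the same approach as the paper: proof by contradiction, splitting into the overlapping and disjoint cases, and in the disjoint case applying condition~(ii) to well-chosen triples to force orientations that ultimately clash with condition~(i). The only cosmetic difference is that in the disjoint case the paper examines the triples $\{D_a,D_b,D_c\}$ and $\{D_a,D_c,D_d\}$ to obtain both $D_a\to D_c$ and $D_c\to D_a$ (a direct violation of~(i)), whereas you examine $\{D_a,D_b,D_c\}$ and $\{D_a,D_b,D_d\}$ first and then find a condition-(ii) violation in $\{D_a,D_c,D_d\}$; both routes are equally valid and of the same length.
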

\begin{proof}
Suppose that there are two or more pairs of non-adjacent vertices. Condition~(ii) is clearly violated if any two pairs share a vertex, i.e., if $D_1 \not\sim D_2$ and $D_2\not\sim D_3$ for some vertices $D_1,D_2,D_3$. Hence, it must be the case that $D_1 \not\sim D_2$ and $D_3 \not\sim D_4$ for distinct vertices $D_1,D_2,D_3,D_4$. Applying Condition~(ii) to the subgraph induced by $\{D_1, D_2, D_3\}$, we see that we must have $D_1 \to D_3$ and $D_2 \to D_3$. Similarly, applying Condition~(ii) to the subgraph induced by $\{D_1, D_3, D_4\}$, we see that $D_3 \to D_1$ and $D_4 \to D_1$. But $D_1 \to D_3, D_3 \to D_1$ violates Condition~(i). Therefore, $G$ has at most one pair of non-adjacent vertices.
\end{proof}

If we strengthen Condition~(ii) of Lemma~\ref{lemma:onemissingedge} by disallowing cycles of length $3$, the graph is constrained to have one of the two forms in Fig.~\ref{fig:twosystemgraphs}, as shown below.

\begin{lemma}\label{lemma:unassistedgraphs}
Suppose that a directed graph $G = (V,E)$ satisfies the following conditions:
\begin{enumerate}
    \item $G$ is an oriented graph (i.e., $D_j \to D_k \Rightarrow D_k \not\rightarrow D_j$), and
    \item in the subgraph of $G$ induced by any three vertices, there are two edges pointing into one of the vertices.
\end{enumerate}
Then, $G$ is either (a) a transitive tournament, or (b) obtained from a transitive tournament by removing the edge between the first two vertices in its topological ordering [cf.~Fig.~\ref{fig:twosystemgraphs}].
\end{lemma}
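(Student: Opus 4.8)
The plan is to reduce to two cases using Lemma~\ref{lemma:onemissingedge}. First observe that Condition~(ii) here is strictly stronger than Condition~(ii) of Lemma~\ref{lemma:onemissingedge} (``two edges pointing into one of the vertices'' is one of the two alternatives allowed there), so that lemma applies and $G$ has at most one pair of non-adjacent vertices. Case~1: $G$ is a tournament. I would first note that $G$ has no directed $3$-cycle, since in a $3$-cycle every vertex has in-degree exactly~$1$, contradicting Condition~(ii). A tournament with no directed $3$-cycle is transitive: if $D_i \to D_j$ and $D_j \to D_k$, then $D_k \to D_i$ would create a $3$-cycle, so totality forces $D_i \to D_k$; hence the edge relation is a strict total order and $G$ is a transitive tournament, giving case~(a).

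Case~2: $G$ has exactly one non-adjacent pair, say $D_0 \not\sim D_0'$. For any third vertex $w$, apply Condition~(ii) to the triple $\{D_0, D_0', w\}$: since $D_0$ and $D_0'$ are non-adjacent, each of them has in-degree at most~$1$ within this triple, so $w$ must be the vertex with in-degree~$2$, i.e.\ $D_0 \to w$ and $D_0' \to w$. Thus $D_0$ and $D_0'$ each dominate every other vertex. The set $W = V \setminus \{D_0, D_0'\}$ contains no non-adjacent pair (the unique one in $G$ involves $D_0$ and $D_0'$), so $G[W]$ is a tournament, and by the same no-$3$-cycle argument as in Case~1 it is transitive; list its vertices in topological order as $w_1 \to w_2 \to \cdots \to w_{n-2}$. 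Then the ordering $(D_0, D_0', w_1, \ldots, w_{n-2})$ exhibits $G$ as a transitive tournament on these vertices with the single edge between the first two (namely between $D_0$ and $D_0'$) deleted, which is case~(b). The degenerate cases $n \le 2$ are consistent with the stated conclusion and can be dispatched directly.

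I do not expect a serious obstacle here; the only point requiring care is the exhaustive bookkeeping for $3$-vertex oriented graphs that underlies the ``no $3$-cycle'' reduction, and, relatedly, the correct use of Condition~(ii) in the triple $\{D_0, D_0', w\}$ where two of the three vertices are already non-adjacent --- one must be sure that the forced in-degree-$2$ vertex is $w$ and that this pins down both edges $D_0 \to w$ and $D_0' \to w$.
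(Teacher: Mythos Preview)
Your proposal is correct and follows essentially the same approach as the paper: invoke Lemma~\ref{lemma:onemissingedge} to reduce to at most one non-adjacent pair, observe that Condition~(ii) rules out $3$-cycles so the tournament (or the tournament on $V\setminus\{D_0,D_0'\}$) is transitive, and in the one-missing-edge case apply Condition~(ii) to the triple $\{D_0,D_0',w\}$ to force $D_0\to w$ and $D_0'\to w$. Your write-up is slightly more explicit than the paper's in justifying why a tournament without $3$-cycles is transitive and why $w$ must be the in-degree-$2$ vertex, but the structure is identical.
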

\begin{proof}
By Lemma \ref{lemma:onemissingedge}, there is at most one pair of non-adjacent vertices in $G$. Note that Condition~(ii) further implies that $G$ cannot contain any cycles of length~3. Thus, in the case where $G$ has no pairs of non-adjacent vertices, it is a tournament with no cycles, i.e., a transitive tournament. Likew, in the case where $G$ has one pair of non-adjacent vertices, say $D_1 \not\sim D_2$, the subgraph induced by $V \setminus \{D_1,D_2\}$ must also be a transitive tournament. Moreover, applying Condition~(ii) to the subgraph induced by $\{D_1, D_2, D_j\}$ implies that $D_1 \to D_j$ and $D_2 \to D_j$ for every $D_j \in V \setminus \{D_1,D_2\}$, so $G$ is the same as a transitive tournament in which $D_1$ and $D_2$ are the first two vertices in its topological ordering, except with the edge between $D_1$ and $D_2$ removed.
\end{proof}

\section{Protocols for assisted two-system summoning}

In this appendix, we describe protocols for all label-assisted two-system summoning tasks that satisfy the conditions of Theorem~\ref{thm:labelassistedtwosystemsummoning}. We also construct a global-assisted two-system summoning protocol for the causal graph in Fig.~\ref{fig:twomissingpossible}. The protocols are described for the case where the input subsystems $A$ and $B$ are qubit systems; the generalization to larger systems is trivial.

\subsection{Protocol for label-assisted two-system summoning}\label{appendix:labelassistedtwosystem}

Consider a label-assisted two-system summoning task satisfying the conditions of Theorem~\ref{thm:labelassistedtwosystemsummoning}. Lemma~\ref{lemma:onemissingedge} implies that the causal graph corresponding to any such task contains either no non-adjacent vertices or one pair of non-adjacent vertices. In this section, we present protocols for both cases, thereby proving that the conditions of Theorem~\ref{thm:labelassistedtwosystemsummoning} are sufficient for the task to be achievable.

First suppose that there are no non-adjacent vertices in the causal graph, i.e., $D_j \sim D_k$ for every pair of diamonds $D_j$ and $D_k$ [cf.~Fig.~\ref{fig:tournament}]. In this case, the task can be reduced to two instances of single-system summoning, since every pair of diamonds is causally connected (satisfying the conditons of Theorem~\ref{thm:single}) and the calls are distinguishable. It can be completed by applying the single-system summoning protocol twice in parallel, summoning the $A$ subsystem to the call labelled $1$ and the $B$ subsystem to the call labelled $2$ (or vice versa).

In the second case where there is a pair of non-adjacent vertices in the causal graph, this strategy is no longer viable. We cannot use single-system summoning on the full set of diamonds, as one pair of diamonds is not causally connected (in violation of Theorem~\ref{thm:single}). Instead, we provide the following protocol (Protocol~\ref{LBSprotocol}). Here, the two causally disconnected diamonds are labelled $D_0$ and $D_0'$ (so $D_0 \not\sim D_0'$), and the remaining diamonds are labelled $D_1, \dots, D_{n-2}$. Note that by Condition~(ii) of Theorem~\ref{thm:labelassistedtwosystemsummoning}, we must have $D_0 \to D_j$ and $D_0' \to D_j$ for every $j \in \{1,\dots, n-2\}$ [cf.~Fig.~\ref{fig:rivalGraphb}]. 

\begin{protocol} \label{LBSprotocol}~\\
\noindent
\ul{Preparation:}
\begin{itemize}
    \item Prepare Bell pairs $\ket{\Phi}_{EE'}$ and $\ket{\Phi}_{FF'}$. Send $E$ and $F$ to the call point of $D_0$, and send $E'$ and $F'$ to the call point of $D_0'$.
    \item Send subsystem $A$ (from the start point $s$) to the call point of $D_0$, and subsystem $B$ to the call point of $D_0'$.
    \item At the call point of $D_0$, prepare ancillary systems $G_1$ and $G_2$. At the call point of $D_1$, prepare ancillary systems $G_1'$ and $G_2'$.
    \item Prepare the resources for performing two single-system summoning tasks on $\{D_1,\dots, D_{n-2}\}$ with the call point of $D_0$ as the start point. One task responds only to a call labelled $1$, and the other responds only to a call labelled $2$. 
    \item Prepare the resources for performing two single-system summoning tasks on $\{D_1,\dots, D_{n-2}\}$ with the call point of $D_0'$ as the start point. One task responds only to a call labelled $1$, and the other responds only to a call labelled $2$. 
\end{itemize}
\ul{Execution:}
\begin{itemize}
    \item At $D_0$:
    \begin{itemize}
        \item At the call point $c_0$, broadcast the call information $b_0$ to every $r_j$ with $j \geq 1$.
        \item If a call is received at $c_0$, i.e., if $b_0 \in \{1,2\}$, return $A$ at the return point $r_0$. Additionally:
        \begin{itemize}
            \item If $b_0 = 1$, swap $F$ onto $G_2$.
            \item If $b_0 = 2$, swap $F$ onto $G_1$.
        \end{itemize} 
        \item If no call is received at $c_0$, i.e., if $b_0 = 0$, measure $AE$ in the Bell basis and broadcast the measurement outcome to every $r_j$ with $j \geq 1$. Swap $F$ onto $G_1$.
    \end{itemize}
    \item At $D_0'$:
    \begin{itemize}
        \item At the call point $c_0'$, broadcast the call information $b_0'$ to every $r_j$ with $j \geq 1$.
        \item If a call is received at $c_0'$, i.e., if $b_0' \in \{1,2\}$, return $B$ at the return point $r_0'$. Additionally:
        \begin{itemize}
            \item If $b_0' = 1$, swap $E'$ onto $G_2'$.
            \item If $b_0' = 2$, swap $E'$ onto $G_1'$.
        \end{itemize} 
        \item If no call is received at $c_0'$, i.e., if $b_0' = 0$, measure $BF'$ in the Bell basis and broadcast the measurement outcome to every $r_j$ with $j \geq 1$. Swap $E'$ onto $G_2'$.
    \end{itemize}
    \item Summon $G_1$ and $G_1'$ (from the call points of $D_0$ and $D_0'$, respectively) through $\{D_1,\dots,D_{n-2}\}$ to the call labelled $1$. Summon $G_2$ and $G_2'$ through $\{D_1,\dots,D_{n-2}\}$ to the call labelled $2$.
    \item At the return point $r_j$ of $D_j$ for each $j \in \{1,\dots, n-2\}$:
    \begin{itemize}
        \item If a call was received at $c_j$, i.e., if $b_j = \ell$ for $\ell \in \{1,2\}$, check the call information from $D_0$ and $D_0'$.
        \begin{itemize}
            \item If $D_0$ was called, return\footnote{See footnote~\ref{Paulicorrectionfootnote}.} $G_\ell$.
            \item If $D_0'$ was called, return $G_\ell'$.
            \item If neither $D_0$ nor $D_0'$ were called, return $G_1$ if $\ell=1$ or $G_2'$ if $\ell=2$
        \end{itemize}
    \end{itemize}        
\end{itemize}
\end{protocol}

To see that this protocol works, consider what happens in each case. If $D_0$ and $D_0'$ receive the two calls, they return $A$ and $B$, respectively, completing the task. If neither $D_0$ nor $D_0'$ are called, subsystem $A$ (modulo Pauli corrections) is teleported to $c_0$ and swapped onto $G_1$, which is subsequently summoned to the call labelled $1$, while subsystem $B$ is teleported to $c_0'$ and swapped onto $G_2'$, which is subsequently summoned to the call labelled $2$. The two called diamonds see (at their return points) that $D_0$ and $D_0'$ were not called, and therefore return $G_1$ and $G_2'$ (after applying the Pauli corrections, which they also receive from the call points of $D_0$ and $D_0'$), as required. If $D_0$ is called but $D_0'$ is not, $D_0$ returns $A$, while $B$ (modulo Pauli corrections) is moved to $G_1$ or $G_2$ depending on the label of the call received at $D_0$. Specifically, if $D_0$ received $b_0 = 1$, it swaps $B$ onto $G_2$ to ensure that it is summoned to the diamond $D_j$ that received the \emph{other} label $b_j = 2$ (and vice versa). $D_j$ sees that $D_0$ was called, and returns $G_2$ (after applying the Pauli corrections received from $D_0'$). The protocol likew succeeds in the analogous case where $D_0'$ is called and $D_0$ is not. 

\subsection{Example protocol for global-assisted summoning}\label{appendix:globalassistedexample}

In this section, we give a protocol that completes any global-assisted two-system summoning task corresponding to the four-vertex causal graph in Fig.~\ref{fig:twomissingpossible}. Recall that in the global-assisted setting, if a diamond $D_i$ is called, it receives a tuple $(j^*,k^*)$ containing the indices of both called diamonds (with $i \in \{j^*,k^*\}$). Hence, in the case where $D_i$ is called, it also knows the identity of the other called diamond.

\begin{protocol}~\\
\noindent
\ul{Preparation:}
\begin{itemize} 
    \item Prepare two Bell pairs $\ket{\Phi}_{EE'}$ and $\ket{\Phi}_{FF'}$. Send $E$ and $F$ to the call point of $D_2$, and send $E'$ and $F'$ to the call point of $D_4$.
    \item Send subsystem $A$ (from the start point $s$) to the call point of $D_2$, and subsystem $B$ to the call point of $D_4$.
\end{itemize}
\ul{Execution:}
\begin{itemize}
    \item At $D_2$:
    \begin{itemize}
        \item If a call is received at $c_2$, return $A$ at $r_2$. Also, if the other call is to $D_\ell$ with $\ell \in \{1,3\}$, send $E$ to (the return point of) $D_\ell$. 
        \item If no call is received at $c_2$, send $E$ to $D_1$. Measure $AF$ in the Bell basis and broadcast the measurement outcome to $D_1$ and $D_3$.
    \end{itemize}   
    \item At $D_4$,
    \begin{itemize}
        \item If a call is received at $c_4$, return $B$ at $r_4$. Also, if the other call is to $D_\ell$ with $\ell \in \{1,3\}$, send $F'$ to $D_\ell$. 
        \item If no call is received at $c_4$, send $F'$ to $D_3$. Measure $BE'$ in the Bell basis and broadcast the measurement outcome to $D_1$ and $D_3$.
    \end{itemize}  
    \item At $D_1$:
    \begin{itemize}
        \item If a call is received at $c_1$:
        \begin{itemize}
            \item If the other call is to $D_2$ or $D_3$, return\footnote{See footnote~\ref{Paulicorrectionfootnote}.} $E$ at $r_1$.
            \item If the other call is to $D_4$, return $F'$ at $r_1$.
        \end{itemize}
    \end{itemize}
    \item At $D_3$, 
    \begin{itemize}
        \item If a call is received at $c_3$:
        \begin{itemize}
            \item If the other call is to $D_1$ or $D_4$, return $F'$ at $r_3$. 
            \item If the other call is to $D_2$, return $E$ at $r_3$.
        \end{itemize}
    \end{itemize}
\end{itemize}
\end{protocol}
One can check case by case that this protocol successfully completes the task. To understand the basic idea, observe that the subsystems $A$ and $B$ start at $D_2$ and $D_4$. Hence, if $D_2$ is called, it returns $A$, but if $D_2$ is not called, it does not know where to send $A$, since only called diamonds receive the tuple $(j^*,k^*)$. In this case, the strategy is to teleport (without the Pauli corrections) $A$ onto $F'$ at $D_4$. Then, if $D_4$ is called, it knows which of $D_1$ and $D_3$ is called, so it knows where to forward $F'$. Since $D_2 \to D_1, D_3$, both $D_1$ and $D_3$ receive the measurement outcomes from the teleportation, and can therefore apply the appropriate Pauli corrections to $F'$. On the other hand, if $D_4$ is not called, it does not know which of $D_1$ or $D_3$ is called, but always forwards $F'$ to $D_3$. Analogously, $D_2$ always forwards $E$ to $D_1$ when $D_2$ is not called. This ensures that when neither $D_2$ nor $D_4$ are called, $D_1$ and $D_3$ both get shares of the state. 

\section{Localizing bipartite states across spacetime regions}\label{sec:localizingentanglement}

In~\cite{hayden2016summoning,hayden2019localizing}, it was argued that single-system summoning has a simple physical interpretation: a quantum system $A$ can be localized to a set of causal diamonds if and only if the single-system summoning task specified by that set is achievable. $A$ is localized to a set of diamonds if without any prior knowledge of $A$, it is possible to collect quantum and classical systems from any one diamond in the set and use them to recover $A$. Theorem~\ref{thm:single} then becomes a statement about when a quantum system can be localized to every member of a set of spacetime regions, at least when those regions are causal diamonds. This was generalized to regions of arbitrary shape in~\cite{hayden2019localizing}.

We can ask whether there is a summoning task on a set of diamonds whose achievability implies that two subsystems $A$ and $B$ of a state $\ket{\Psi}_{ABR}$ can be localized across every pair of diamonds in the set. There are two ways one might define localizing $A$ and $B$ across a pair of diamonds $D_j$ and $D_k$: either we require $A$ and $B$ be recoverable from $D_j\cup D_k$, or we require that $A$ be recoverable from $D_j$ and $B$ from $D_k$, or vice versa. The first definition is the same as localizing the system $AB$ to the region $D_j\cup D_k$, and this is already characterized by~\cite{hayden2019localizing}. Hence, we use the second definition here. It is easy to characterize the sets of diamonds for which localizing two systems across every pair of diamonds is possible.

Suppose that it is possible to localize $A$ and $B$ across every pair of diamonds in $\mathcal{D}$. Let $\mathcal{A}$ (resp.~$\mathcal{B}$) be the subset of diamonds to which $A$ (resp.~$B$) can be localized. By the result of~\cite{hayden2016spacetime}, every pair of diamonds in $\mathcal{A}$ must be causally connected, and likewise for $\mathcal{B}$. Additionally, $\mathcal{A} \setminus \mathcal{B}$ can consist of at most one diamond. Otherwise, if two diamonds $D_j$ and $D_k$ are in $\mathcal{A} \setminus \mathcal{B}$, the only subsystem that can be recovered at $D_j$ and $D_k$ is $B$, so it would not be possible to localize $A$ and $B$ to every pair. Similarly, $\mathcal{B} \setminus \mathcal{A}$ consists of at most one diamond. It follows that there is at most one pair of diamonds in $\mathcal{D}$ that is causally disconnected. 

We can therefore localize $A$ and $B$ across every pair of diamonds by performing two single-summoning tasks.
Specifically, perform a single-system summoning task on $\mathcal{A}$ with input $A$, and a single-system summoning task on diamonds $\mathcal{B}$ with input $B$. This does not correspond to any of the two-system summoning tasks defined in Section~\ref{sec:twosystems}. 

\section*{References}

\bibliographystyle{unsrt}
\bibliography{biblio}

\begin{thebibliography}{10}

\bibitem{kent2012quantum}
Adrian Kent.
\newblock Quantum tasks in minkowski space.
\newblock {\em Classical and Quantum Gravity}, 29(22):224013, 2012.

\bibitem{kent2013no}
Adrian Kent.
\newblock A no-summoning theorem in relativistic quantum theory.
\newblock {\em Quantum information processing}, 12(2):1023--1032, 2013.

\bibitem{hayden2016summoning}
Patrick Hayden and Alex May.
\newblock Summoning information in spacetime, or where and when can a qubit be?
\newblock {\em Journal of Physics A: Mathematical and Theoretical},
  49(17):175304, 2016.

\bibitem{kent2011quantum}
Adrian Kent, William~J Munro, and Timothy~P Spiller.
\newblock Quantum tagging: Authenticating location via quantum information and
  relativistic signaling constraints.
\newblock {\em Physical Review A}, 84(1):012326, 2011.

\bibitem{buhrman2014position}
Harry Buhrman, Nishanth Chandran, Serge Fehr, Ran Gelles, Vipul Goyal, Rafail
  Ostrovsky, and Christian Schaffner.
\newblock Position-based quantum cryptography: Impossibility and constructions.
\newblock {\em SIAM Journal on Computing}, 43(1):150--178, 2014.

\bibitem{beigi2011simplified}
Salman Beigi and Robert K{\"o}nig.
\newblock Simplified instantaneous non-local quantum computation with
  applications to position-based cryptography.
\newblock {\em New Journal of Physics}, 13(9):093036, 2011.

\bibitem{tomamichel2013monogamy}
Marco Tomamichel, Serge Fehr, J{\k{e}}drzej Kaniewski, and Stephanie Wehner.
\newblock A monogamy-of-entanglement game with applications to
  device-independent quantum cryptography.
\newblock {\em New Journal of Physics}, 15(10):103002, 2013.

\bibitem{dolev2019constraining}
Kfir Dolev.
\newblock Constraining the doability of relativistic quantum tasks.
\newblock {\em arXiv preprint arXiv:1909.05403}, 2019.

\bibitem{van2014quantum}
Rodney Van~Meter.
\newblock {\em Quantum networking}.
\newblock John Wiley \& Sons, 2014.

\bibitem{kent2012unconditionally}
Adrian Kent.
\newblock Unconditionally secure bit commitment by transmitting measurement
  outcomes.
\newblock {\em Physical review letters}, 109(13):130501, 2012.

\bibitem{kent2011unconditionally}
Adrian Kent.
\newblock Unconditionally secure bit commitment with flying qudits.
\newblock {\em New Journal of Physics}, 13(11):113015, 2011.

\bibitem{kent1999coin}
Adrian Kent.
\newblock Coin tossing is strictly weaker than bit commitment.
\newblock {\em Physical Review Letters}, 83(25):5382, 1999.

\bibitem{pitalua2016spacetime}
Dami{\'a}n Pital{\'u}a-Garc{\'\i}a.
\newblock Spacetime-constrained oblivious transfer.
\newblock {\em Physical Review A}, 93(6):062346, 2016.

\bibitem{PhysRevA.100.012302}
Dami\'an Pital\'ua-Garc\'{\i}a.
\newblock One-out-of-$m$ spacetime-constrained oblivious transfer.
\newblock {\em Phys. Rev. A}, 100:012302, Jul 2019.

\bibitem{kent2011location}
Adrian Kent.
\newblock Location-oblivious data transfer with flying entangled qudits.
\newblock {\em Physical Review A}, 84(1):012328, 2011.

\bibitem{may2019quantum}
Alex May.
\newblock Quantum tasks in holography.
\newblock {\em Journal of High Energy Physics}, 2019(10):233, 2019.

\bibitem{may2019holographic}
Alex May, Geoff Penington, and Jonathan Sorce.
\newblock Holographic scattering requires a connected entanglement wedge.
\newblock {\em arXiv preprint arXiv:1912.05649}, 2019.

\bibitem{adlam2018relativistic}
Emily~Christine Adlam.
\newblock {\em Relativistic quantum tasks}.
\newblock PhD thesis, University of Cambridge, 2018.

\bibitem{hayden2019localizing}
Patrick Hayden and Alex May.
\newblock Localizing and excluding quantum information; or, how to share a
  quantum secret in spacetime.
\newblock {\em Quantum}, 3:196, 2019.

\bibitem{hayden2016spacetime}
Patrick Hayden, Sepehr Nezami, Grant Salton, and Barry~C Sanders.
\newblock Spacetime replication of continuous variable quantum information.
\newblock {\em New Journal of Physics}, 18(8):083043, 2016.

\bibitem{wu2018efficient}
Ya-Dong Wu, Abdullah Khalid, and Barry~C Sanders.
\newblock Efficient code for relativistic quantum summoning.
\newblock {\em New Journal of Physics}, 20(6):063052, 2018.

\bibitem{adlam2016quantum}
Emily Adlam and Adrian Kent.
\newblock Quantum paradox of choice: More freedom makes summoning a quantum
  state harder.
\newblock {\em Physical Review A}, 93(6):062327, 2016.

\bibitem{werner1990remarks}
Reinhard~F Werner.
\newblock Remarks on a quantum state extension problem.
\newblock {\em letters in mathematical physics}, 19(4):319--326, 1990.

\bibitem{ranade2009symmetric}
Kedar~S Ranade.
\newblock Symmetric extendibility for a class of qudit states.
\newblock {\em Journal of Physics A: Mathematical and Theoretical},
  42(42):425302, 2009.

\bibitem{myhr2009spectrum}
Geir~Ove Myhr and Norbert L{\"u}tkenhaus.
\newblock Spectrum conditions for symmetric extendible states.
\newblock {\em Physical Review A}, 79(6):062307, 2009.

\bibitem{cleve1999share}
Richard Cleve, Daniel Gottesman, and Hoi-Kwong Lo.
\newblock How to share a quantum secret.
\newblock {\em Physical Review Letters}, 83(3):648, 1999.

\end{thebibliography}

\end{document}